\providecommand{\algorithmname}{Algorithm}
\newcommand{\manuallabel}[2]{\def\@currentlabel{#2}\label{#1}}
\pgfplotsset{compat=1.14}
\newtheorem{theorem}{Theorem}
\newtheorem{lemma}[theorem]{Lemma}
\newtheorem{dfn}[theorem]{Definition}
\newtheorem{cor}[theorem]{Corollary}
\newcommand{\edag}{\mathcal{E}^{\dagger}}
\newcommand{\p}{\mathbb{P}}
\newcommand{\e}{\mathbb{E}}
\newcommand{\erate}{E}
\newcommand{\calO}{\mathcal{O}}
\newcommand{\calX}{\mathcal{X}}
\newcommand{\calY}{\mathcal{Y}}
\newcommand{\calC}{\mathcal{C}}
\newcommand{\calP}{\mathcal{P}}
\newcommand{\rate}{\mathcal{E}}
\newcommand{\rateone}{\mathcal{E}^{(1)}}
\newcommand{\db}{d_{\rm B}}
\newcommand{\dc}{d_{\rm C}}
\newcommand{\feedbackrate}{\rate^f}
\newcommand{\pe}{p_{\rm err}}
\title{Maxflow-Based Bounds for Low-Rate \\ Information Propagation over Noisy Networks}
\author{Yan Hao Ling and Jonathan Scarlett}
\begin{document}

\maketitle

\begin{abstract}
    We study error exponents for the problem of low-rate communication over a directed graph, where each edge in the graph represents a noisy communication channel, and there is a single source and destination.  We derive maxflow-based achievability and converse bounds on the error exponent that match when there are two messages and all channels satisfy a symmetry condition called pairwise reversibility.  More generally, we show that the upper and lower bounds match to within a factor of 4.  We also show that with three messages there are cases where the maxflow-based error exponent is strictly suboptimal, thus showing that our tightness result cannot be extended beyond two messages without further assumptions.
\end{abstract}

\long\def\symbolfootnote[#1]#2{\begingroup\def\thefootnote{\fnsymbol{footnote}}\footnote[#1]{#2}\endgroup}

\symbolfootnote[0]{The authors are with the  Department of Computer Science and Department of Mathematics, School of Computing, National University of Singapore (NUS). Jonathan Scarlett is also with the Institute of Data  Science, NUS. Emails: \url{lingyh@nus.edu.sg};  \url{scarlett@comp.nus.edu.sg}

This work was supported by the Singapore National Research Foundation (NRF) under grant number R-252-000-A74-281.}

\section{Introduction}

Problems of low-rate (e.g., 1-bit) information propagation in noisy networks have recently gained increasing attention, with prominent examples including the following:
\begin{itemize}
    \item The transmission of a single bit over a tandem of channels (the 2-hop setting) was studied in \cite{onebit,jog2020teaching,teachlearn}, and the multi-bit variant was studied in \cite{multibit}.  The goal is to characterize the associated error exponent; as we further discuss below, this line of works is the one most relevant to the present paper.
    \item The transmission of a single bit over a long chain of noisy channels (connected by relays) dates back to the work of Schulman and Rajagopalan \cite{schulman_1994} and recently regained  attention \cite{onebit,infovelocity,domanovitz2022information} after being posed by Polyanskiy (who termed it the \emph{information velocity} problem) and connected to the 2-hop setting by Huleihel, Polyanskiy, and Shayevitz \cite{onebit}.
    \item Another line of works studied \emph{broadcasting} problems on various kinds of graphs, such as trees \cite{evans2000broadcasting}, random graphs \cite{makur2020random}, and grids \cite{makur2022grid}.  Here the goal is to reconstruct the original bit from the information received at the leaves, and the focus has been on simple intermediate nodes performing memoryless operations.  (In contrast, the problems described above allow the intermediate/relay nodes to perform complicated coding strategies.)
\end{itemize}
In this paper, we address a notable gap in this list by extending the 2-hop problem to transmission over a fixed but arbitrary directed graph, with a single source and a single destination.\footnote{General graphs also fall under the framework of \cite{schulman_1994}, even without the constraint that its size is fixed with respect to the block length, but the focus therein is only on scaling laws and not precise constants or error exponents.}  An extensive range of channel capacity results are known for such settings \cite[Ch.~15]{gamalkim}, but 1-bit and other low-rate settings have remained less explored.

Specifically, we generalize the 2-hop multi-bit setup to arbitrary networks and show a maxflow-based achievability bound (Theorem \ref{thm:main}), which supersedes our main previous multi-bit result for the 2-hop setting \cite{multibit}.  
We also provide a converse bound (Theorem \ref{thm:main_converse}) based on the cutset idea, which matches the achievability bound in the 1-bit case when every channel satisfies a symmetry condition called pairwise reversibility (see Definition \ref{dfn:pairwise_reversible} below).  While we are unable to find the optimal error exponent more generally, our achievability and converse bounds provide a 4-approximation (Lemma \ref{lem:4approx}).

We proceed by outlining the related work in more detail, and then formally describe the problem setup and state our main results.

\subsection{Related Work}

{\bf Tandem of channels (2-hop).} The problem of transmitting a \emph{single bit} over a tandem of channels was introduced by Huleihel, Polyanskiy, and Shayevitz \cite{onebit}, as well as Jog and Loh  using different motivation/terminology based on teaching and learning in multi-agent problems \cite{jog2020teaching}.  Both of these works gave various protocols with varying trade-offs between their simplicity and their associated error exponents.  We then showed in \cite{teachlearn} that the 1-hop and 2-hop error exponents coincide whenever the two channels have the same transition law.  In \cite{multibit}, we generalized this problem to the multi-bit setting and showed that whenever both channels are pairwise reversible, the 2-hop error exponent is simply given by the bottleneck (i.e., the worse of the two 1-hop error exponents).

{\bf Information velocity (many-hop).} As we outlined above, the information velocity problem seeks to establish the minimum possible ratio of the number of hops to the block length to reliably send a single bit over a long chain of relays connected by binary symmetric channels (or more generally, other channels).  This problem also has interesting connections to the 2-hop setting \cite{onebit}.

A general result of Schulman and Rajagopalan \cite{schulman_1994} translates noiseless protocols (on arbitrary graphs) to noisy ones, and when specialized to the line graph, this proves that positive information velocity is attainable.  We gave a simple and efficient protocol achieving positive information velocity in \cite{infovelocity}, though precisely characterizing the best possible information velocity remains very much open.  The information velocity of erasure-type channels was studied in \cite{domanovitz2022information}, and for this class of channels an exact characterization was attained.

{\bf Broadcasting problems.} As mentioned above, problems of broadcasting \cite{evans2000broadcasting,makur2020random,makur2022grid} have a similar flavor to the information velocity problem, but with some substantial differences including (i) the consideration of more complicated graphs, (ii) the combination of many nodes' information to form the final estimate, and (iii) the consideration of simple memoryless intermediate nodes without the use of coding.

{\bf Capacity results on graphs.} Another notable line of works has sought to characterize the channel capacity of noisy networks, as summarized in \cite[Sec.~15.10]{cover_thomas} and \cite[Ch.~15]{gamalkim}.  The study of capacity is largely quite distinct from that of 1-bit (or other low-rate) communication, but we will see some similarities throughout the paper.  In particular, our results are based on maxflow and mincut ideas, which are widely used in capacity results.  Instead of using 1-hop channel capacities as edge weights for the maxflow problem, we will use 1-hop error exponents.  Our protocols and analysis will be substantially different compared to capacity results and will require further assumptions such as pairwise reversibility, but we will use an existing idea of replacing individual edges by multiple parallel edges of smaller weight in order to form edge-disjoint paths (e.g., see \cite[Sec.~15.3]{gamalkim}).

\section{Problem Setup}

We first describe the classical channel coding setup on a discrete memoryless channel (DMC) $P$, whose alphabets are denoted by $\calX$ and $\calY$ (or sometimes $\calX_P$ and $\calY_P$). The source node is given a message $m \in \{1,2,\ldots, M\}$, and has $n$ uses of the channel $P$ to send information across to the destination node, after which the destination node guesses the value of $m$. An error is said to have occurred if the destination's guess is different from $m$. The minimum probability of an error across all protocols with $n$ time steps is denoted by $\pe(n,M,P)$, and the error exponent  of the channel with $M$ messages is defined as
\begin{equation}
\rate_{M,P} = \liminf_{n\rightarrow \infty} -\frac{1}{n} \pe(n,M,P),
\end{equation}
where the limit is as  $n\rightarrow\infty$ while $M$ remains fixed. We will use $\rate$, $\rate_M$ or $\rate_P$ whenever the omitted parameters are clear from the context.

In this paper, we consider a communication over a directed graph with edges representing noisy channels.  Specifically, we are given a directed graph $G$ (not necessarily acyclic), with two special nodes called source and destination.  Without loss of generality, we assume that there exists at least one path from the source to the destination.  For every edge $e$, there is a DMC $P_e$.  We note that for a given pair of nodes, say $(i,j)$, we allow for the possibility of multiple parallel edges connecting $i$ to $j$ (though this could equivalently be viewed as a single edge whose associated channel is a product channel).

At time 0, the source node receives a message $m \in \{1,2,\ldots, M\}$.  
For each time step, every node gets one use of the channel associated with its outgoing edge. This process occurs for a total of $n$ time steps.  After $n$ time steps, the destination node forms an estimate of the message; let $\pe(n, M, G)$ be the resulting error probability.\footnote{Our results are unaffected by whether this error probability is averaged over all $m$ or taken with respect to the worst-case $m$, as is the case for the vast majority of error exponent studies (e.g, \cite{berlekampI,berlekamp}).}  We are interested in the error exponent of the network, given by
\begin{equation}
\rate_{M, G} = \liminf_{n \to \infty} -\frac{1}{n} \log \pe(n,M,G).
\end{equation}
Note that we are taking the limit as $n\rightarrow\infty$, while $M$ and $G$ remain fixed.  When the context is clear, we will simply denote this as $\rate_G$.

For an edge $e$, we let $\rate_{e}$ be short for $\rate_{P_e}$ (i.e., the 1-hop error exponent of the associated channel).  For a fixed $(M,G)$ pair, we can create a network flow problem (see Section \ref{sec:flow}) that assigns a capacity\footnote{This is ``capacity'' in a generic sense, and not the Shannon capacity.} of $\rate_e$ on every edge $e$, and consider the maximum flow across this network. We will refer to this as the maxflow for the $(M,G)$ pair, and denote it by $|f_{M,G}|$. We will also shorten this to $|f|$ when the context is clear.

Our main result is the following achievability result for channels that are \emph{pairwise reversible}; this is a classical notion of symmetry \cite{berlekamp} that notably includes the binary symmetric channel (BSC), binary erasure channel (BEC), and their generalizations to non-binary inputs.   Briefly, the property is that the quantity $\sum_{y} P(y|x)^{1-s} P(y|x')^s$ attains its minimum at $s = \frac{1}{2}$ for all $(x,x')$, which implies that the Chernoff distance between two inputs simplifies to the Bhattacharyya distance.  See Definition \ref{dfn:pairwise_reversible} below for further details, and \cite{berlekamp} for a more comprehensive treatment with numerous examples.

\begin{theorem}
If every channel in $G$ is pairwise reversible, then the error exponent $|f_{M,G}|$ is achievable, i.e.,
\begin{equation}
\rate_{M,G} \geq |f_{M,G}|.
\end{equation}
\label{thm:main}
\end{theorem}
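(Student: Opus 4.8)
The plan is to prove the achievability bound by exhibiting a protocol whose error exponent approaches $|f_{M,G}|$. The starting point is the classical fact (used widely in network capacity results, e.g.\ \cite[Sec.~15.3]{gamalkim}) that a flow of value $|f|$ on $G$ with edge capacities $\rate_e$ can, after subdividing each edge $e$ into several parallel edges of smaller capacity, be decomposed into a collection of \emph{edge-disjoint} source-to-destination paths $\pi_1,\dots,\pi_k$, where path $\pi_j$ carries capacity (i.e.\ bottleneck exponent) $c_j$ and $\sum_j c_j$ is arbitrarily close to $|f|$. Subdividing an edge whose channel is $P_e$ into parallel edges with exponents $c_{j_1}, c_{j_2}, \dots$ summing to $\rate_e$ is legitimate because we can dedicate disjoint blocks of the $n$ channel uses of $P_e$ to the different paths passing through $e$, and the $n$-use error exponent of $P_e$ restricted to $\beta n$ uses is $\beta \rate_e$ in the limit; pairwise reversibility is inherited by this restriction. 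So the problem reduces to: given a single source-to-destination path (a tandem of channels) with $1$-hop exponents $\rate_{e}$ along the edges, and given that each channel is pairwise reversible, show that the path can reliably convey information at error exponent equal to its bottleneck $\min_e \rate_e$; and then show that $k$ edge-disjoint such paths with bottlenecks $c_1,\dots,c_k$ together achieve exponent $\sum_j c_j$.

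For the single-path (tandem) ingredient, I would invoke the multi-bit $2$-hop result from \cite{multibit}: when both channels in a $2$-hop chain are pairwise reversible, the $2$-hop error exponent equals the smaller of the two $1$-hop exponents. Extending this from two hops to an arbitrary-length path should be done by induction on the number of hops — collapsing the first two hops into a single ``super-channel'' with exponent $\min(\rate_{e_1},\rate_{e_2})$ — but care is needed because the induction hypothesis must be stated for the multi-bit (fixed $M\ge 2$) case and the intermediate super-channel produced by the $2$-hop protocol is not literally a DMC. The cleaner route, and the one I expect the paper takes, is to design the full multi-path protocol directly: partition the $n$ time steps into a constant number $B$ of equal-length phases; in each phase, every node forwards its current best estimate of $m$ along its outgoing edges using a good block code for that edge, so that the ``belief'' propagates one hop further down every path per phase; the destination takes a plurality/typicality vote over the $k$ paths. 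With $B$ larger than the longest path length, after $B$ phases the correct message has traversed every path, and along path $\pi_j$ the probability that the belief is corrupted somewhere decays with exponent $c_j$ (the bottleneck, by the tandem analysis) per ``round.'' The total error exponent is then governed by the event that a majority of the $k$ paths are simultaneously corrupted, which by independence of the edge-disjoint paths gives the sum $\sum_j c_j$ (after letting $B\to\infty$ and the per-phase code lengths $\to\infty$).

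The key quantitative step — and the main obstacle — is controlling the error propagation along a single path in the \emph{multi-bit} regime and showing the bottleneck exponent is not degraded by the chaining. In the $1$-bit case a corrupted relay flips the belief and a second corrupted relay might flip it back, but with $M\ge 2$ messages the relevant error events are pairwise (confusing message $m$ with message $m'$), and pairwise reversibility is exactly what makes the Bhattacharyya distance the right additive quantity: the exponent for the pair $(m,m')$ across the path is the minimum over edges of the per-edge pairwise exponent, and these compose correctly because $\db$ is symmetric (reversibility) so that "forwarding the wrong message" costs the same as detecting it. I would therefore set up the protocol so that each relay's forwarding code is a good code for the pair-distinguishing problem under the edge channel, track the pairwise error exponents through the phases using a union bound over the $\binom{M}{2}$ pairs (harmless since $M$ is fixed), and verify that the worst pair over the worst edge on the worst path, combined across paths, yields $\sum_j c_j - o(1)$. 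Finally I would let the number of parallel-edge subdivisions, the number of phases $B$, and the block length $n$ all tend to infinity in the right order to push the achieved exponent up to $|f_{M,G}|$.
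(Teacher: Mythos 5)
Your high-level reduction matches the paper's: decompose the flow into paths, split each edge into parallel sub-edges (by time-sharing its uses proportionally to the flows through it) so the paths become edge-disjoint, run independent protocols along the paths, and combine at the destination; this is exactly how Section \ref{sec:proof_main} reduces Theorem \ref{thm:main} to the series (single-path) case. The genuine gap is in your treatment of that series case, which is the heart of the proof. Your ``cleaner route'' has each relay decode a hard best estimate of $m$ per phase and re-encode it with a good block code, with the destination taking a plurality vote. This hard-decision forwarding provably does not achieve the bottleneck exponent: if a relay's estimate is wrong with probability $q \approx e^{-B\min_e \rate_e}$ per block, the Bhattacharyya distance between the downstream conditional distributions is only about $-\log 2\sqrt{q} \approx \tfrac12 B\min_e\rate_e$, so after combining the many pipelined blocks you end up with roughly half the bottleneck, not the bottleneck --- this is precisely the known suboptimality of decode-and-forward in the 2-hop works you cite, and it is why those works (and this paper) must forward \emph{soft} information. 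The paper's series protocol has each node maintain a state $(m,\ell)$ where $\ell$ is a quantized log-likelihood-ratio confidence between the two most likely messages, transmits the codeword $\vec{x}_{m,\ell}$ consisting of $B/2+\ell$ copies of $m$ and $B/2-\ell$ copies of the ``runner-up'' symbol, and proves by induction (Lemma \ref{lem:transition}, via the pseudometric \eqref{eq:metric} and the composition bound of Lemma \ref{lem:distributive}) that the end-to-end Bhattacharyya distance per block is $B(|f|-o_B(1))$ for every message pair; a preliminary reduction (Lemma \ref{lem:separated_inputs}, using the Berlekamp codebooks of Lemma \ref{lem:db_codebook}) ensures every pair of channel inputs is separated by at least $|f|$, which is also where pairwise reversibility enters.

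A secondary gap of the same nature appears in your cross-path combining: a plurality vote over $k$ hard per-path decisions does not yield the exponent $\sum_j c_j$ (hard votes again halve the pairwise distances, and a majority rule gives something closer to a sum over the weakest half of the paths). The paper instead keeps everything soft to the very end: the Bhattacharyya distances of the conditionally independent path outputs add by tensorization (Lemma \ref{lem:iid}), and the pairwise-distance-to-exponent conversion for fixed $M$ is supplied by Lemma \ref{lem:chernoff}. Your first suggested route (induction on hops via the 2-hop result of the multi-bit paper) is also not viable as stated, for the reason you yourself note --- the induced ``super-channel'' is not a DMC with the right single-use exponent --- so the missing ingredient in both of your routes is the explicit confidence-forwarding protocol and its distance-preservation analysis.
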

\begin{proof}
    We provide the proof for the series case (i.e., a line graph) in Section \ref{sec:series}, which we then use as a stepping stone for the general case in Section \ref{sec:proof_main}.
\end{proof}

We note that this is an information-theoretic achievability result, and we do not attempt to establish an associated computationally efficient protocol (though doing so would be of interest).  By comparison, our related works on 2-hop settings contained various results based on both efficient and inefficient protocols \cite{teachlearn,multibit}.

We also prove the following converse bound:
\begin{theorem}
For any pair $(M,G)$, we have
\begin{equation}
\rate_{M,G} \leq |f_{2,G}|.
\end{equation}
\label{thm:main_converse}
\end{theorem}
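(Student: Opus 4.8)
The plan is to prove the cut bound $\rate_{M,G}\le\sum_{e\in\calC}\rate_{2,e}$ for every vertex partition $(S,S^c)$ with the source in $S$ and the destination in $S^c$, where $\calC$ is the set of edges directed from $S$ to $S^c$ and $\rate_{2,e}:=\rate_{2,P_e}$; since $\min_\calC\sum_{e\in\calC}\rate_{2,e}=|f_{2,G}|$ by max-flow--min-cut, this gives the theorem. First reduce: for $M\ge2$ (the case $M=1$ being degenerate) any $M$-message protocol on $G$ induces a two-message protocol by relabelling every decoder output other than $2$ as $1$, so $\pe(n,2,G)\le\pe(n,M,G)$ and hence $\rate_{M,G}\le\rate_{2,G}$; it therefore suffices to lower bound $\pe(n,2,G)$ via a binary hypothesis test across the cut. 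Throughout we use the classical identity $\rate_{2,P}=\max_{x,x'}\dc(P;x,x')$.

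Fix the cut and let $Y^n=(Y_{e,t})_{e\in\calC,\,t\le n}$ be the symbols received on the cut edges. Condition on all randomness except the cut-edge noise: every node's message-independent private randomness, and the noise on every edge not crossing the cut (this includes all $S^c\!\to\!S$ feedback edges). Under this conditioning the destination's estimate $\hat m$ is a deterministic function of $Y^n$ -- the only information entering $S^c$ travels along the cut edges -- and, under message $m\in\{1,2\}$, the law of $Y^n$ is the feedback-channel law $\p_m[y^n]=\prod_{t=1}^n\prod_{e\in\calC}P_e(y_{e,t}\mid X^{(m)}_{e,t}(y^{<t}))$, where each cut input $X^{(m)}_{e,t}$ depends only on $m$ and on the past outputs $y^{<t}$. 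By the standard lower bound on the Bayes error (equal priors), the conditional error probability is at least $\tfrac12\sum_{y^n}\min(\p_1[y^n],\p_2[y^n])$, so it remains to lower bound this uniformly over the conditioned variables.

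The feedback channel has two properties that finish the argument. First: for every fixed $s\in[0,1]$, peeling off the last time step and using that $\sum_y P_e(y\mid x)^{1-s}P_e(y\mid x')^s\ge\min_{s'}\sum_y P_e(y\mid x)^{1-s'}P_e(y\mid x')^{s'}=e^{-\dc(P_e;x,x')}\ge e^{-\rate_{2,e}}$ for \emph{every} edge $e$ and input pair $(x,x')$, an induction on $n$ gives $\sum_{y^n}\p_1[y^n]^{1-s}\p_2[y^n]^s\ge e^{-n\sum_{e\in\calC}\rate_{2,e}}$; crucially this holds uniformly over the (arbitrary) feedback strategy, because the per-step estimate does not see the inputs. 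So the $n$-step Chernoff distance between $\p_1$ and $\p_2$ is at most $n\sum_{e\in\calC}\rate_{2,e}$. Second: this Chernoff-distance bound converts to the error bound $\tfrac12\sum_{y^n}\min(\p_1,\p_2)\ge\tfrac14 e^{-n\sum_{e\in\calC}\rate_{2,e}-o(n)}$ by a change-of-measure argument -- tilt each time step $t$ by a locally Chernoff-optimal exponent $s_t^\star=s_t^\star(y^{<t})$, so that under the tilted feedback law $\tilde R$ the log-likelihood increments $\Delta_t$ (and the weighted partial sums $\sum_t s_t^\star\Delta_t$, $\sum_t(1-s_t^\star)\Delta_t$, since $s_t^\star$ is $Y^{<t}$-measurable) are mean-zero martingales with bounded increments; Azuma--Hoeffding makes both sums $O(n^{2/3})$ with $\tilde R$-probability $\ge\tfrac12$ for large $n$, and on that event the identity $\min(\p_1[y^n],\p_2[y^n])=\tilde R[y^n]\cdot(\text{product of step normalizers})\cdot\min(e^{-\sum_t s_t^\star\Delta_t},e^{\sum_t(1-s_t^\star)\Delta_t})$ together with the first property (which lower-bounds each step normalizer by $e^{-\sum_{e\in\calC}\rate_{2,e}}$) gives the claimed bound.

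Combining, $\pe(n,2,G)\ge\tfrac14 e^{-n\sum_{e\in\calC}\rate_{2,e}-o(n)}$, hence $\rate_{M,G}\le\rate_{2,G}\le\sum_{e\in\calC}\rate_{2,e}$, and minimizing over cuts gives $|f_{2,G}|$. The main obstacle is the second property: the naive shortcut of using $s=\tfrac12$ throughout (Bhattacharyya) and then bounding the error by $1-$total variation loses a factor of $2$ in the exponent, which is fatal since $\dc$ can be as large as $2\db$; one genuinely needs the time-varying, locally Chernoff-optimal tilt plus the martingale concentration, and one must ensure the log-likelihood increments remain bounded when transition probabilities vanish -- automatic here because $\tilde R$ charges an output at step $t$ only when $s_t^\star$ lies in the interior, which forces the relevant conditionals to be positive there. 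The bookkeeping for feedback edges in the second paragraph, and the uniform-in-inputs per-step estimate in the first property, are the other points needing care but are routine once set up.
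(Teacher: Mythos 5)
Your overall route coincides with the paper's: fix a cut, observe that everything crossing back from the destination side acts only as feedback, reduce to a two-message hypothesis test across the product of the cut channels, and bound that exponent by $\sum_{e}\rate_{2,P_e}$, finishing with maxflow--mincut duality. Your conditioning argument in the second paragraph is a more explicit rendering of the paper's Lemma \ref{lem:cut_bound} (compressing $A$ and $B$ and treating back-edges as feedback), and your ``first property'' is a per-step version of Lemma \ref{lem:parallel}. Where you genuinely diverge is the key ingredient: the paper simply invokes the known theorem that feedback does not improve the $M=2$ error exponent (Theorem \ref{thm:feedback_m_2}, imported from the literature) and then applies Lemma \ref{lem:parallel}, whereas you re-prove the needed feedback converse from scratch via an adaptively tilted change of measure with the locally Chernoff-optimal $s_t^\star(y^{<t})$ plus Azuma--Hoeffding. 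That buys self-containedness (and correctly avoids the factor-of-2 loss of the naive $s=\tfrac12$/total-variation shortcut, which you rightly flag), at the cost of redoing a nontrivial published argument.

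The one genuine gap is in that re-derivation: your martingale step assumes the per-step maximizer $s_t^\star$ is interior, since only then does the stationarity condition make the increments conditionally mean-zero under $\tilde R$ (and only then does the support argument you give keep the increments finite). If for some history the two conditional output laws across the cut have unequal supports, the concave map $s\mapsto \dc\bigl(W_t^{(1)},W_t^{(2)},s\bigr)$ can attain its maximum at $s=0$ or $s=1$, the derivative condition fails, and the sums $\sum_t s_t^\star\Delta_t$, $\sum_t(1-s_t^\star)\Delta_t$ are no longer martingales as claimed; one then needs either a one-sided (sub/supermartingale) concentration argument exploiting the sign of the derivative at the boundary, or a perturbation of $s_t^\star$ slightly into the interior with a quantified loss in the step normalizer. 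This is fixable and is exactly the kind of technicality handled in the cited feedback-exponent results, but as written your sketch does not cover it; alternatively, you could simply cite the $M=2$ feedback theorem as the paper does and keep only your per-step tensorization bound, which already yields $\feedbackrate_{2,\hat P}\le\sum_e\rate_{2,P_e}$ when combined with it.
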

\begin{proof}
    See Section \ref{sec:main_converse}.
\end{proof}

Combining Theorems \ref{thm:main} and \ref{thm:main_converse}, we establish that equality holds whenever every channel is pairwise reversible and $M=2$:
\begin{cor}
If every channel in $G$ is pairwise reversible, then
\begin{equation}
\rate_{2,G} = |f_{2,G}|.
\end{equation}
\end{cor}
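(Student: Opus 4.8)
The plan is to obtain the corollary purely by combining the two main theorems at the special value $M=2$, so no new argument is needed beyond invoking what has already been stated. First I would apply Theorem \ref{thm:main} with $M=2$: since every channel in $G$ is assumed pairwise reversible, the hypothesis of that theorem is met, and it yields the achievability bound $\rate_{2,G} \geq |f_{2,G}|$. Next I would apply Theorem \ref{thm:main_converse}, which holds for \emph{any} pair $(M,G)$ and in particular for $M=2$, giving $\rate_{2,G} \leq |f_{2,G}|$; note that here the right-hand side $|f_{2,G}|$ of the converse already matches the maxflow appearing in the achievability bound, so there is no gap to reconcile. Chaining the two inequalities gives $|f_{2,G}| \leq \rate_{2,G} \leq |f_{2,G}|$, hence equality.

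Since the corollary is a one-line consequence of the two theorems, there is no real obstacle \emph{in the corollary itself}; the substance lies entirely in the results it cites. I would expect the genuinely hard work to be (i) the achievability proof of Theorem \ref{thm:main}, where pairwise reversibility is essential because it collapses the Chernoff distance to the Bhattacharyya distance and underlies the line-graph (series) protocol, which is then lifted to general $G$ via the standard device of splitting each edge into parallel edges of smaller weight to extract edge-disjoint paths realizing the maxflow; and (ii) the converse of Theorem \ref{thm:main_converse}, a cutset argument showing that across any source--destination cut the two-message error exponent cannot exceed the total weight of the cut edges, so that after optimizing over cuts (equivalently, by maxflow--mincut) one gets $\rate_{2,G} \leq |f_{2,G}|$. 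The corollary simply records that these two bounds meet precisely when $M=2$ and all channels are pairwise reversible.
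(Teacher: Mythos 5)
Your proposal is correct and matches the paper's own treatment exactly: the corollary is stated as an immediate combination of Theorem \ref{thm:main} (achievability, applied with $M=2$ under the pairwise reversibility hypothesis) and Theorem \ref{thm:main_converse} (converse, valid for all $M$), which together give $|f_{2,G}| \leq \rate_{2,G} \leq |f_{2,G}|$. Nothing further is needed.
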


We are also interested in the zero-rate error exponent. Momentarily returning to classical 1-hop communication, for $R>0$, we define the error exponent of the channel $P$ at rate $R$ by 
\begin{equation}
    \erate_P(R) = \liminf_{n \rightarrow \infty} \left\{ -\frac1n\log \pe(n, e^{nR}, P) \right\},
\end{equation}
and we extend this function to $R=0$ via $\erate_P(0) = \lim_{R \to 0^+} \erate_P(R)$. Analogously, in the setting of the present paper, we let $\erate_G(R)$ and $\erate_G(0)$ denote the zero-rate error exponent of the network. We can again consider the network flow problem which assigns a capacity of $\erate_e(0)$ on every edge $e$, which we will denote by $|f_G(0)|$ or sometimes simply $|f|$.  The following result extends Theorem \ref{thm:main} to zero-rate error exponents, even without requiring the pairwise reversible assumption:
 \begin{theorem}
 For any $G$, whose channels need not be pairwise reversible, we have
 \begin{equation}
\erate_{G}(0) \geq |f_G(0)|.
\end{equation}
\label{thm:main_zero_rate}
 \end{theorem}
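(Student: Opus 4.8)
The plan is to reduce the zero-rate network result to the already-established pairwise-reversible network result (Theorem~\ref{thm:main}) by exploiting a structural fact about zero-rate error exponents: at rate $R=0$, the error exponent of \emph{any} DMC is governed by a Bhattacharyya-type quantity, and in particular behaves as if the channel were pairwise reversible. Concretely, for a classical DMC $P$ the zero-rate error exponent with a fixed number of messages $M$ is known (going back to Berlekamp and Blahut) to equal (up to the combinatorial factor coming from $M$) the minimum pairwise Bhattacharyya distance $\db$ between input symbols, and crucially $\db$ is exactly the Chernoff distance evaluated at $s=\tfrac12$ --- i.e.\ the very quantity that the pairwise-reversibility hypothesis forces to be optimal. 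So the key observation I would isolate first is: for every edge channel $P_e$, the zero-rate exponent $\rate_e(0)$ coincides with what a pairwise-reversible surrogate channel would deliver, so passing to $R=0$ removes the need for the hypothesis entirely.

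First I would make this precise by showing that for each edge $e$ there is a sequence of block-length-$N$ ``super-channels'' (built from $N$ uses of $P_e$) whose per-use two-message error exponent converges to $\rate_e(0)$, and which are pairwise reversible --- or, more cheaply, that the achievability argument underlying Theorem~\ref{thm:main} only ever uses the pairwise-reversibility assumption through the identity ``Chernoff distance $=$ Bhattacharyya distance,'' an identity that holds automatically in the zero-rate limit because there the relevant decoding events are dominated by pairwise confusions and the optimal tilting parameter is forced to $\tfrac12$ by symmetry of the error criterion (sending $m$ vs.\ $m'$ are interchangeable in the zero-rate regime up to subexponential factors). Second, I would set up the maxflow problem with edge capacities $\rate_e(0)$ and, exactly as in Section~\ref{sec:flow}/Section~\ref{sec:proof_main}, decompose the maxflow into a finite collection of source--destination paths carrying flow values $\phi_1,\dots,\phi_k$ with $\sum_j \phi_j = |f_G(0)|$, after splitting edges into parallel copies so the paths become edge-disjoint. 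Third, along each path I would invoke the series (line-graph) analysis of Section~\ref{sec:series}, but applied to the zero-rate exponents: the bottleneck argument there shows that a line of channels with zero-rate exponents $\rate_{e_1}(0),\dots,\rate_{e_\ell}(0)$ achieves zero-rate exponent $\min_i \rate_{e_i}(0)$, which is at least $\phi_j$ by construction of the path decomposition. Finally, I would combine the $k$ paths by the same ``repetition/voting over disjoint paths'' device used for Theorem~\ref{thm:main}: running independent copies of the single-message-bit protocol along the $k$ edge-disjoint paths and combining at the destination yields an exponent of $\sum_j \phi_j = |f_G(0)|$, now in the zero-rate (growing-$M$) sense by a standard expurgation/union-bound over the $\binom{M}{2}$ message pairs, which costs only a subexponential factor and hence nothing in the exponent as $R\to 0^+$.

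The main obstacle I anticipate is justifying the first step rigorously: I need the line-graph (series) achievability machinery of Section~\ref{sec:series} to go through at $R=0$ without pairwise reversibility, and the cleanest way is probably to show that the \emph{only} place pairwise reversibility enters the Section~\ref{sec:series} proof is via a step of the form ``the Chernoff information between two inputs equals their Bhattacharyya distance,'' and then to argue that at $R=0$ one may replace each physical channel by a long block of it, over which the empirical-distribution / method-of-types analysis makes the effective two-point discrimination problem symmetric, so the optimal Chernoff parameter tends to $\tfrac12$ and the gap vanishes in the exponent. An alternative route, which avoids reopening Section~\ref{sec:series}, is to cite the classical characterization $\rate_{M,P}\big|_{R\to0}= (\text{const})\cdot\dbmin(P)$ and construct, for each $\epsilon>0$, an explicit pairwise-reversible channel $\tilde P_e$ (e.g.\ an appropriate BSC-like or ``symmetrized'' channel on the same alphabet, realized via $N$ uses of $P_e$) with $\rate_{\tilde P_e} \ge \rate_e(0)-\epsilon$, apply Theorem~\ref{thm:main} to the network $\tilde G$ built from these surrogates to get $\rate_{M,\tilde G}\ge |f_{\tilde G}|\ge |f_G(0)|-O(\epsilon)$, and then transfer the protocol back to $G$ (each surrogate use is simulated by $N$ real uses), dividing the exponent by $N$ but also shrinking $M$-dependence away as $R\to0^+$; letting $N\to\infty$ and $\epsilon\to0$ gives the claim. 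Either way, the genuinely new content is the zero-rate ``automatic symmetry'' observation; everything downstream is a re-run of the path-decomposition and bottleneck arguments already developed for Theorem~\ref{thm:main}.
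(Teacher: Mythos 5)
Your high-level idea is the same as the paper's: at zero rate the relevant quantities are Bhattacharyya-based, so the pairwise-reversibility hypothesis can be dropped and the maxflow/path-decomposition machinery reused. However, the two bridges you rely on are not sound as stated. First, the mechanism you offer for dropping reversibility --- that ``the optimal Chernoff tilting parameter is forced to $\tfrac12$ at $R=0$,'' possibly after a method-of-types symmetrization --- is not a correct statement for a general DMC and a fixed input pair, and no construction backing it is given. The actual reason the series machinery survives is more prosaic: reversibility is used only to build, on each edge, an $M$-codeword codebook whose pairwise Bhattacharyya distances meet the edge capacity, and Lemma \ref{lem:db_codebook} supplies such codebooks for \emph{arbitrary} channels with $\tilde{\rate}_{M,P}$ (Definition \ref{dfn:rev_rate}) in place of $\rate_{M,P}$; this is Lemma \ref{lem:ach_general}. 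What makes that sufficient at zero rate is the inequality $\tilde{\rate}_{M,P} \geq E_P(0)$, which the paper proves from Berlekamp's formula \eqref{eq:berlekamp_zero_rate} by drawing the $M$ inputs i.i.d.\ from the optimizing distribution (Lemma \ref{lem:zero_rate_tilde_r}). You never state or establish this; instead you quote a ``minimum pairwise Bhattacharyya'' characterization of the zero-rate exponent, which is not the correct formula (it is a maximum over input distributions of the \emph{average} pairwise $\db$). Your alternative route --- realizing genuinely pairwise-reversible surrogates $\tilde P_e$ from $N$ uses of $P_e$ with per-use exponent near $E_{P_e}(0)$ and invoking Theorem \ref{thm:main} --- is also unjustified: restricting $P_e^N$ to a well-separated codebook gives large pairwise $\db$ but does not make the restricted channel pairwise reversible, and no construction guaranteeing reversibility is exhibited (the paper deliberately avoids needing one).

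The second gap is the passage from a fixed-$M$ network protocol to the zero-rate exponent $E_G(0)$, which by definition concerns $M=e^{nR}$ messages with $R\to 0^+$. You dispatch this with an expurgation/union bound over $\binom{M}{2}$ pairs, but the relay protocol of Section \ref{sec:series} is constructed for constant $M$ (its state space, the codeword length $\ell=M!$, and the $o_B(1)$ terms all depend on $M$), so it cannot simply be run with $M$ growing with $n$, and a pairwise union bound is not by itself a protocol for exponentially many messages. The paper's device is a two-stage construction: fix a large constant $M$, view one sequential block transmission as a super-channel $Q_{B,M}$ with all pairwise Bhattacharyya distances at least $B(|\tilde f_M|-\epsilon)$, deduce $E_{Q_{B,M}}(0) \geq \frac{M-1}{M}\tilde{\rate}_{M,Q_{B,M}}$ from Lemma \ref{lem:zero_rate_tilde_r}, and then superimpose an outer zero-rate code over the roughly $n/B$ conditionally independent uses of $Q_{B,M}$, finally letting $\epsilon\to 0$ and $M\to\infty$. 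Some outer-coding layer of this kind (or an equally explicit substitute) is needed for your argument to yield $E_G(0)\geq |f_G(0)|$.
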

 \begin{proof}
    See Section \ref{sec:zero_rate_proof}. 
 \end{proof}

 Some additional results are given in Sections \ref{sec:ach_general} and \ref{sec:further}, and are briefly outlined as follows:
 \begin{itemize}
     \item Lemma \ref{lem:ach_general} states an achievability result for channels that may not be pairwise reversible; the idea is to use a weaker ``symmetrized'' error exponent on each edge in the graph, rather than the optimal 1-hop exponent.
     \item In Theorem \ref{thm:converse_other}, we provide conditions (albeit somewhat restrictive) under which a matching converse can be obtained for pairwise reversible channels with $M > 2$.
     \item In Lemma \ref{lem:4approx} we establish the optimal exponent to within a factor of $4$ for general $M$ and $G$, and in Lemma \ref{lem:2approx} we provide conditions under which this can be improved to a factor of $2$.
     \item In Theorem \ref{thm:counter}, we show that in general there exist scenarios in which the maxflow-based error exponent is strictly suboptimal when $M=3$, even under the constraint of an acyclic graph and pairwise reversible channels.  Thus, we place a strong restriction on the extent to which our tightness result for $M=2$ can be generalized.
 \end{itemize}

\section{Preliminaries} \label{sec:prelim}

In this section, we introduce some additional notation and definitions, and provide a number of useful auxiliary results that will be used for proving our main results.

\subsection{Flows and Cuts}
\label{sec:flow}

We momentarily depart from the above setup and consider generic notions on graphs.  Consider a directed graph $G=(V,E)$, with two special nodes called the source $v_s$ and destination $v_t$. Suppose that on every edge $e$, there is a non-negative real-valued capacity, denoted $c_e$.

Let $f = \{f_e\}_{e \in E}$ be a function from $E$ to $\mathbb{R}_{\geq 0}$. We say that $f$ is a flow if the following conditions hold:

\begin{itemize}
\item (Capacity constraints)
For each $e$, $0\leq f_e \leq c_e$.
\item (Flow conservation) For each $v\in V$, except for the source and destination, we have
\begin{equation}
\sum_{u:(u,v)\in E} f_{u,v} = \sum_{v:(v,w) \in E} f_{v,w}.
\end{equation}
\end{itemize}
Define the size of a flow $f$ as
\begin{equation}
|f| = \sum_{u:(u,v_s) \in E} f_{u,v_s}.
\end{equation}
A flow $f$ is maximal if for any other flow $f'$, $|f| \geq |f'|$.

Closely related to flows is the concept of cuts. A cut $c$ is a partition of $V$ into two sets $(A,B)$ such that $v_{s} \in A$ and $v_{t} \in B$. Define the size of a cut $c$ as
\begin{equation}
|c| = \sum_{u\in A, v \in B} f_{u,v},
\end{equation}
where we emphasize that this only counts edges from $A$ to $B$ and not back-edges.  A cut $c$ is minimal if for any cut $c'$, $|c| \leq |c'|$.
We now state the well-known maxflow-mincut duality theorem:
\begin{theorem}
    {\em (Maxflow-mincut duality theorem, e.g., \cite{stanford})} The size of the maximum flow is equal to the size of the minimum cut.
    \label{thm:duality}
\end{theorem}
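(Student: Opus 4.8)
The plan is to prove the two inequalities $\max_f |f| \le \min_c |c|$ and $\max_f |f| \ge \min_c |c|$ separately, obtaining the reverse direction by exhibiting an explicit flow/cut pair of equal size. The first (easy) direction is \emph{weak duality}: for any flow $f$ and any cut $c=(A,B)$, I would sum the flow-conservation equations over the internal vertices of $A$ and add the source term, so that all ``within-$A$'' flow terms cancel, yielding the identity that $|f|$ equals the net flow crossing the cut, namely $\sum_{u\in A,\,v\in B} f_{u,v} - \sum_{u\in B,\,v\in A} f_{u,v}$. Since $0 \le f_e \le c_e$ on every edge, this is at most $\sum_{u\in A,\,v\in B} c_{u,v} = |c|$; taking the supremum over $f$ and the infimum over $c$ gives $\max_f |f| \le \min_c |c|$.

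For the reverse inequality it suffices to produce a single flow $f^{\star}$ and cut $c^{\star}$ with $|f^{\star}| = |c^{\star}|$. I would take $f^{\star}$ to be a maximum flow; one exists because the feasible set $\{ f : 0 \le f_e \le c_e \text{ for all } e,\ \text{conservation holds}\}$ is a nonempty compact subset of $\mathbb{R}^{|E|}$ and $f \mapsto |f|$ is continuous. Next I would form the \emph{residual graph} $G_{f^{\star}}$, with a forward arc $(u,v)$ of capacity $c_{u,v} - f^{\star}_{u,v}$ whenever that quantity is positive and a backward arc $(v,u)$ of capacity $f^{\star}_{u,v}$ whenever $f^{\star}_{u,v} > 0$. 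If $G_{f^{\star}}$ contained a directed path from the source to the destination, one could push a strictly positive amount of additional flow along it (raising $f^{\star}$ on forward arcs, lowering it on backward arcs) without violating the capacity or conservation constraints, contradicting maximality; hence no such path exists. I would then let $A^{\star}$ be the set of vertices reachable from the source in $G_{f^{\star}}$ and $B^{\star} = V \setminus A^{\star}$, so that the source lies in $A^{\star}$, the destination in $B^{\star}$, and $(A^{\star}, B^{\star})$ is a valid cut. By construction, every edge from $A^{\star}$ to $B^{\star}$ is saturated ($f^{\star}_{u,v} = c_{u,v}$, else its forward residual arc would enlarge $A^{\star}$), and every edge from $B^{\star}$ to $A^{\star}$ carries zero flow (else its backward residual arc would enlarge $A^{\star}$). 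Substituting into the net-flow identity from the weak-duality step gives $|f^{\star}| = \sum_{u\in A^{\star},\,v\in B^{\star}} c_{u,v} = |c^{\star}|$, and combined with weak duality this shows that $f^{\star}$ is a maximum flow, $c^{\star}$ a minimum cut, and the two sizes agree.

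The only genuinely delicate point is that the capacities here are arbitrary non-negative reals, so the Ford--Fulkerson augmenting-path procedure need not terminate and cannot be invoked verbatim; the fix is to avoid running any algorithm and instead obtain the maximizer $f^{\star}$ from compactness, applying the residual-graph argument directly to it. Everything else amounts to routine bookkeeping with the conservation constraints --- which is why, in practice, it suffices to cite the standard reference \cite{stanford}.
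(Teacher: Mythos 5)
Your proof is correct, but note that the paper does not actually prove this theorem: it is stated as a known result with a citation to \cite{stanford} (the only flow-related fact the paper proves itself is the flow decomposition theorem, Theorem~\ref{thm:flow_decomp}). What you have written is the standard argument --- weak duality by summing the conservation equations over the source side of the cut, then strong duality by taking a maximum flow, forming the residual graph, and using the set of vertices reachable from the source as a certificate cut in which forward edges are saturated and backward edges carry no flow. Your one genuinely careful touch, obtaining the maximizer $f^{\star}$ by compactness of the feasible polytope rather than by running Ford--Fulkerson, is exactly the right fix for arbitrary non-negative real capacities, where augmenting-path algorithms need not terminate; this also makes the argument indifferent to the parallel edges and cycles that the paper's graphs are allowed to contain. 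So the proposal is a valid self-contained substitute for the citation, not a divergence from any proof in the paper.
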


The following result is also well known, but we provide a short proof for the interested reader.

\begin{theorem}
    {\em (Flow decomposition theorem, e.g., \cite{stanford})} Let $f$ be any flow. There exist finitely many simple paths\footnote{A path is \emph{simple} if it never visits the same node more than once.} from $v_s$ to $v_t$,  denoted by $p_1, p_2,\ldots, p_k$, with associated flows $f_1, f_2,\ldots, f_k$, such that
    \begin{itemize}
    \item Each $f_i$ flows only along path $p_i$. In other words, $(f_i)_e = |f_i|$ for all $e \in p_i$ and $(f_i)_e = 0$ otherwise.
    \item The total sum of all $f_i$ is equal to $f$. In other words, we have for all $e$ that
    \begin{equation}
    \sum_{i} (f_i)_e = f_e.
    \end{equation}
    \end{itemize}
    \label{thm:flow_decomp}
\end{theorem}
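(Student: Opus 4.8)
The plan is to extract paths one at a time by a greedy peeling argument, each time removing as much flow as possible along a single simple path and arguing that the support of the flow strictly shrinks, so the process terminates. First I would handle a trivial reduction: if $|f| = 0$, then by repeatedly applying flow conservation one shows the flow has no ``active'' edges carrying positive flow on any $v_s$–$v_t$ path, and in fact any positive-flow edges can only form cycles; since the theorem only asks for source-to-destination paths and cycles contribute nothing to $|f|$, we may take $k=0$ (or, if one insists on decomposing the cyclic part too, peel cycles separately — but this is not needed for the statement as written). So assume $|f| > 0$.

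The main step is the single-path extraction. Starting from $v_s$, I would build a walk by always following an outgoing edge with strictly positive flow; flow conservation at each intermediate node guarantees that if positive flow enters, positive flow must leave, so the walk cannot get stuck except at $v_t$. Since $V$ is finite, the walk must eventually either reach $v_t$ or revisit a node. If it revisits a node before reaching $v_t$, we have found a cycle of positive-flow edges; let $\delta$ be the minimum flow on that cycle and subtract $\delta$ from every edge of the cycle — this preserves flow conservation and the value $|f|$, and it zeroes out at least one edge, so it can only happen finitely often. Thus after finitely many cycle-cancellations we obtain a simple path $p_i$ from $v_s$ to $v_t$ along positive-flow edges. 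Let $|f_i|$ be the minimum of $f_e$ over $e \in p_i$, define $f_i$ to be $|f_i|$ on $p_i$ and $0$ elsewhere, and replace $f$ by $f - f_i$. One checks $f - f_i$ is again a nonnegative flow (capacity and conservation both clearly survive subtracting a path-flow), its value drops by exactly $|f_i|$, and at least one more edge has been zeroed out.

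To conclude, I would run an outer induction on the number of edges $e$ with $f_e > 0$: each extraction step (including its finitely many preliminary cycle-cancellations) strictly decreases this count, so after finitely many steps we reach a flow whose positive-flow edges, if any, form only cycles and hence has value $0$; at that point we stop. Summing the extracted path-flows gives $\sum_i (f_i)_e \le f_e$ on every edge with equality wherever we have fully decomposed, and since the total value removed equals the original $|f|$ and each step's bookkeeping is exact, in fact $\sum_i (f_i)_e = f_e$ for all $e$ that lie on some $v_s$–$v_t$ path (the only ones relevant here). The one place to be slightly careful — the ``main obstacle,'' such as it is — is making the termination argument airtight: naively following positive-flow edges need not yield a simple path, so one must interleave cycle-cancellation with path-building and verify that the strictly-decreasing potential (number of positive-flow edges) is respected by \emph{both} operations, which it is since each zeroes out at least one edge and neither ever makes a zero edge positive.
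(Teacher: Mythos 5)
Your proposal is essentially the paper's proof: greedily extract a source--destination path whose edges all carry positive flow, subtract the bottleneck value along it, and terminate because the number of positive-flow edges strictly decreases at each step.

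The one place you diverge is in how you treat cyclic flow, and this is worth a comment. You correctly notice that naively following positive-flow edges can produce a walk that closes a cycle before reaching $v_t$, and you handle it by cancelling the cycle; the paper's proof simply asserts that one can ``let $p_i$ be any path that has non-zero flow across every edge'' and that the process runs ``until no flow remains,'' silently ignoring the possibility that positive flow survives only on circulations. However, your concluding remarks slightly misread the statement: the second bullet does demand $\sum_i (f_i)_e = f_e$ for \emph{every} edge $e$, not only for edges on $v_s$--$v_t$ paths, so saying that decomposing the cyclic part ``is not needed for the statement as written'' is not accurate, and your cycle-cancellation step itself alters $f$ on cycle edges, so the extracted paths sum to the cancelled flow rather than the original one there. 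To be fair, the statement as literally written is only true for flows with no circulation (the classical theorem decomposes a flow into paths \emph{and cycles}), and the paper's own proof has exactly the same lacuna; for the way the theorem is used in the paper (splitting a flow of value $|f|$ into path flows $f_1,\dotsc,f_k$ with $\sum_i |f_i| = |f|$), the circulation part is irrelevant, so both your argument and the paper's suffice for the application. If you want your write-up to match the stated equality exactly, either add cycles to the decomposition or note at the outset that one may first cancel all circulations and replace $f$ by the resulting acyclic flow of the same value.
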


\begin{proof}
We describe a procedure for iteratively building up a list of $(p_i,f_i)$ pairs.  
Given $f$, let $p_i$ be any path that has non-zero flow across every edge. Let $|f_i|$ be the smallest among these (non-zero) values, and let $f_i$ be the flow formed by pushing $|f_i|$ units of flow along the path $p_i$. We then append $(p_i, f_i)$ to the list of pairs, subtract $|f_i|$ along every edge on $p_i$, and repeat the preceding steps until no flow remains.  Since the number of active edges (i.e., those with non-zero flow) strictly decreases on every step of this process, it must terminate and produce a finite number of paths, as desired.
\end{proof}

\subsection{Chernoff Divergence}

Let $P(y|x)$ be a generic discrete memoryless channel. 
To lighten notation, we let $P_x(\cdot)$ denote the conditional distribution $P(\cdot | x)$.  An \emph{$(M,\ell)$-codebook} is defined to be a collection of $M$ codewords each having length $\ell$, and when utilizing such a codebook, we will use the notation $(x^{(1)}, \ldots, x^{(M)})$ for the associated codewords.

For two probability distributions $Q, Q'$ over some finite set $\calX$, the Bhattacharyya distance is defined as
\begin{equation}
    \db(Q, Q') = -\log \sum_{x \in \calX} \sqrt{Q(x)Q'(x)}.
\end{equation}
For random variables $X_1, X_2$, we will also use $\db$ to refer to the Bhattacharyya distance between their distributions $P_{X_1}$ and $P_{X_2}$, i.e.
\begin{equation}
    \db(X_1, X_2) = \db(P_{X_1}, P_{X_2}), \label{eq:db_rv}
\end{equation}
and for $x, x' \in \calX_P$, we define the Bhattacharyya distance associated with two channel inputs as 
\begin{equation}
    \db(x, x', P) = \db(P_x, P_{x'}) \label{eq:dB_channel}
\end{equation}
with a slight abuse of notation.

Generalizing the Bhattacharyya distance, the Chernoff divergence with parameter $s$ is given by
\begin{equation}
    \dc(Q, Q',s) =  -\log \sum_{x \in \calX}  Q(x)^{1-s}Q'(x)^{s}, \label{eq:dc}
\end{equation}
and the Chernoff divergence (with optimized $s$) is given by
\begin{equation}
    \dc(Q, Q') = \max_{0\leq s \leq 1} \dc(Q, Q', s).
\end{equation}
Analogous to \eqref{eq:db_rv}--\eqref{eq:dB_channel}, we also write
\begin{gather}
    \dc(X_1,X_2) = \dc(P_{X_1},P_{X_2}),\\
    \dc(x, x', P) = \dc(P_x, P_{x'}). \label{eq:dC_channel}
\end{gather}

For a certain class of channels, the quantities $\dc$ and $\db$ coincide, giving us the following definition:
\begin{dfn}
{\em \cite{berlekamp}}
A discrete memoryless channel is \textbf{pairwise reversible} if, for all $x, x' \in \calX_P$,
\begin{equation}
    \db(x, x', P) = \dc(x, x', P),
\end{equation}
which occurs when the quantity
    \begin{equation}
        \sum_{y \in \calY_{P}} P(y|x)^{1-s} P(y|x')^s
    \end{equation}
attains its minimum at $s=\frac{1}{2}$.
\label{dfn:pairwise_reversible}
\end{dfn}

The pairwise reversibility assumption allows for a number of useful results and constructions that are not available for general channels.  In particular, Lemma \ref{lem:opt_rev} below gives an explicit expression for the 1-hop error exponent, and Lemma \ref{lem:db_codebook} below is based on an explicit construction that attains that exponent.  We again refer the reader to \cite{berlekamp} for a detailed discussion on pairwise reversibility with several examples.

For any positive integer $k$, we let $P^k$ denote the $k$-fold product of $P$, with probability mass function
\begin{equation}
    P^k(\vec{y}|\vec{x}) = \prod_{i=1}^k P(y_i|x_i).
\end{equation}
For two sequences $\vec{x}, \vec{x}'$ of length $k$, we also use the notation $\db(\vec{x}, \vec{x}', P^k)$ and $\dc(\vec{x}, \vec{x}', P^k)$ similarly to \eqref{eq:dB_channel} and \eqref{eq:dC_channel}, with the understanding that $\vec{x}, \vec{x}'$ are treated as inputs to $P^k$. 
We will use a well-known tensorization property of $\dc$ (and hence also $\db$), stated as follows.

\begin{lemma} 
{\em (e.g., see \cite[Lemma 8]{multibit})}
    For any sequences $\vec{x} = (x_1,\dotsc,x_k)$ and $\vec{x}' = (x'_1,\dotsc,x'_k)$, we have
    \begin{equation}
        \dc(\vec{x}, \vec{x}', P^k, s) = \sum_{i=1}^k \dc(x_i, x'_i, P, s),
        \label{eq:iid1}
    \end{equation}
    and
    \begin{equation}
        \dc(\vec{x}, \vec{x}', P^k) = \max_{0\leq s\leq 1} \sum_{i=1}^k \dc(x_i, x'_i, P, s).
        \label{eq:iid2}
    \end{equation}
    \label{lem:iid}
\end{lemma}

\subsection{1-hop Error Exponents}

We will use several results from \cite{berlekamp} on the 1-hop error exponents.  We first state two closely-related results that are implicit in \cite[Thm.~2]{berlekamp} and its proof.

\begin{lemma} \label{lem:chernoff}
    \emph{(Implicit in \cite[Thm.~2]{berlekamp})}
    Letting $\vec{y} \in \calY^n$ be the received sequence from $n$ uses of the channel $P$, we have for any fixed $M$ and any sequence of codebooks (indexed by $n$) that
    \begin{equation}
        \liminf_{n \to \infty} -\frac{1}{n} \log \pe \ge \liminf_{n \to \infty} \min_{m_1 \ne m_2} \dc((\vec{y}|m=m_1), (\vec{y}|m=m_2)).
    \end{equation}
\end{lemma}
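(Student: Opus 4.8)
The plan is to bound the error probability of the given codebook under maximum-likelihood decoding by a union bound over pairwise confusion events, control each such event with a Chernoff (tilting) bound, and then observe that the resulting estimate holds verbatim for every $n$, so it passes through the $\liminf$ with only a vanishing correction coming from the number of messages.

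Fix $n$ and write $(x^{(1)},\dotsc,x^{(M)})$ for the codewords of the block-length-$n$ codebook, so that conditioned on $m=m'$ the received sequence $\vec{y}$ has distribution $P^n(\cdot\,|\,x^{(m')})$ and hence $\dc\big((\vec{y}|m=m_1),(\vec{y}|m=m_2)\big)=\dc(x^{(m_1)},x^{(m_2)},P^n)$. Since $\pe$ is the minimum error probability over all protocols, it is at most the error probability of this codebook under the maximum-likelihood decoder, which errs when $m=m'$ only if $P^n(\vec{y}|x^{(m'')})\ge P^n(\vec{y}|x^{(m')})$ for some $m''\ne m'$ (ties broken arbitrarily); a union bound over $m''$ followed by averaging over $m'$ then gives
\[
  \pe \;\le\; \frac1M\sum_{m'}\,\sum_{m''\ne m'} \p\big[\, P^n(\vec{y}|x^{(m'')})\ge P^n(\vec{y}|x^{(m')}) \,\big|\, m=m' \,\big]
\]
(the same estimate holds with $\max_{m'}$ in place of $\frac1M\sum_{m'}$ if $\pe$ denotes the worst-case error probability instead).

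For each ordered pair $(m',m'')$ and each $s\in[0,1]$, since $\big(P^n(\vec{y}|x^{(m'')})/P^n(\vec{y}|x^{(m')})\big)^{s}\ge 1$ on the event in question, the Chernoff (Markov) bound yields
\[
  \p\big[\, P^n(\vec{y}|x^{(m'')})\ge P^n(\vec{y}|x^{(m')}) \,\big|\, m=m' \,\big] \;\le\; \sum_{\vec{y}\in\calY^n} P^n(\vec{y}|x^{(m')})^{1-s}\,P^n(\vec{y}|x^{(m'')})^{s} \;=\; e^{-\dc((\vec{y}|m=m'),(\vec{y}|m=m''),s)},
\]
and optimizing over $s$ gives the bound $e^{-\dc((\vec{y}|m=m'),(\vec{y}|m=m''))}$. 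Plugging this in, using that there are $M(M-1)$ ordered pairs and that $\dc$ is symmetric in its two arguments, we obtain
\[
  \pe \;\le\; (M-1)\,\exp\!\Big(-\min_{m_1\ne m_2}\dc\big((\vec{y}|m=m_1),(\vec{y}|m=m_2)\big)\Big).
\]
Applying $-\frac1n\log(\cdot)$ to both sides and then $\liminf_{n\to\infty}$, the term $\frac1n\log(M-1)$ vanishes because $M$ is fixed, which is exactly the claimed inequality.

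This argument is essentially routine; the points that deserve care are (i) verifying that the $M$-ary maximum-likelihood error event is contained in the union of the pairwise ``wrong-way'' events, together with a consistent convention for ties, and (ii) noting that the displayed bounds hold for each finite $n$ on its own — the $\liminf$ on the right-hand side of the statement appears only because the minimizing pair $(m_1,m_2)$ may change with $n$, so no delicate interchange of limiting operations is needed.
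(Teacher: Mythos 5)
Your proof is correct, and it is the standard union-bound-plus-Chernoff argument: the paper itself gives no proof of this lemma (it cites it as implicit in \cite[Thm.~2]{berlekamp}), and your route --- pairwise ``wrong-way'' events under ML decoding, the $s$-tilted Markov bound $\p[\cdot]\le e^{-\dc(\cdot,\cdot,s)}$ optimized over $s\in[0,1]$, and absorption of the $\frac{1}{n}\log(M-1)$ term in the limit --- is exactly the argument underlying the cited result. One small caveat: what you actually derive is $\liminf_n -\frac{1}{n}\log\pe \ge \liminf_n \frac{1}{n}\min_{m_1\ne m_2}\dc\big((\vec{y}|m=m_1),(\vec{y}|m=m_2)\big)$, i.e.\ with a $\frac{1}{n}$ normalization on the right-hand side; this is the intended reading of the lemma (and the one used later, cf.\ the $\frac{n-|E|B}{n}$ factor in the step following Corollary \ref{cor:series}), so your claim that it matches the displayed inequality ``exactly'' should be understood modulo that missing $\frac{1}{n}$ in the printed statement.
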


\begin{theorem}
    \emph{(Implicit in \cite[Thm.~2]{berlekamp})}
    For any $\edag < \rateone_M$, it holds for all sufficiently large $\ell$ that there exists an $(M,\ell)$-codebook $\calC$ such that for all pairs of codewords ($\vec{x}, \vec{x}')$ in $\calC$, we have
    \begin{equation}
        \dc(\vec{x}, \vec{x}', P^\ell) \geq \ell \cdot \edag.
        \label{eq:berlekamp}
    \end{equation}
    \label{thm:berlekamp}
\end{theorem}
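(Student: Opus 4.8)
The plan is to show that any code which is \emph{operationally} near-optimal automatically satisfies \eqref{eq:berlekamp}, so that no fresh construction is needed; this is the sense in which the result is implicit in the proof of \cite[Thm.~2]{berlekamp}. We may assume $\edag>0$, since $\edag\le0$ is vacuous ($\dc\ge0$ always). Fix $\edag'$ with $\edag<\edag'<\rateone_M$. Since $\rateone_M$ (the $1$-hop exponent of $P$ with $M$ messages) equals the operational quantity $\liminf_{\ell\to\infty}-\tfrac1\ell\log\pe(\ell,M,P)$, for all sufficiently large $\ell$ there is an $(M,\ell)$-codebook $\calC_\ell$ whose maximum-likelihood decoding error probability obeys $\pe(\calC_\ell)\le e^{-\ell\edag'}$. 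I claim this $\calC_\ell$ already satisfies \eqref{eq:berlekamp} for $\ell$ large.

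The engine is a \emph{converse to the Chernoff bound}. For any two distinct codewords $\vec x,\vec x'$ of $\calC_\ell$, discarding the remaining $M-2$ error terms and then comparing the $M$-ary decoder with the optimal binary one gives
\begin{equation}
\pe(\calC_\ell)\ \ge\ \tfrac1M\big(\p(\mathrm{err}\mid\vec x)+\p(\mathrm{err}\mid\vec x')\big)\ \ge\ \tfrac2M\,P^{\mathrm{bin}}_{\mathrm e}(\vec x,\vec x'),
\end{equation}
with $P^{\mathrm{bin}}_{\mathrm e}(\vec x,\vec x')=\tfrac12\sum_{\vec y}\min\{P^\ell_{\vec x}(\vec y),P^\ell_{\vec x'}(\vec y)\}$ the minimum equal-prior Bayes error for testing $P^\ell_{\vec x}$ against $P^\ell_{\vec x'}$. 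What I then need is the matching lower bound $P^{\mathrm{bin}}_{\mathrm e}(\vec x,\vec x')\ge c_1\exp\!\big(-\dc(\vec x,\vec x',P^\ell)-o(\ell)\big)$. Writing $\mu(s)=\sum_{i=1}^\ell\log\sum_yP(y|x_i)^{1-s}P(y|x'_i)^s$, so $\mu$ is convex with $\dc(\vec x,\vec x',P^\ell,s)=-\mu(s)$, in the regular case the minimiser $s^\star$ of $\mu$ on $[0,1]$ is interior; tilting to the product law $Q_{s^\star}\propto\prod_iP(\cdot|x_i)^{1-s^\star}P(\cdot|x'_i)^{s^\star}$, one checks that on $\{|L|\le t\}$, where $L=\log(P^\ell_{\vec x'}/P^\ell_{\vec x})$, one has $\min\{P^\ell_{\vec x}(\vec y),P^\ell_{\vec x'}(\vec y)\}\ge e^{-\dc(\vec x,\vec x',P^\ell)}e^{-t}Q_{s^\star}(\vec y)$ pointwise; since $\e_{Q_{s^\star}}[L]=\mu'(s^\star)=0$ and $\mathrm{Var}_{Q_{s^\star}}[L]=\mu''(s^\star)\le c_P\ell$ for a channel constant $c_P$, Chebyshev with $t=2\sqrt{c_P\ell}$ gives $P^{\mathrm{bin}}_{\mathrm e}(\vec x,\vec x')\ge\tfrac38\exp\!\big(-\dc(\vec x,\vec x',P^\ell)-2\sqrt{c_P\ell}\big)$.

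Combining the two inequalities, every distinct pair in $\calC_\ell$ satisfies $e^{-\ell\edag'}\ge\tfrac{3}{4M}\exp\!\big(-\dc(\vec x,\vec x',P^\ell)-2\sqrt{c_P\ell}\big)$, i.e.\ $\dc(\vec x,\vec x',P^\ell)\ge\ell\edag'-2\sqrt{c_P\ell}-\log\tfrac{4M}{3}$, which exceeds $\ell\edag$ once $\ell$ is large because $\edag'>\edag$; this is \eqref{eq:berlekamp}. The step I expect to be the main obstacle is making the converse-to-Chernoff bound hold \emph{uniformly over all pairs}, in particular in the boundary case where the Chernoff-optimal parameter lies at an endpoint of $[0,1]$---equivalently, where $P^\ell_{\vec x}$ and $P^\ell_{\vec x'}$ have mismatched supports, the extreme being $s^\star\to1$ with an infinite tilted mean (as for a deterministic input versus a spread-out one). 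There the tilting above degenerates and one instead argues directly about $\sum_{\vec y}\min\{P^\ell_{\vec x}(\vec y),P^\ell_{\vec x'}(\vec y)\}$; examples show the bound still holds with $o(\ell)$ loss, and treating this in full generality is exactly the estimate carried out in the proof of \cite[Thm.~2]{berlekamp}, which may simply be invoked. (Two easy subcases round things out: if $\dc(\vec x,\vec x',P^\ell)=\infty$ then \eqref{eq:berlekamp} is immediate, and the case $\dc(\vec x,\vec x',P^\ell)=0$, meaning $P^\ell_{\vec x}=P^\ell_{\vec x'}$, cannot arise in $\calC_\ell$ for large $\ell$ since it would force $\pe(\calC_\ell)\ge1/M$.)
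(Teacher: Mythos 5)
Your proposal is correct and is essentially the intended argument: the paper offers no proof beyond the attribution to \cite[Thm.~2]{berlekamp}, whose content is precisely the pairwise converse you use --- a code with $\pe \le e^{-\ell\edag'}$ (which exists for large $\ell$ by the definition of $\rateone_M$ as a liminf) must have every pairwise Chernoff distance at least $\ell\edag' - o(\ell)$, since otherwise the binary hypothesis-testing lower bound applied to the worst codeword pair would contradict the assumed error probability. Your tilting-plus-Chebyshev derivation of that lower bound is sound in the regular case, and deferring the endpoint/mismatched-support case to \cite{berlekamp} is consistent with how the paper itself invokes the result (one can also handle it directly by centering Chebyshev at the tilted mean and using a one-sided event).
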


Next, we state explicit formulas/bounds for error exponents in the zero-rate and fixed-$M$ settings, respectively.

\begin{theorem}
    \emph{\cite[Thm.~4]{berlekamp}} For any $P$, the zero-rate error exponent is given by
    \begin{equation}
        \erate_P(0) = \max_{q \in \calP(\calX)} \sum_{x, x' \in \calX} q_{x} q_{x'} \db(x, x', P).
    \label{eq:berlekamp_zero_rate}
    \end{equation}
    \label{thm:berlekamp_zero_rate}
\end{theorem}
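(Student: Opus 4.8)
This is the classical zero-rate exponent formula of Shannon, Gallager, and Berlekamp, and the plan is to establish it by proving the two matching inequalities $\erate_P(0) \ge C$ and $\erate_P(0) \le C$, where $C := \max_{q \in \calP(\calX)} \sum_{x,x'} q_x q_{x'} \db(x,x',P)$ denotes the right-hand side; fix a maximizer $q^\star \in \argmax_{q} \sum_{x,x'} q_x q_{x'} \db(x,x',P)$.

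\emph{Lower bound.} It suffices to show that for every $\epsilon > 0$ we have $\erate_P(R) \ge C - \epsilon$ for all sufficiently small $R > 0$, since then $\erate_P(0) = \lim_{R \to 0^+} \erate_P(R) \ge C$. To this end, take a random codebook with $\lceil e^{nR} \rceil$ codewords, each having i.i.d.\ $q^\star$ coordinates. For a fixed pair of codewords the joint empirical composition concentrates around $q^\star \otimes q^\star$, and a standard method-of-types estimate bounds the probability of an $\ell_1$-deviation larger than $\delta$ by $e^{-n \gamma(\delta)}$ with $\gamma(\delta) > 0$; as there are fewer than $e^{2nR}$ pairs, a union bound shows that when $R < \gamma(\delta)/2$, with probability tending to $1$ the random codebook has every pair $(\vec x, \vec x')$ obeying $\db(\vec x, \vec x', P^n) \ge n(C - \delta')$, where $\delta' \to 0$ as $\delta \to 0$ (using additivity of $\db$ over coordinates, i.e.\ Lemma~\ref{lem:iid} at $s = \tfrac12$). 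Bounding each pairwise error probability by the Bhattacharyya bound $e^{-\db(\vec x, \vec x', P^n)}$ and taking a union bound over the at most $e^{nR}$ competing messages gives $\pe \le e^{nR} e^{-n(C - \delta')}$, hence $\erate_P(R) \ge C - \delta' - R$ whenever $R < \gamma(\delta)/2$. Given $\epsilon$, choosing $\delta$ with $\delta' \le \epsilon/2$ and then $R \le \min\{\epsilon/2, \gamma(\delta)/2\}$ finishes this direction.

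\emph{Upper bound.} I would first reduce to a fixed number of messages: since $\pe(n, M, P)$ is nondecreasing in $M$ and $\lceil e^{nR} \rceil \ge M$ for all large $n$, we have $\erate_P(R) \le \rate_{M,P}$ for every fixed $M$, hence $\erate_P(0) \le \inf_{M \ge 2} \rate_{M,P}$; as $\inf_{M \ge 2} \tfrac{M}{M-1} C = C$, it is enough to prove $\rate_{M,P} \le \tfrac{M}{M-1} C$ for each $M$. Given a codebook $\vec x^{(1)}, \dots, \vec x^{(M)}$ of block length $n$, for each pair $(m_1, m_2)$ one has $M \pe \ge \p[\hat m \ne m_1 \mid m_1] + \p[\hat m \ne m_2 \mid m_2]$, which is at least the minimum total error of the binary hypothesis test between $\vec x^{(m_1)}$ and $\vec x^{(m_2)}$, namely $\sum_{\vec y} \min\big(P^n(\vec y \mid \vec x^{(m_1)}), P^n(\vec y \mid \vec x^{(m_2)})\big)$, and this in turn is at least $c_n\, e^{-\dc(\vec x^{(m_1)}, \vec x^{(m_2)}, P^n)}$ with $c_n$ subexponential, by a tilting/Bahadur--Rao estimate. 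Consequently $\rate_{M,P} \le \tfrac1n \min_{m_1 \ne m_2} \dc(\vec x^{(m_1)}, \vec x^{(m_2)}, P^n) + o(1)$, and it remains to show that some pair has $\tfrac1n \dc(\vec x^{(m_1)}, \vec x^{(m_2)}, P^n) \le \tfrac{M}{M-1} C + o(1)$. The $s = \tfrac12$ part is handled by a column-by-column average: letting $\mu_i \in \calP(\calX)$ be the empirical composition of coordinate $i$ across the $M$ codewords, we have $\sum_{x,x'} \mu_i(x) \mu_i(x') \db(x,x',P) \le C$ for each $i$ (the diagonal vanishing because $\db(x,x,P) = 0$), so averaging $\db(\vec x^{(m_1)}, \vec x^{(m_2)}, P^n) = \sum_i \db(x^{(m_1)}_i, x^{(m_2)}_i, P)$ over the $M(M-1)$ ordered pairs gives $\min_{m_1 \ne m_2} \db(\vec x^{(m_1)}, \vec x^{(m_2)}, P^n) \le \tfrac{M}{M-1} n C$.

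The step I expect to be the main obstacle is passing from this Bhattacharyya estimate to a bound on the full Chernoff divergence. In general one only has $\db(\vec x, \vec x', P^n) \le \dc(\vec x, \vec x', P^n) \le 2\, \db(\vec x, \vec x', P^n)$ --- the right-hand inequality holding because $s \mapsto \dc(\vec x, \vec x', P^n, s) = \sum_i \dc(x_i, x'_i, P, s)$ is concave on $[0,1]$ and vanishes at $s \in \{0, 1\}$, so its maximum is at most twice its value at $s = \tfrac12$ --- which would only give $\rate_{M,P} \le \tfrac{2M}{M-1} C$. To recover the exact constant I would instead try to exhibit a pair whose pairwise \emph{joint} composition is (approximately) symmetric, since then $\dc$ and $\db$ coincide on that pair. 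A natural route is a Ramsey-type extraction: partition the $\binom{M}{2}$ pairwise joint compositions into finitely many $\delta$-cells and pass to a large monochromatic sub-collection of codewords; since the joint composition of an ordered pair and that of its reverse are transposes of each other, within such a sub-collection the common composition is forced to be nearly symmetric (possibly after a further passage to a transitive sub-tournament to fix orientations), while the same column-averaging forces its Bhattacharyya value to be at most $C + o(1)$. Feeding this back into the binary-testing converse yields $\rate_{M,P} \le \tfrac{M}{M-1} C + o(1)$, hence $\erate_P(0) \le C$. This last point is precisely the delicate part of \cite{berlekamp}, which is why we invoke the result rather than reprove it; note that for pairwise reversible channels it is vacuous, since then $\dc = \db$ on every coordinate and the naive averaging already gives $\rate_{M,P} \le \tfrac{M}{M-1} C$.
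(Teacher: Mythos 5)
You should first note that the paper itself offers no proof of Theorem \ref{thm:berlekamp_zero_rate}: it is imported directly from \cite[Thm.~4]{berlekamp}, so deferring the delicate step to that reference, as you ultimately do, is consistent with the paper. Your achievability direction (i.i.d.\ codewords drawn from the maximizing $q^\star$, type concentration of pairwise joint compositions, Bhattacharyya union bound, then $R \to 0$) is the standard argument and is fine.

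The converse sketch, however, has a concrete problem beyond the difficulty you already flag. Writing $C$ for the right-hand side of \eqref{eq:berlekamp_zero_rate}, the intermediate target ``it is enough to prove $\rate_{M,P} \le \frac{M}{M-1} C$ for each $M$'' is false for small $M$: for the Z-channel with $P(0|0)=1$ and $P(0|1)=q$, one has $\rate_{2,P} = \max_{x,x'} \dc(x,x',P) = \log\frac1q$ while $C = \frac14\log\frac1q$, so $\rate_{2,P} = 4C > 2C$ (this also shows the factor $4$ in Lemma \ref{lem:4approx} is not an artifact). The zero-rate converse only requires an asymptotic-in-$M$ statement of the form $\rate_{M,P} \le C + o_M(1)$, and your own mechanism implicitly needs $M$ large, so the reduction must be phrased that way. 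More importantly, the Ramsey/symmetrization step is a restatement of the difficulty rather than an argument: extracting a sub-collection in which every ordered pair has approximately the same joint composition $\pi$ does not by itself force $\pi$ to be nearly symmetric --- the transpose relation only says the reversed pair has composition $\pi^{T}$, and for two codewords (e.g., $0^n$ and $1^n$ over the Z-channel) the composition is maximally asymmetric with $\dc = 2\db$, which is exactly why $M=2$ is governed by Chernoff rather than Bhattacharyya distance. Whether and how a large uniform sub-collection closes the $\dc$-versus-$\db$ gap is precisely the content of the converse in \cite{berlekamp}, which proceeds differently (via tilted lower bounds on the two error probabilities of each pair, with the tilt chosen per pair, combined with a combinatorial argument over the codewords' error probabilities), not by forcing symmetric pairwise compositions. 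So as a standalone proof the proposal is incomplete at exactly the step that matters; as a citation it coincides with the paper. (A minor slip: $\dc(\cdot,\cdot,P^n,s)$ need not vanish at $s\in\{0,1\}$ when output supports differ; nonnegativity of $\dc$ is what yields $\dc \le 2\db$, as in the proof of Lemma \ref{lem:4approx}.)
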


\begin{lemma}
    \emph{\cite[Thm.~3]{berlekamp}} For any $(M,P)$, we have
    \begin{equation}
        \rate_{M,P} \geq \frac{2}{M(M-1)} \max_{(x_1, \ldots, x_M) \in \calX^M} \sum_{1\leq m_1 < m_2 \leq M} \db(x_{m_1}, x_{m_2}, P),
        \label{eq:opt_rev}
    \end{equation}
with equality when $P$ is pairwise reversible.
\label{lem:opt_rev}
\end{lemma}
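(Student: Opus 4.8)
\emph{Proof plan.} I would prove the two directions separately: the inequality for arbitrary $P$ via an explicit ``symmetrized'' block codebook combined with Lemma~\ref{lem:chernoff}, and the matching upper bound for pairwise reversible $P$ via the (classical) Chernoff-tightness lower bound on the error probability together with an averaging argument over the $\binom{M}{2}$ pairs of codewords.

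\emph{Achievability (arbitrary $P$).} Fix a tuple $(x_1^\star,\dots,x_M^\star)\in\calX^M$ attaining the maximum on the right-hand side and set $S^\star=\sum_{m_1<m_2}\db(x_{m_1}^\star,x_{m_2}^\star,P)$. For each block length $n$, I would build an $(M,n)$-codebook by concatenating $M!$ sub-blocks, one for each permutation $\pi$ of $\{1,\dots,M\}$, each of length $\lfloor n/M!\rfloor$ (leftover coordinates filled arbitrarily): in the sub-block indexed by $\pi$, codeword $m$ transmits the symbol $x^\star_{\pi(m)}$ throughout. Using the additivity in \eqref{eq:iid1} evaluated at $s=\tfrac12$, every pair $m_1\neq m_2$ satisfies
\[
\dc\big(\vec x^{(m_1)},\vec x^{(m_2)},P^n\big)\ \ge\ \Big\lfloor \tfrac{n}{M!}\Big\rfloor\sum_{\pi}\db\big(x^\star_{\pi(m_1)},x^\star_{\pi(m_2)},P\big)\ =\ \Big\lfloor \tfrac{n}{M!}\Big\rfloor\cdot 2(M-2)!\,S^\star,
\]
where the equality holds because, for fixed $(m_1,m_2)$, each ordered pair of distinct indices arises as $(\pi(m_1),\pi(m_2))$ for exactly $(M-2)!$ permutations, so the inner sum is the same for every pair. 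Since $2(M-2)!/M!=1/\binom{M}{2}$, this lower bound is $\tfrac{n}{\binom{M}{2}}S^\star-O(1)$, so $\min_{m_1\neq m_2}\dc(\vec x^{(m_1)},\vec x^{(m_2)},P^n)\ge \tfrac{n}{\binom{M}{2}}S^\star-O(1)$, and Lemma~\ref{lem:chernoff} applied to this codebook sequence gives $\rate_{M,P}\ge \tfrac{1}{\binom{M}{2}}S^\star$, which is the claimed bound by the optimality of the tuple. Note that a single repetition codeword $\vec x^{(m)}=(x_m^\star,\dots,x_m^\star)$ would only yield the weaker bound with $\min_{m_1<m_2}$ in place of the average; the symmetrization over all $M!$ permutations is what equalizes the pairwise distances.

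\emph{Converse (pairwise reversible $P$).} For an arbitrary $(M,n)$-codebook, restricting the decoder to the two closest codewords reduces the task to binary hypothesis testing, and the Chernoff bound is tight up to a subexponential (polynomial-in-$n$) prefactor; hence $-\tfrac1n\log\pe\le \tfrac1n\min_{m_1\neq m_2}\dc(\vec x^{(m_1)},\vec x^{(m_2)},P^n)+o(1)$, uniformly over codebooks. Pairwise reversibility now collapses the maximization over $s$ in \eqref{eq:iid2}: by Definition~\ref{dfn:pairwise_reversible} each $\dc(x,x',P,\cdot)$ is maximized at $s=\tfrac12$, hence so is the sum $\sum_t\dc(x_t^{(m_1)},x_t^{(m_2)},P,s)$, giving $\dc(\vec x^{(m_1)},\vec x^{(m_2)},P^n)=\sum_{t=1}^n\db(x_t^{(m_1)},x_t^{(m_2)},P)$. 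Bounding the minimum over the $\binom{M}{2}$ pairs by their average, swapping the order of summation, and maximizing coordinatewise over $\calX^M$ then yields
\[
\min_{m_1\neq m_2}\dc\big(\vec x^{(m_1)},\vec x^{(m_2)},P^n\big)\ \le\ \frac{1}{\binom{M}{2}}\sum_{t=1}^n\ \max_{(x_m)\in\calX^M}\ \sum_{m_1<m_2}\db(x_{m_1},x_{m_2},P)\ =\ \frac{n}{\binom{M}{2}}\,S^\star,
\]
so $\rate_{M,P}\le \tfrac{1}{\binom{M}{2}}S^\star$ and equality holds.

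\emph{Main obstacle.} The delicate point is the converse, specifically the Chernoff-tightness lower bound on $\pe$: although classical, it must be stated carefully for non-i.i.d.\ product output distributions (the variance correction is $O(\sqrt n)$, hence harmless in the exponent) and it requires the reduction from the $M$-ary error probability to a binary one; this is precisely the type of estimate underlying \cite{berlekamp}, so in a final write-up I would simply cite it (or the converse half of \cite[Thm.~2]{berlekamp}) rather than reprove it. The achievability is essentially routine once one hits upon the permutation-symmetrized construction above.
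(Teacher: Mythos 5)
Your proof is correct and is essentially a reconstruction of the classical Shannon--Gallager--Berlekamp argument: the paper itself gives no proof of this lemma, citing \cite[Thm.~3]{berlekamp} instead, and your two halves match that source's structure. Your permutation-symmetrized codebook is exactly the $M!$-length construction underlying the paper's Lemma~\ref{lem:db_codebook}, and combining it with Lemma~\ref{lem:chernoff} (understood with the intended $\tfrac1n$ normalization on the right-hand side) gives the achievability as you state; your converse correctly uses pairwise reversibility to pin the optimizing $s$ at $\tfrac12$ coordinatewise and then bounds the minimum pairwise distance by the average. The one ingredient you do not prove---the sub-exponential tightness of the Chernoff bound for binary hypothesis testing between non-identical product distributions (plus the $M$-ary-to-binary reduction, and some care with zero transition probabilities where variances blow up)---is precisely the technical core of \cite{berlekamp}, so deferring to that reference is appropriate and leaves no gap relative to what the paper itself relies on.
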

In accordance with Lemma \ref{lem:opt_rev}, we will use the following definition:
\begin{dfn}
    Define the Bhattacharyya distance based error exponent by
    \begin{equation}
        \tilde{\rate}_{M,P} = \frac{2}{M(M-1)} \max_{(x_1, \ldots, x_M) \in \calX^M} \sum_{1\leq m_1 < m_2 \leq M} \db(x_{m_1}, x_{m_2}, P), \label{eq:rev_rate}
    \end{equation}
    and note that $\tilde{\rate}_{M,P} = \rate_{M,P}$ for pairwise reversible channels.
    \label{dfn:rev_rate}
\end{dfn}

In the proof of Lemma \ref{lem:opt_rev}, \cite{berlekamp} also showed the following:
\begin{lemma}
    For any DMC $P$, there exists an $(M, \ell)$-codebook $(\vec{x}_1, \ldots, \vec{x}_M)$ with $\ell = M!$ such that for any pair $m_1 \neq m_2$, it holds that
    \begin{equation}
        \db(\vec{x}_{m_1}, \vec{x}_{m_2}, P^{\ell}) = \ell \cdot \tilde{\rate}_{M,P}.
    \end{equation}
\label{lem:db_codebook}
\end{lemma}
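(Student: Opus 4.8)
The plan is to exhibit the codebook explicitly, indexing the $\ell = M!$ coordinates by the permutations in the symmetric group $S_M$. Since $\calX$ is finite, the maximum defining $\tilde{\rate}_{M,P}$ in \eqref{eq:rev_rate} is attained; fix a maximizing tuple $(x^\star_1, \ldots, x^\star_M) \in \calX^M$, so that
\[
\sum_{1 \le m_1 < m_2 \le M} \db(x^\star_{m_1}, x^\star_{m_2}, P) = \frac{M(M-1)}{2}\, \tilde{\rate}_{M,P}.
\]
For each message $m \in \{1, \ldots, M\}$, define the codeword $\vec{x}_m$ of length $M!$ by letting its symbol in coordinate $\sigma \in S_M$ be $x^\star_{\sigma(m)}$.

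For a fixed pair $m_1 \ne m_2$, the additivity of the Bhattacharyya distance across a product channel — obtained from Lemma \ref{lem:iid} with $s = \tfrac12$, using $\dc(Q, Q', \tfrac12) = \db(Q, Q')$ — gives
\[
\db(\vec{x}_{m_1}, \vec{x}_{m_2}, P^\ell) = \sum_{\sigma \in S_M} \db\big(x^\star_{\sigma(m_1)}, x^\star_{\sigma(m_2)}, P\big).
\]
A counting argument then finishes the proof: as $\sigma$ ranges over $S_M$, the ordered pair $(\sigma(m_1), \sigma(m_2))$ equals each fixed $(i,j)$ with $i \ne j$ for exactly $(M-2)!$ permutations, so using the symmetry $\db(x^\star_i, x^\star_j, P) = \db(x^\star_j, x^\star_i, P)$ we obtain
\[
\db(\vec{x}_{m_1}, \vec{x}_{m_2}, P^\ell) = (M-2)! \sum_{i \ne j} \db(x^\star_i, x^\star_j, P) = (M-2)! \cdot M(M-1)\, \tilde{\rate}_{M,P} = M!\, \tilde{\rate}_{M,P},
\]
which is $\ell \cdot \tilde{\rate}_{M,P}$, as required.

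I do not anticipate a genuine obstacle here: the construction is essentially forced by the symmetry of the problem, and the only points needing care are that the maximizer is actually attained (guaranteed by finiteness of $\calX$) and that every pair $m_1 \ne m_2$ receives the same value — which holds automatically since $S_M$ acts transitively on ordered pairs of distinct message indices, so no pair can beat the average. The more delicate part of the surrounding material (the corresponding converse, i.e., that this construction is optimal for pairwise reversible channels) is precisely the content of Lemma \ref{lem:opt_rev} quoted from \cite{berlekamp} and is not needed here.
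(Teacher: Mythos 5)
Your construction is correct, and it is essentially the same argument as in the source the paper cites: the paper gives no proof of its own for this lemma, deferring to \cite{berlekamp}, where the classical achievability construction is exactly this permutation codebook of length $M!$ built from a maximizing tuple, with additivity of $\db$ over the product channel and the $(M-2)!$ counting yielding the same (hence average-attaining) pairwise distance for every pair. No gaps; the only minor points (attainment of the maximum over the finite set $\calX^M$, and symmetry of $\db$) are handled correctly.
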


Since we will make use of codes over product channels, the following lemma is useful:
\begin{lemma}
Let $\ell$ be a fixed integer and $\vec{x}_1, \vec{x}_2, \ldots, \vec{x}_M$ be length-$\ell$ codewords, and let $Q$ be the restriction of $P^\ell$ to these inputs. Then $\rate_P \geq \frac{1}{\ell}\rate_Q$. Furthermore, if $P$ is pairwise reversible, then $Q$ is also pairwise reversible.
\label{lem:block_reduction}
\end{lemma}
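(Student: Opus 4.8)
The plan is to establish the two assertions separately, by a blocking argument for the inequality and by tensorization of the Chernoff divergence for the preservation of pairwise reversibility; neither part is technically involved.

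\emph{The rate inequality.} Identify the input alphabet of $Q$ with $\{\vec{x}_1,\dots,\vec{x}_M\}$, so that $Q(\cdot\mid m)=P^\ell(\cdot\mid\vec{x}_m)$. First I would show $\pe(n\ell,M,P)\le\pe(n,M,Q)$ for every $n$: given any length-$n$ codebook and decoder for $Q$ with $M$ messages, replace each input symbol $\vec{x}_m$ occurring in a $Q$-codeword by the corresponding length-$\ell$ block over $\calX_P$, obtaining a length-$n\ell$ codebook over $P$. Since $Q(\cdot\mid m)=P^\ell(\cdot\mid\vec{x}_m)$, passing these $P$-codewords through $P^{n\ell}$ and regrouping the received symbols into $n$ consecutive length-$\ell$ blocks produces, for each message, exactly the conditional output distribution obtained by passing the original $Q$-codeword through $Q^n$; hence applying the original decoder blockwise gives the identical per-message error probability, and taking the infimum over $Q$-protocols yields the claim. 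To pass to an arbitrary block length $n'$, I would use that $\pe(\cdot,M,P)$ is non-increasing in the number of channel uses (simply ignore the surplus uses): with $n=\lfloor n'/\ell\rfloor$ we get $\pe(n',M,P)\le\pe(n\ell,M,P)\le\pe(n,M,Q)$, hence $-\tfrac1{n'}\log\pe(n',M,P)\ge\tfrac{n}{n'}\big({-}\tfrac1n\log\pe(n,M,Q)\big)$, and taking $\liminf$ as $n'\to\infty$ (using $n/n'\to\tfrac1\ell$ and nonnegativity of the exponents) gives $\rate_P\ge\tfrac1\ell\rate_Q$.

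\emph{Preservation of pairwise reversibility.} With the same identification, $\dc(m,m',Q)=\dc(\vec{x}_m,\vec{x}_{m'},P^\ell)$ and $\db(m,m',Q)=\db(\vec{x}_m,\vec{x}_{m'},P^\ell)$. By the tensorization identity \eqref{eq:iid1}, $\dc(\vec{x}_m,\vec{x}_{m'},P^\ell,s)=\sum_{i=1}^\ell\dc\big((\vec{x}_m)_i,(\vec{x}_{m'})_i,P,s\big)$ for every $s\in[0,1]$; evaluating at $s=\tfrac12$ and using $\db(\cdot,\cdot)=\dc(\cdot,\cdot,\tfrac12)$ gives the Bhattacharyya tensorization $\db(\vec{x}_m,\vec{x}_{m'},P^\ell)=\sum_{i=1}^\ell\db\big((\vec{x}_m)_i,(\vec{x}_{m'})_i,P\big)$. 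Now pairwise reversibility of $P$ says exactly that $\dc\big((\vec{x}_m)_i,(\vec{x}_{m'})_i,P,s\big)\le\db\big((\vec{x}_m)_i,(\vec{x}_{m'})_i,P\big)$ for all $s$; summing this over $i$ and then maximizing over $s$ shows $\dc(\vec{x}_m,\vec{x}_{m'},P^\ell)\le\db(\vec{x}_m,\vec{x}_{m'},P^\ell)$. Since $\dc\ge\db$ always, equality holds for every pair, i.e.\ $Q$ is pairwise reversible.

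I expect no serious obstacle. The only points needing a little care are the $\liminf$ bookkeeping in the first part — in particular the passage from arbitrary $n'$ to multiples of $\ell$, which is why monotonicity of $\pe$ in the number of channel uses is invoked — and, in the second part, the elementary observation that an inequality valid for all $s$ automatically bounds $\max_s$, so that no convexity or concavity of $s\mapsto\dc(\cdot,\cdot,P,s)$ is needed beyond the defining property of pairwise reversibility.
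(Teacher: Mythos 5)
Your proposal is correct and follows essentially the same route as the paper: the rate inequality is obtained by expanding each $Q$-codeword symbol into its length-$\ell$ block over $\calX_P$ (the paper states this more tersely, while you spell out the $\liminf$ and monotonicity bookkeeping), and pairwise reversibility of $Q$ is deduced from the tensorization identity of Lemma \ref{lem:iid} exactly as in the paper. No gaps.
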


\begin{proof}
Let $\vec{w}_1, \ldots, \vec{w}_M$ be codewords achieving an error exponent arbitrarily close to $\rate_Q$. Since each character in $\vec{w}_i$ is a string of $\ell$ characters over $X$, we can expand the strings $\vec{w}_1, \ldots, \vec{w}_M$ to form $\vec{w}_1', \ldots, \vec{w}_M'$, achieving the same error probability when used as codewords for $P$. The $\frac1\ell$ scaling factor comes from the factor of $\ell$ difference in the block length.

For the second claim, we apply Lemma \ref{lem:iid} to obtain
\begin{equation}
    \dc(\vec{x}_{m_1}, \vec{x}_{m_2}, Q) = \sum_{i=1}^\ell \dc(\vec{x}_{m_1}(i), \vec{x}_{m_2}(i), P),
\end{equation}
Since $P$ is pairwise reversible, the right-hand side attains its minimum at $s=1/2$, which means that $Q$ is also pairwise reversible.
\end{proof}

The next lemma allows us to break down the error exponent of a product channel into its individual components when $M=2$.  (We note that this result does not extend to $M > 2$.)

\begin{lemma}
    Let $P^{(1)}, P^{(2)},\ldots, P^{(k)}$ be arbitrary channels, and let $\hat{P} = P^{(1)} \times P^{(2)} \times \ldots \times P^{(k)}$ be the product channel. Then
    \begin{equation}
    \rate_{2,\hat{P}} \leq \sum_i \rate_{2,P^{(i)}}.
    \end{equation}
    \label{lem:parallel}
\end{lemma}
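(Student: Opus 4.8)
The plan is to first pin down the two-message error exponent of a single DMC in single-letter form, namely $\rate_{2,P} = \max_{x,x'\in\calX_P}\dc(x,x',P)$, and then to exploit the fact that for a product channel the Chernoff divergence at a \emph{fixed} parameter $s$ is additive across the components, so that pulling the maximization over $s$ outside the sum can only increase the value. For the identity $\rate_{2,P}=\max_{x,x'}\dc(x,x',P)$, the ``$\ge$'' direction follows by feeding the two constant codewords $(x_\star,\dots,x_\star)$ and $(x'_\star,\dots,x'_\star)$ into Lemma \ref{lem:chernoff}, where $(x_\star,x'_\star)$ attains $\max_{x,x'}\dc(x,x',P)$: by Lemma \ref{lem:iid} the induced pair of output distributions has Chernoff divergence exactly $n\,\dc(x_\star,x'_\star,P)$. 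The ``$\le$'' direction follows from Theorem \ref{thm:berlekamp}: for every $\edag<\rate_{2,P}$ and all large $\ell$ there is a length-$\ell$ codebook $\{\vec x_1,\vec x_2\}$ with $\dc(\vec x_1,\vec x_2,P^\ell)\ge\ell\,\edag$, while Lemma \ref{lem:iid} together with the elementary bound $\max_s\sum_i a_i(s)\le\sum_i\max_s a_i(s)$ gives $\dc(\vec x_1,\vec x_2,P^\ell)\le\ell\max_{x,x'}\dc(x,x',P)$ for \emph{every} pair of length-$\ell$ sequences; letting $\edag\uparrow\rate_{2,P}$ yields $\rate_{2,P}\le\max_{x,x'}\dc(x,x',P)$.

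Granting this characterization, I would finish as follows. An input to $\hat P$ is a tuple $\hat x=(x^{(1)},\dots,x^{(k)})$ with $x^{(i)}\in\calX_{P^{(i)}}$, and since $\hat P(\hat y\mid\hat x)=\prod_i P^{(i)}(y^{(i)}\mid x^{(i)})$ the sum defining $\dc$ in \eqref{eq:dc} factorizes, giving the verbatim analogue of Lemma \ref{lem:iid} for a product of distinct channels,
\[
\dc(\hat x_1,\hat x_2,\hat P,s)=\sum_{i=1}^k \dc\big(x_1^{(i)},x_2^{(i)},P^{(i)},s\big).
\]
Therefore
\begin{align*}
\rate_{2,\hat P}
&= \max_{\hat x_1,\hat x_2}\;\max_{0\le s\le 1}\sum_{i=1}^k \dc\big(x_1^{(i)},x_2^{(i)},P^{(i)},s\big) \\
&\le \max_{\hat x_1,\hat x_2}\sum_{i=1}^k \max_{0\le s\le 1}\dc\big(x_1^{(i)},x_2^{(i)},P^{(i)},s\big) \\
&= \sum_{i=1}^k \max_{x,x'}\dc\big(x,x',P^{(i)}\big)
= \sum_{i=1}^k \rate_{2,P^{(i)}},
\end{align*}
where the first line is the single-letter characterization applied to $\hat P$, the inequality is $\max_s\sum_i\le\sum_i\max_s$, and the final maximization decouples over $i$ because the $i$-th summand depends only on $(x_1^{(i)},x_2^{(i)})$.

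The main obstacle is really the ``$\le$'' half of the single-letter characterization, i.e., that no length-$n$ two-codeword code can beat the best single-letter Chernoff exponent; it is not covered by the simple Chernoff-bound achievability of Lemma \ref{lem:chernoff}, but (somewhat conveniently) it is exactly what Theorem \ref{thm:berlekamp} supplies. If instead one only invoked the Bhattacharyya bound, the identity $\db(\hat x_1,\hat x_2,\hat P)=\sum_i\db(x_1^{(i)},x_2^{(i)},P^{(i)})$ would still go through cleanly (there is no $\max_s$ to push around), but it would yield only the weaker factor-$2$ statement $\rate_{2,\hat P}\le 2\sum_i\rate_{2,P^{(i)}}$. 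Finally, the whole argument is special to $M=2$: for $M\ge3$ the relevant quantity is a minimum over pairwise divergences (in fact an averaged/expurgated object), the step ``pull $\max_s$ through the sum'' is no longer lossless, and longer blocks can strictly help, so the lemma genuinely fails — consistent with the parenthetical remark in its statement.
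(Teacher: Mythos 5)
Your proof is correct and follows essentially the same route as the paper: the single-letter characterization $\rate_{2,P}=\max_{x,x'}\dc(x,x',P)$ (which the paper simply cites as \eqref{eq:two_codewords} from \cite[Eq.~(1.19)]{berlekamp}), followed by tensorization of $\dc(\cdot,\cdot,\cdot,s)$ via Lemma \ref{lem:iid} and the interchange $\max_s\sum_i\le\sum_i\max_s$. The only difference is that you re-derive \eqref{eq:two_codewords} from Lemma \ref{lem:chernoff} and Theorem \ref{thm:berlekamp} rather than citing it, which is a valid (and self-contained) substitute for the paper's citation.
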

\begin{proof}
    For any DMC, the error exponent for $M=2$ is given as follows \cite[Eq.~(1.19)]{berlekamp}:
    \begin{equation}
    \rate_{2,P} = \max_{x_1, x_2 \in \calX_P} \dc(x_1, x_2, P).
    \label{eq:two_codewords}
    \end{equation}
    Accordingly, we choose $\vec{x}_1, \vec{x}_2 \in \prod_i {\calX_{P^{(i)}}}$ attaining the maximum in
    \begin{equation}
    \rate_{2,\hat{P}} = \max_{\vec{x}_1, \vec{x}_2} \dc(\vec{x}_1, \vec{x}_2, \hat{P}),
    \end{equation}
    and let $\vec{x}_1(i)$ and $\vec{x}_2(i)$ represent the $i$-th component of $\vec{x}_1(i)$ and $\vec{x}_2(i)$ respectively.  Then,
    \begin{eqnarray}
     \dc(\vec{x}_1, \vec{x}_2, P) & = & \max_{0\leq s \leq 1}\dc(\vec{x}_1, \vec{x}_2, P, s) \\
     & \stackrel{Lem.~\ref{lem:iid}}{=} & \max_{0\leq s \leq 1}\sum_i \dc(\vec{x}_1(i), \vec{x}_2(i), P^{(i)}, s) \\
     & \leq & \sum_i \max_{0\leq s \leq 1} \dc(\vec{x}_1(i), \vec{x}_2(i), P^{(i)}, s)\\
     & \stackrel{\eqref{eq:two_codewords}}{\leq} & \sum_{i} \rate_{2, P^{(i)}}
    \end{eqnarray}
    as required.
\end{proof}

The following result shows that we have equality in Lemma \ref{lem:parallel} if we restrict all $P^{(i)}$ to be pairwise reversible:

\begin{lemma}
    Let $P^{(1)}, P^{(2)},\ldots, P^{(k)}$ be pairwise reversible channels and let $\hat{P} = P^{(1)} \times P^{(2)} \times \ldots \times P^{(k)}$ be the product channel. Then
    \begin{equation}
    \rate_{M,\hat{P}} = \sum_i \rate_{M,P^{(i)}}.
    \end{equation}
    \label{lem:prod_parallel_reversible}
\end{lemma}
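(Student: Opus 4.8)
The plan is to reduce everything to Bhattacharyya distances via the pairwise-reversibility formula of Lemma~\ref{lem:opt_rev}, and then exploit the fact that the Bhattacharyya distance tensorizes \emph{exactly} (it is the $s=\tfrac12$ specialization of Lemma~\ref{lem:iid}) together with the product structure of the input alphabet of $\hat P$. Since Lemma~\ref{lem:parallel} already gives the ``$\leq$'' direction only for $M=2$, the point is to establish the identity for all $M$ at once by showing the relevant maximization decouples across the product factors.

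First I would check that $\hat P$ is itself pairwise reversible: for any pair of input sequences $\vec x,\vec x'$, Lemma~\ref{lem:iid} gives $\dc(\vec x,\vec x',\hat P,s)=\sum_i \dc(\vec x(i),\vec x'(i),P^{(i)},s)$, and since each summand is maximized at $s=\tfrac12$ by hypothesis, so is the sum. Consequently, by Definition~\ref{dfn:rev_rate}, $\rate_{M,\hat P}=\tilde\rate_{M,\hat P}$ and $\rate_{M,P^{(i)}}=\tilde\rate_{M,P^{(i)}}$ for every $i$, so it suffices to prove the claimed identity with $\tilde\rate$ in place of $\rate$ everywhere.

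Next I would expand $\tilde\rate_{M,\hat P}$ using \eqref{eq:rev_rate}. For any codeword tuple $(\vec x_1,\dotsc,\vec x_M)$ over $\prod_i \calX_{P^{(i)}}$, the $s=\tfrac12$ case of Lemma~\ref{lem:iid} gives $\db(\vec x_{m_1},\vec x_{m_2},\hat P)=\sum_i \db(\vec x_{m_1}(i),\vec x_{m_2}(i),P^{(i)})$, hence
\[
\sum_{m_1<m_2}\db(\vec x_{m_1},\vec x_{m_2},\hat P)=\sum_i \sum_{m_1<m_2}\db\big(\vec x_{m_1}(i),\vec x_{m_2}(i),P^{(i)}\big).
\]
Because choosing $(\vec x_1,\dotsc,\vec x_M)$ amounts to choosing, for each coordinate $i$ independently, a tuple in $\calX_{P^{(i)}}^M$, and the right-hand side is a sum of terms each depending only on its own coordinate, the maximization separates:
\[
\max_{(\vec x_1,\dotsc,\vec x_M)}\sum_{m_1<m_2}\db(\vec x_{m_1},\vec x_{m_2},\hat P)=\sum_i \max_{(x_1,\dotsc,x_M)\in\calX_{P^{(i)}}^M}\sum_{m_1<m_2}\db(x_{m_1},x_{m_2},P^{(i)}).
\]
Multiplying by $\tfrac{2}{M(M-1)}$ and invoking Definition~\ref{dfn:rev_rate} on both sides gives $\tilde\rate_{M,\hat P}=\sum_i\tilde\rate_{M,P^{(i)}}$, which with the first step completes the proof.

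There is no substantial obstacle; the only delicate point is that the decoupling of the maximum is an \emph{equality} rather than merely an inequality. This is exactly where the argument uses both the product structure of the alphabet and the fact that $\db$ is additive (not just subadditive) under tensorization — the latter being precisely the extra strength that pairwise reversibility provides over the general bound in Lemma~\ref{lem:parallel}, and the reason the identity now holds for all $M$ rather than only $M=2$.
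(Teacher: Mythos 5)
Your proposal is correct and follows essentially the same route as the paper: establish pairwise reversibility of $\hat P$ via tensorization of $\dc$, reduce $\rate$ to the Bhattacharyya expression of Lemma~\ref{lem:opt_rev}, and use exact additivity of $\db$ together with the product structure of the input alphabet to split the optimization across factors. The only cosmetic differences are that the paper proves the two inequalities separately (extracting an optimal tuple for $\hat P$ and then assembling coordinatewise optimizers), whereas you observe the maximization decouples exactly in one step, and that the paper cites Lemma~\ref{lem:block_reduction} for the pairwise reversibility of $\hat P$ while you argue it directly from Lemma~\ref{lem:iid}.
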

\begin{proof}
    We first note from the second part of Lemma \ref{lem:block_reduction} that $\hat{P}$ is pairwise reversible.  As a result, by Lemma \ref{lem:opt_rev}, there exists some $(\vec{x}_1, \vec{x}_2, \ldots, \vec{x}_M)$ such that
    \begin{equation}
    \rate_{\hat{P}} = \frac{2}{M(M-1)} \sum_{1\leq m_1 < m_2 \leq M} \db(\vec{x}_{m_1}, \vec{x}_{m_2}, \hat{P}).
    \end{equation}
    Moreover, by Lemma \ref{lem:iid}, we can break this  down into the Chernoff exponents for the individual channels:
    \begin{equation}
    \rate_{\hat{P}} = \frac{2}{M(M-1)} \sum_{1\leq m_1 < m_2 \leq M} \sum_i \db(\vec{x}_{m_1}(i), \vec{x}_{m_2}(i), P^{(i)}) \stackrel{Lem.~\ref{lem:opt_rev}}{\leq} \sum_i \rate_{P^{(i)}}.
    \end{equation}
    To prove the inequality in the other direction, we simply reverse the entire argument. For each channel $P^{(i)}$, choose $(x_1^{(i)}, \ldots, x_M^{(i)})$ to attain equality in \eqref{eq:opt_rev}. Defining $\vec{x}_m = (x_m^{(i)})_{i=1}^k$ then gives
    \begin{equation}
    \rate_{\hat{P}} \geq \frac{2}{M(M-1)} \sum_{1\leq m_1 \leq m_2 \leq M} \db(\vec{x}_{m_1}, \vec{x}_{m_2}, \hat{P}) = \sum_i \rate_{P^{(i)}},
    \end{equation}
    where both steps are based on \eqref{eq:opt_rev} holding with equality (first for $\hat{P}$, and then for the individual $\rate_{P^{(i)}}$).
\end{proof}

\subsection{Feedback Error Exponent}

In the well-studied problem of 1-hop communication with feedback, the sender is given access to all of the previous channel outputs before sending the next channel input.  For a discrete memoryless channel $P$, we define the minimum probability of error with feedback $\pe^f(n, M, P)$, and define the error exponent of the channel with $M$ messages via
\begin{equation}
\feedbackrate_{M,P} = \liminf_{T\rightarrow \infty} -\frac{1}{n} \pe^f(n,M,P).
\end{equation}

Clearly, $\feedbackrate_{M,P} \geq \rate_{M,P}$. In general, feedback error exponents may be strictly higher than non-feedback ones, even for the binary symmetric channel \cite{berlekamp_feedback}. However, the two error exponents match for arbitrary DMCs when $M=2$:
\begin{theorem}
    {\em \cite{berlekamp_feedback, baris_feedback}} For any DMC $P$, feedback does not improve the error exponent when $M=2$, i.e.
    \begin{equation}
    \feedbackrate_{2,P} = \rate_{2,P}.
    \end{equation}
    \label{thm:feedback_m_2}
\end{theorem}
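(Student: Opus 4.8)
Since every non-feedback code is a (degenerate) feedback code, $\pe^f(n,2,P)\le\pe(n,2,P)$, which already gives $\feedbackrate_{2,P}\ge\rate_{2,P}$; the content is the reverse inequality. The plan is a change-of-measure argument at the level of the two induced output distributions. Fix a length-$n$ feedback code with message set $\{1,2\}$ and (deterministic, WLOG) encoders $x_t^{(m)}\colon\calY^{t-1}\to\calX$, and let $\mu,\nu$ be the distributions on $\calY^n$ induced by messages $1$ and $2$. Feedback makes these non-product, but they still factor causally: $\mu(\vec{y})=\prod_{t=1}^n P\big(y_t\,\big|\,x_t^{(1)}(\vec{y}^{\,t-1})\big)$ and similarly for $\nu$. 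Replacing the decoder by the optimal likelihood-ratio test only lowers the error, so $\pe^f\ge\tfrac12\sum_{\vec{y}}\min(\mu(\vec{y}),\nu(\vec{y}))$, and it remains to show that for every code this sum is at least $e^{-n\rate_{2,P}-o(n)}$.

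To this end, for each $t$ and history $\vec{y}^{\,t-1}$ let $s_t^\star\in[0,1]$ be the Chernoff-optimal parameter for the input pair $\big(x_t^{(1)},x_t^{(2)}\big)$, and define a reference measure $\rho$ on $\calY^n$ causally by tilting each step, $\rho(y_t\,|\,\vec{y}^{\,t-1})\propto P(y_t|x_t^{(1)})^{1-s_t^\star}P(y_t|x_t^{(2)})^{s_t^\star}$, with normalization $e^{-\dc_t^\star}$ where $\dc_t^\star:=\dc\big(x_t^{(1)},x_t^{(2)},P\big)$. Writing $L_t=\log\frac{P(y_t|x_t^{(2)})}{P(y_t|x_t^{(1)})}$, $L=\sum_t L_t$, $S=\sum_t s_t^\star L_t$, and $D=\sum_t\dc_t^\star$, a direct computation gives $\frac{d\mu}{d\rho}=e^{-D-S}$ and $\frac{d\nu}{d\rho}=e^{-D+L-S}$, hence
\[
\sum_{\vec{y}}\min(\mu,\nu)=\e_\rho\!\left[e^{-D-S}\min(1,e^{L})\right]=\e_\rho\!\left[e^{-D-S-L^-}\right],\qquad L^-:=\max(0,-L).
\]
Two observations close the argument. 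First --- and this is the only place $M=2$ is used --- by \eqref{eq:two_codewords}, $\dc_t^\star=\dc(x_t^{(1)},x_t^{(2)},P)\le\max_{x_1,x_2}\dc(x_1,x_2,P)=\rate_{2,P}$, so $D\le n\rate_{2,P}$ pointwise. Second, $s\mapsto\dc(\cdot,\cdot,P,s)$ is concave and vanishes at $s\in\{0,1\}$, so whenever $\dc_t^\star>0$ the optimizer $s_t^\star$ is interior and its first-order condition forces $\e_\rho[L_t\,|\,\vec{y}^{\,t-1}]=0$ (and $\dc_t^\star=0$ means the two conditional laws coincide, so $L_t\equiv0$ under $\rho$). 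Thus under $\rho$, $L$ is a martingale with conditionally mean-zero increments whose conditional second moments are bounded by a constant $V=V(P)$; hence $\e_\rho[S]=0$ and $\e_\rho[L^-]=\tfrac12\e_\rho|L|\le\tfrac12\sqrt{nV}$. By Jensen's inequality,
\[
\sum_{\vec{y}}\min(\mu,\nu)\ge e^{-n\rate_{2,P}}\,\e_\rho\big[e^{-S-L^-}\big]\ge e^{-n\rate_{2,P}}e^{-\e_\rho[S+L^-]}\ge e^{-n\rate_{2,P}-\frac12\sqrt{nV}},
\]
so $-\tfrac1n\log\pe^f(n,2,P)\le\rate_{2,P}+O(n^{-1/2})$, and taking $\liminf_{n\to\infty}$ gives $\feedbackrate_{2,P}\le\rate_{2,P}$.

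The main obstacle is controlling the $o(n)$ slack. Without feedback, $\mu$ and $\nu$ are product measures and this correction is the familiar $O(\sqrt n)$ variance term in the Chernoff converse; with feedback one must verify that the martingale structure of $L$ survives under the \emph{history-dependent} tilt $\rho$, i.e.\ that choosing the Chernoff parameter $s_t^\star$ step-by-step (rather than a single global $s$) still annihilates the conditional mean of each log-likelihood increment. The remaining points --- concavity and the first-order characterization of $s\mapsto\dc(\cdot,\cdot,P,s)$, absolute continuity of $\mu$ and $\nu$ with respect to $\rho$ (which holds because an interior tilt weight vanishes exactly when one of the two transition probabilities does), and a uniform bound on the tilted conditional variances of the increments --- are routine for a fixed DMC but need care in the presence of zero transition probabilities.
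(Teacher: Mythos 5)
The paper itself offers no proof of Theorem \ref{thm:feedback_m_2} --- it is imported from \cite{berlekamp_feedback,baris_feedback} --- so your argument must be judged on its own merits. The overall strategy is attractive and largely sound: reduce to lower-bounding $\sum_{\vec y}\min(\mu,\nu)$, tilt step-by-step with a history-dependent Chernoff parameter, and use the stationarity condition to make the log-likelihood increments conditionally mean-zero under the tilted measure $\rho$. When every pair of input distributions of $P$ has equal support (in particular when all transition probabilities are positive), your pieces are correct: $d\mu/d\rho=e^{-D-S}$ and $d\nu/d\rho=e^{-D+L-S}$, $D\le n\rate_{2,P}$ pointwise via \eqref{eq:two_codewords} (the one place $M=2$ enters), interior stationarity gives $\e_\rho[L_t\mid\vec y^{\,t-1}]=0$, and the martingale/Jensen steps give an $O(\sqrt n)$ slack; this is essentially a feedback-adapted Shannon--Gallager--Berlekamp two-codeword converse.

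The genuine gap is the boundary case, and it is not the routine matter your closing paragraph suggests. The claim that $\dc(x,x',P,\cdot)$ vanishes at $s\in\{0,1\}$ (hence $\dc_t^\star>0$ forces an interior optimizer) fails whenever the two conditional laws have unequal supports: as $s\to0^+$ the divergence tends to $-\log P_x(\mathrm{supp}(P_{x'}))$, which can be positive. Concretely, for the Z-channel with $P(0|0)=1$, $P(0|1)=p$, $P(1|1)=1-p$, one has $\sum_y P(y|0)^{1-s}P(y|1)^s=p^s$, so $s_t^\star=1$ even though $\dc=-\log p>0$. At such a step the tilted conditional law is (a restriction of) $P_{x'}$, the conditional mean of $L_t$ is a strictly negative constant rather than zero, and $L$ and $S$ acquire drifts linear in $n$; the chain $\e_\rho[S]=0$, $\e_\rho[L^-]\le\tfrac12\sqrt{nV}$ collapses, and the Jensen step can lose a constant factor in the exponent --- e.g., tilting with $P(\cdot|1)^{\otimes n}$ in the Z-channel example lower-bounds $\sum_{\vec y}\min(\mu,\nu)$ only by roughly $p^{2n}$ rather than the true $p^n$. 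Since the theorem is asserted for arbitrary DMCs, and channels with zeros (erasure-like, Z-like) are squarely in scope, this case must be handled --- for instance by exploiting the partial cancellation between $s_t^\star L_t$ and its contribution to $L^-$ at boundary steps, or by retreating to $s\in[\epsilon,1-\epsilon]$ and trading the per-step loss in $\dc$ against the induced drift --- and that is exactly the nontrivial analysis that is missing. A smaller point: your absolute-continuity parenthetical is reversed --- with interior $s_t^\star$ the tilt vanishes wherever \emph{either} transition probability does, so $\mu\ll\rho$ can fail under nested supports; this happens to be harmless because $\min(\mu,\nu)$ is supported on $\{\rho>0\}$, but it is a symptom of the same support issue.
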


\subsection{Properties of Chernoff Divergence}

The following lemma concerns the Chernoff divergence of a composite channel:
\begin{lemma}
    {\em \cite[Lemma 7]{multibit}} Let $P_1, P_2$ be channels such that $\calY_{P_1} \subseteq \calX_{P_2}$. For all $x, x' \in \calX_{P_1}$, we have
    \begin{equation}
        \dc(x, x', P_2 \circ P_1, s) \geq \min_{y, y' \in \calY_{P_1}} \big\{ \dc(y, y', P_2, s) -(1-s) \log P_1(y|x) - s \log P_1(y'|x') \big\} - 2 \log |\calY_{P_1}|.
    \end{equation}
    \label{lem:distributive}
\end{lemma}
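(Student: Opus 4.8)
The plan is to argue directly from the definition of the Chernoff divergence, treating the composite channel $P_2 \circ P_1$ through its transition law $(P_2\circ P_1)(z|x) = \sum_{y \in \calY_{P_1}} P_1(y|x)\,P_2(z|y)$. Writing out the definition of $\dc(\cdot,\cdot,\cdot,s)$ for $P_2 \circ P_1$, the quantity $e^{-\dc(x,x',P_2\circ P_1,s)}$ equals
\[
\sum_z \Big(\sum_y P_1(y|x)P_2(z|y)\Big)^{1-s}\Big(\sum_{y'} P_1(y'|x')P_2(z|y')\Big)^{s},
\]
so it suffices to upper bound this double sum by $|\calY_{P_1}|^2 \,\max_{y,y'} \big( P_1(y|x)^{1-s}P_1(y'|x')^{s}\sum_z P_2(z|y)^{1-s}P_2(z|y')^{s}\big)$. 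Indeed, taking $-\log$ of both sides of such a bound yields exactly the claimed inequality, using $-\log\max_{y,y'} = \min_{y,y'}(-\log)$, $-\log|\calY_{P_1}|^2 = -2\log|\calY_{P_1}|$, and
\[
-\log\Big(P_1(y|x)^{1-s}P_1(y'|x')^s \cdot e^{-\dc(y,y',P_2,s)}\Big) = \dc(y,y',P_2,s) - (1-s)\log P_1(y|x) - s\log P_1(y'|x').
\]

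The one idea that matters is to use subadditivity of $t \mapsto t^{\theta}$ on $[0,\infty)$ for $\theta \in [0,1]$, which is immediate from concavity together with $0^\theta = 0$. Since $s \in [0,1]$, both $1-s$ and $s$ lie in $[0,1]$, so for each fixed $z$ we may bound $\big(\sum_y P_1(y|x)P_2(z|y)\big)^{1-s} \le \sum_y \big(P_1(y|x)P_2(z|y)\big)^{1-s}$ and likewise $\big(\sum_{y'} P_1(y'|x')P_2(z|y')\big)^{s} \le \sum_{y'} \big(P_1(y'|x')P_2(z|y')\big)^{s}$. Multiplying the two bounds, summing over $z$, and interchanging the order of summation leaves $\sum_{y,y'} P_1(y|x)^{1-s}P_1(y'|x')^{s}\sum_z P_2(z|y)^{1-s}P_2(z|y')^{s}$, and bounding the $|\calY_{P_1}|^2$ terms of the outer sum by their maximum gives the required inequality. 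Thus the steps, in order, are: (i) expand the definition; (ii) apply the subadditivity inequality to the two inner sums; (iii) interchange the summations and factor out the $P_1$-weights; (iv) replace the sum over $(y,y')$ by $|\calY_{P_1}|^2$ times its largest term; (v) take $-\log$ and rearrange.

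I do not expect a genuine obstacle here; the only subtlety is choosing the right first inequality. The naive alternative --- bounding each inner sum by $|\calY_{P_1}|$ times its largest term \emph{before} raising it to a power --- costs an extra factor of $|\calY_{P_1}|$ and falls short of the stated constant $2\log|\calY_{P_1}|$, whereas subadditivity of $t^\theta$ keeps this factor tight. The remaining points are pure bookkeeping: if $P_1(y|x) = 0$ then (for $s<1$) that $y$ contributes $0$ to the sum and $+\infty$ to the minimand, so it is harmless, and the endpoints $s \in \{0,1\}$ are checked directly under the usual conventions ($0\log 0 = 0$, $t^0 = 1$).
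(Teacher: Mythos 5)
Your proof is correct. The paper itself does not reprove this lemma (it is quoted from \cite[Lemma 7]{multibit}), but your derivation---expanding $e^{-\dc(x,x',P_2\circ P_1,s)}$ as a sum over outputs, applying subadditivity of $t\mapsto t^{\theta}$ for $\theta\in[0,1]$ to each inner sum, interchanging the order of summation, and bounding the sum over $(y,y')$ by $|\calY_{P_1}|^2$ times its largest term---is exactly the standard argument behind the stated bound, and it recovers the constant $2\log|\calY_{P_1}|$ precisely (as you note, the cruder ``max times cardinality'' bound applied before exponentiating would lose an extra $\log|\calY_{P_1}|$). Your treatment of the edge cases ($P_1(y|x)=0$ and $s\in\{0,1\}$) is also fine.
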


The next result relates the error probability of a likelihood ratio test with the Chernoff divergence.  We expect that this result is standard, but we provide a short proof for completeness.

\begin{lemma}
	Let $P$ and $Q$ be two distributions over the same alphabet $\calX$. For each $x\in\calX$, define the likelihood ratio by $L(x) = \frac{P(x)}{Q(x)}$. Then
	\begin{equation}
		-\log \p_{x \sim P} (L(x) \leq L_0) \geq \db(P, Q) - \frac12 \log L_0.
	\end{equation}
	\label{lem:lrt}
\end{lemma}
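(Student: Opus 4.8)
The plan is to bound the probability $\p_{x\sim P}(L(x)\le L_0)$ by a Chernoff-type argument using the parameter $s=\tfrac12$, which is exactly the parameter that produces the Bhattacharyya distance. First I would write, for any $x$ in the event $\{L(x)\le L_0\}$, the inequality $1 \le \sqrt{L_0/L(x)} = \sqrt{L_0}\,\sqrt{Q(x)/P(x)}$, valid whenever $P(x)>0$ (points with $P(x)=0$ contribute nothing to the probability). Then
\begin{equation}
\p_{x\sim P}(L(x)\le L_0) = \sum_{x:\, L(x)\le L_0} P(x) \le \sum_{x:\, L(x)\le L_0} P(x)\sqrt{L_0}\,\sqrt{\frac{Q(x)}{P(x)}} = \sqrt{L_0}\sum_{x:\, L(x)\le L_0}\sqrt{P(x)Q(x)}.
\end{equation}

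Next I would enlarge the sum to run over all $x\in\calX$, using nonnegativity of each term, giving $\p_{x\sim P}(L(x)\le L_0) \le \sqrt{L_0}\sum_{x\in\calX}\sqrt{P(x)Q(x)} = \sqrt{L_0}\,e^{-\db(P,Q)}$ by the definition of $\db$. Taking $-\log$ of both sides yields
\begin{equation}
-\log \p_{x\sim P}(L(x)\le L_0) \ge \db(P,Q) - \tfrac12\log L_0,
\end{equation}
which is the claimed bound. A small edge case worth a remark: if $L_0 < 1$, the bound is still correct (and in fact the event can be empty); if $\p_{x\sim P}(L(x)\le L_0)=0$ the left side is $+\infty$ and the inequality holds trivially. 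One should also note that atoms with $Q(x)=0$ have $L(x)=\infty$, so they are automatically excluded from the event and cause no division-by-zero issue.

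There is essentially no obstacle here — the only thing to be careful about is the handling of zero-probability points (both $P(x)=0$ and $Q(x)=0$ cases) so that the substitution $1\le\sqrt{L_0/L(x)}$ is legitimate on the summation range, and making sure the direction of the inequality is preserved when passing from the restricted sum to the full sum over $\calX$. The choice $s=\tfrac12$ is what makes the bound clean; a general $s$ would give a Chernoff-divergence version, but the statement only needs the Bhattacharyya case, so I would not bother optimizing over $s$.
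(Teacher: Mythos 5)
Your proof is correct and follows essentially the same argument as the paper: both bound $P(x)\le\sqrt{L_0\,P(x)Q(x)}$ on the event $\{L(x)\le L_0\}$, sum, and enlarge the sum to all of $\calX$ to recover $e^{-\db(P,Q)}\sqrt{L_0}$. The handling of zero-probability atoms is a fine addition but does not change the substance.
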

\begin{proof}
	Consider the set
	\begin{equation}
		S = \{ x \,|\, L(x) \leq L_0\}.
	\end{equation}
	For all $x \in S$, $P(x) \leq L_0 \cdot Q(x)$, and therefore
	\begin{equation}
		P(x) = \sqrt{P(x)} \cdot \sqrt{P(x)} \leq \sqrt{P(x)} \cdot \sqrt{L_0 \cdot Q(x)},
	\end{equation}
	which gives
	\begin{equation}
		\p_{x \sim P} (L(x) \leq L_0) = \sum_{x \in S} P(x) \leq \sum_{x \in S} \sqrt{L_0 \cdot P(x) \cdot Q(x)} \leq \exp(-\db(P, Q)) \cdot \sqrt{L_0}.
	\end{equation}
\end{proof}

\section{The Series Case for Theorem \ref{thm:main}}
\label{sec:series}

In this section, we handle the special case in which the nodes are arranged in a series (i.e., a line graph).  That is, the nodes are labeled $\{0,1,\ldots, |E|\}$, and the edges are $\{(j, j+1): 0\leq j < |E|\}$. We refer to the corresponding channels as $P^{(1)}, P^{(2)}, \ldots P^{(|E|)}$, with corresponding 1-hop error exponents $\rate^{(1)}, \rate^{(2)},\ldots \rate^{(|E|)}$. All channels are assumed to be pairwise reversible in this section.

This series setup directly generalizes previous work on the 2-hop setting in \cite{onebit,jog2020teaching,teachlearn,multibit}, particularly our work on the the multi-bit setting \cite{multibit}.  Specifically, Theorem \ref{thm:main} for the series setup generalizes the main result therein by extending it from 2 hops to any constant number of hops.  The protocol that we adopt has some significant differences to \cite{multibit}, which we discuss in Section \ref{sec:prev}.  On the other hand, a key similarity is that we still adopt a block-structured strategy.

\subsection{Sequential Block Transmission}

We momentarily turn our attention to protocols in which nodes transmit blocks to each other \emph{one after the other} without any parallel transmission.  This idea would be highly wasteful if used in a standalone manner, but the idea (in the next subsection) will be to chain many instances of this approach together in parallel so that a negligible fraction of the total block length $n$ is ultimately wasted.

For a series graph, define a {\em sequential block transmission} with block size $B$ as follows:
\begin{itemize}
    \item For each edge $e=(u,v)$, $u$ transmits to $v$ using $B$ channels across $e$;
    \item This process occurs sequentially; transmission over an earlier edge (i.e., closer to the source) must finish before transmission over the next edge commences.
\end{itemize}
In Section \ref{sec:series_proof}, we will prove the following key lemma for this setting:
\begin{lemma}
    Given a series graph, consider a sequential block transmission with block size $B$. Let $\vec{y}$ be the length-$B$ sequence received by the final node $|E|$, and let $|f|$ be the maximum flow (since we are in the series setup, this is equal to $\min_j \rate^{(j)}$). Then there exists a protocol such that for all $m_1 \neq m_2$,
    \begin{equation}
    \db\big((\vec{y}|m=m_1), (\vec{y}|m=m_2)\big) \geq B(|f| - o_B(1)),
    \end{equation}
    where $o_B(1)$ is a term that approaches zero as $B$ increases while $G$ and $M$ remain constant.
    \label{lem:series}
\end{lemma}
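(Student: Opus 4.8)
The plan is to induct on the number of edges $|E|$, propagating a codebook forward along the line one edge at a time. Fix a target exponent $\edag < |f| = \min_j \rate^{(j)}$; it suffices to produce a protocol achieving $\db \ge B(\edag - o_B(1))$. At the source, node $0$ holds the message $m$; it will encode $m$ into a length-$B_0$ codeword and transmit over edge $1$. The key invariant I want to maintain is: after node $j$ has finished its (sequential) transmission over edge $j$, the received sequence $\vec{y}^{(j)}$ at node $j$ satisfies $\db\big((\vec{y}^{(j)}|m_1),(\vec{y}^{(j)}|m_2)\big) \ge B(\edag - o_B(1))$ for all $m_1 \ne m_2$, while using only slightly more than $B$ channel uses on edge $j$. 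Node $j$ then runs a maximum-likelihood (or likelihood-ratio) decoder to form an estimate $\hat m_j$, re-encodes $\hat m_j$ using a fresh good codebook for channel $P^{(j+1)}$, and transmits. The induction combines two effects: the decoding at node $j$ may be wrong (but only with exponentially small probability, controlled by the invariant), and re-encoding/retransmission over $P^{(j+1)}$ preserves a large Bhattacharyya distance by Theorem \ref{thm:berlekamp} applied to that channel.

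The concrete steps are as follows. (1) \emph{Base/encoding step}: using Theorem \ref{thm:berlekamp} (or Lemma \ref{lem:db_codebook} together with Lemma \ref{lem:block_reduction} to boost the block length), for any $\edag < \rate^{(1)}$ choose an $(M,\ell_1)$-codebook over $P^{(1)}$ with pairwise Chernoff divergence $\ge \ell_1 \edag$; repeating the codebook $\lceil B/\ell_1 \rceil$ times and using Lemma \ref{lem:iid} gives a length-$\approx B$ transmission over edge $1$ with pairwise $\db \ge B(\edag - o_B(1))$ at node $1$ (here $\db = \dc$ since $P^{(1)}$ is pairwise reversible, and products of pairwise reversible channels are pairwise reversible by Lemma \ref{lem:block_reduction}). (2) \emph{Inductive/relay step}: suppose node $j$ receives $\vec{y}^{(j)}$ with pairwise $\db \ge B(\edag - o_B(1))$. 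Node $j$ forms $\hat m_j$ by a likelihood-ratio test between the two closest hypotheses; Lemma \ref{lem:lrt} shows $\p(\hat m_j \ne m \mid m) \le M^2 \exp(-B(\edag - o_B(1)))$, i.e.\ the ``error event'' $\calB_j = \{\hat m_j \ne m\}$ is exponentially unlikely. Node $j$ then encodes $\hat m_j$ with a good $P^{(j+1)}$-codebook (again via Theorem \ref{thm:berlekamp}, tiled to length $\approx B$) and transmits. (3) \emph{Distance accounting}: conditioned on the ``good'' event that $\hat m_j = m$, the distribution of $\vec{y}^{(j+1)}$ at node $j{+}1$ is exactly that induced by transmitting codeword $\hat m_j = m$, so on this event the pairwise $\db$ is $\ge B(\edag - o_B(1))$ by construction; the contribution of the bad event $\calB_j$ to the overall distribution has total mass $\le M^2 \exp(-B(\edag - o_B(1)))$, which (by a standard bound on how a small additive perturbation of a distribution changes $\db$ — e.g.\ if $P' = (1-\delta)P + \delta R$ then $\db(P',Q') \ge \db(P,Q) - O(\delta)$-type estimate, or more simply by noting $\sum_x \sqrt{P'(x)Q'(x)} \le \sum_x\sqrt{P(x)Q(x)} + \sqrt{\delta}\,\mathrm{(stuff)}$) only degrades the exponent by an $o_B(1)$ amount. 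Iterating over all $|E|$ edges, the accumulated loss is $|E| \cdot o_B(1) = o_B(1)$ since $|E|$ is fixed, and the total number of channel uses on each edge is $B + o(B)$; rescaling $B$ slightly absorbs this, and taking $\edag \uparrow |f|$ completes the proof.

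The main obstacle, I expect, is step (3): cleanly controlling how the decoding-error event at an intermediate node perturbs the Bhattacharyya distance seen downstream, and making sure these perturbations do not compound badly across the $|E|$ hops. The subtlety is that $\db$ is not simply additive or monotone under mixtures, so one needs a quantitative ``robustness of $\db$ under small total-variation perturbations'' estimate; the natural fix is to split $\vec{y}^{(j)} = $ (good part) $+$ (bad part) and bound $\sum_x \sqrt{P'(x)Q'(x)}$ directly via Cauchy–Schwarz, showing the bad part contributes at most $e^{-B(\edag - o_B(1))/2}$ to the Bhattacharyya coefficient. A secondary technical point is bookkeeping the $o_B(1)$ terms: each hop introduces a tiling-remainder loss of order $(\log\text{-alphabet-size})/B$ or $\ell_j/B$ and a decoding-robustness loss of the same order, and since $|E|, M$, and all alphabet sizes are fixed, the sum over hops remains $o_B(1)$; this is routine but must be stated carefully so that the final bound is uniform. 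An alternative to the likelihood-ratio relay that sidesteps some of this is to have node $j$ forward a \emph{soft} function of $\vec y^{(j)}$ and invoke Lemma \ref{lem:distributive} on the composite channel, but the hard-decision-plus-re-encode route above is cleaner given the pairwise reversibility assumption and the explicit codebooks from \cite{berlekamp}.
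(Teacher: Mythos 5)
There is a genuine gap, and it sits exactly where you anticipated trouble: step (3). The claim that the decoding-error event ``only degrades the exponent by an $o_B(1)$ amount'' is false. If node $j$ makes a hard decision $\hat m_j$ and re-encodes, then under $m=m_1$ the event $\{\hat m_j = m_2\}$ has probability $\delta \approx e^{-B\edag}$, and on that event the downstream output $\vec y^{(j+1)}$ is distributed exactly as the \emph{clean} distribution for message $m_2$. Hence the Bhattacharyya coefficient between the two end-to-end output distributions is at least
\begin{equation}
\sum_{\vec y}\sqrt{\big((1-\delta)A_{1}(\vec y)+\delta A_{2}(\vec y)\big)\,A_{2}(\vec y)\,(1-\delta')}\;\ge\;\sqrt{\delta(1-\delta')},
\end{equation}
where $A_1,A_2$ denote the clean re-encoded output distributions. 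So the per-block distance satisfies $\db\big((\vec y^{(j+1)}|m_1),(\vec y^{(j+1)}|m_2)\big)\le \tfrac{1}{2}B\edag + O(1)$ after even one relay hop: the perturbation enters the Bhattacharyya coefficient at order $\sqrt\delta = e^{-B\edag/2}$, not $O(\delta)$, and it is not a small additive loss but a hard cap at half the target exponent. Since the lemma is precisely a statement about the per-block Bhattacharyya distance (which is what gets tensorized across the $n/B-|E|$ blocks in Corollary \ref{cor:series}), a factor-$2$ loss per block cannot be recovered later, and your invariant $\db \ge B(\edag-o_B(1))$ breaks at the first relay. (Note that the \emph{error probability} of decode-and-forward is fine by a union bound; it is the Bhattacharyya requirement that hard decisions cannot meet.)

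This is exactly why the paper's protocol does not make hard decisions: each relay forwards a quantized \emph{soft} state $(m^{(j)},\ell^{(j)})$, where $\ell^{(j)}$ is a clipped, scaled log-likelihood ratio between the two most likely messages (see \eqref{eq:state_assign}), encoded as the two-valued string $\vec x_{m,\ell}$ so that an unsure relay transmits a codeword that is close in Bhattacharyya distance to both competing hypotheses. The analysis (Lemma \ref{lem:transition}) is an induction coupling the probability of reaching a state with the distance of what is sent from that state, via the pseudometric \eqref{eq:metric}, Lemma \ref{lem:lrt}, and the composition bound of Lemma \ref{lem:distributive}; the $\tfrac{1}{4|f|}$ scaling is chosen precisely so that state probabilities decay at \emph{twice} the metric rate, which is what cancels the $\sqrt\delta$ effect that defeats your scheme. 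Your own alternative suggestion at the end (forward a soft function of $\vec y^{(j)}$ and use Lemma \ref{lem:distributive}) is in fact the route the paper takes; the hard-decision-plus-re-encode route you chose as ``cleaner'' is the one that provably falls short. Your steps (1) and (2) (codebook construction via Lemma \ref{lem:db_codebook}/Theorem \ref{thm:berlekamp} and the LRT error bound) are fine as far as they go, but the relay step must be replaced by a soft-information scheme of this kind.
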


\subsection{Implication for our Main Problem Setting}

Returning to our main problem setting in which nodes are not required to transmit one-after-the-other, we obtain the following corollary of Lemma \ref{lem:series}.

\begin{figure}
    \centering
    \scalebox{0.85}{
        \begin{tikzpicture}
            [line cap=round,line join=round,>=triangle 45,x=1.0cm,y=1.0cm, scale=0.6]
\draw [line width=1.pt] (0,0)-- (0,1);
\draw [line width=1.pt] (0,0)-- (20,0);
\draw [line width=1.pt] (0,1)-- (20,1);
\draw [line width=1.pt] (0,3)-- (0,4);
\draw [line width=1.pt] (0,3)-- (20,3);
\draw [line width=1.pt] (0,4)-- (20,4);
\draw [line width=1.pt] (0,6)-- (0,7);
\draw [line width=1.pt] (0,6)-- (20,6);
\draw [line width=1.pt] (0,7)-- (20,7);
\draw [line width=1.pt] (4,0)-- (4,1);
\draw [line width=1.pt] (0,0)-- (20,0);
\draw [line width=1.pt] (0,1)-- (20,1);
\draw [line width=1.pt] (4,3)-- (4,4);
\draw [line width=1.pt] (0,3)-- (20,3);
\draw [line width=1.pt] (0,4)-- (20,4);
\draw [line width=1.pt] (4,6)-- (4,7);
\draw [line width=1.pt] (0,6)-- (20,6);
\draw [line width=1.pt] (0,7)-- (20,7);
\draw [line width=1.pt] (8,0)-- (8,1);
\draw [line width=1.pt] (0,0)-- (20,0);
\draw [line width=1.pt] (0,1)-- (20,1);
\draw [line width=1.pt] (8,3)-- (8,4);
\draw [line width=1.pt] (0,3)-- (20,3);
\draw [line width=1.pt] (0,4)-- (20,4);
\draw [line width=1.pt] (8,6)-- (8,7);
\draw [line width=1.pt] (0,6)-- (20,6);
\draw [line width=1.pt] (0,7)-- (20,7);
\draw [line width=1.pt] (12,0)-- (12,1);
\draw [line width=1.pt] (0,0)-- (20,0);
\draw [line width=1.pt] (0,1)-- (20,1);
\draw [line width=1.pt] (12,3)-- (12,4);
\draw [line width=1.pt] (0,3)-- (20,3);
\draw [line width=1.pt] (0,4)-- (20,4);
\draw [line width=1.pt] (12,6)-- (12,7);
\draw [line width=1.pt] (0,6)-- (20,6);
\draw [line width=1.pt] (0,7)-- (20,7);
\draw [line width=1.pt] (16,0)-- (16,1);
\draw [line width=1.pt] (0,0)-- (20,0);
\draw [line width=1.pt] (0,1)-- (20,1);
\draw [line width=1.pt] (16,3)-- (16,4);
\draw [line width=1.pt] (0,3)-- (20,3);
\draw [line width=1.pt] (0,4)-- (20,4);
\draw [line width=1.pt] (16,6)-- (16,7);
\draw [line width=1.pt] (0,6)-- (20,6);
\draw [line width=1.pt] (0,7)-- (20,7);
\draw [line width=1.pt] (20,0)-- (20,1);
\draw [line width=1.pt] (0,0)-- (20,0);
\draw [line width=1.pt] (0,1)-- (20,1);
\draw [line width=1.pt] (20,3)-- (20,4);
\draw [line width=1.pt] (0,3)-- (20,3);
\draw [line width=1.pt] (0,4)-- (20,4);
\draw [line width=1.pt] (20,6)-- (20,7);
\draw [line width=1.pt] (0,6)-- (20,6);
\draw [line width=1.pt] (0,7)-- (20,7);
\draw [->,line width=1.pt] (0,8) -- (21,8);
\draw (22,8) node{time};
\draw (0, 7.5) node{1};
\draw (3.8, 7.5) node{$B$};
\draw (7.7, 7.5) node{$2B$};
\draw (11.7, 7.5) node{$3B$};
\draw (15.7, 7.5) node{$4B$};
\draw (19.7, 7.5) node{$5B$};
\draw (0, -0.5) node{1};
\draw (3.8, -0.5) node{$B$};
\draw (7.7, -0.5) node{$2B$};
\draw (11.7, -0.5) node{$3B$};
\draw (15.7, -0.5) node{$4B$};
\draw (19.7, -0.5) node{$5B$};
\draw [->,line width=1.pt] (2,6) -- (6,4);
\draw [->,line width=1.pt] (14,3) -- (18,1);
\draw [->,line width=1.pt] (6,6) -- (10,4);
\draw [->,line width=1.pt] (6,3) -- (10,1);
\draw [->,line width=1.pt] (10,6) -- (14,4);
\draw [->,line width=1.pt] (10,3) -- (14,1);
\draw [fill=blue] (0,6) rectangle (4,7);
\draw [fill=blue] (4,3) rectangle (8,4);
\draw [fill=blue] (8,0) rectangle (12,1);
\draw [fill=red] (4,6) rectangle (8,7);
\draw [fill=red] (8,3) rectangle (12,4);
\draw [fill=red] (12,0) rectangle (16,1);
\draw [fill=green] (8,6) rectangle (12,7);
\draw [fill=green] (12,3) rectangle (16,4);
\draw [fill=green] (16,0) rectangle (20,1);
\draw (21.97,6.5) node{node 1 (source)};
\draw (20.9,3.5) node{node 2};
\draw (22.5,0.5) node{node 3 (destination)};
        \end{tikzpicture}
    }
    \caption{An illustration of how a total of $n$ time steps can be broken down into $n/B-|E|$ sequential block transmissions. In this diagram, $n=5B$ and $|E|=2$.}
    \label{fig:block-protocol}
\end{figure}
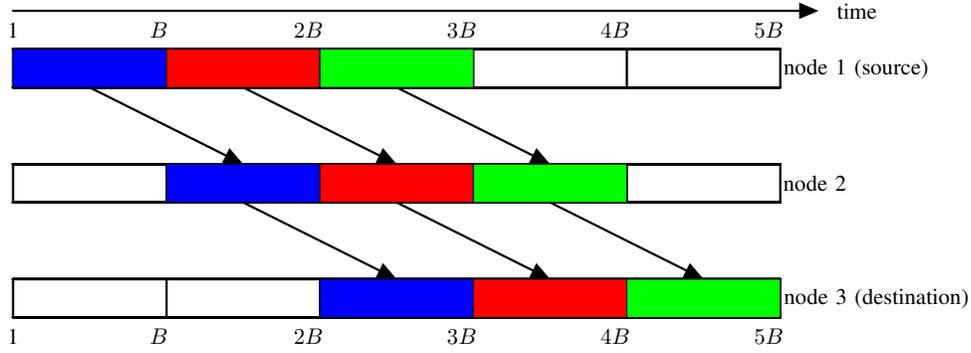
    
\begin{cor}
    Given a series graph with $n$ time steps, there exists a protocol such that for any $m_1\neq m_2$, the length-$n$ sequence $\vec{y}_{\rm all}$ received at node $|E|$ satisfies
    \begin{equation}
    \db\big((\vec{y}_{\rm all}|m=m_1), (\vec{y}_{\rm all}|m=m_2)\big) \geq (n-|E|B) \cdot (|f|-o_B(1)),
    \end{equation}
    where $o_B(1)$ is a term that approaches zero as $B$ increases while $G$ and $M$ remain constant (without any dependence on $n$).
    \label{cor:series}
\end{cor}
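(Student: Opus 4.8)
The plan is to partition the $n$ time steps into $n/B - |E|$ consecutive windows, each of length $B$ (dropping at most one partial window so that $B \mid n$ can be assumed after a vanishing loss), and to run one instance of the sequential block transmission from Lemma \ref{lem:series} in each window, staggered across the edges. Concretely, the source transmits the first block during time steps $[1,B]$; node $1$ relays that block during $[B+1,2B]$ while simultaneously the source starts transmitting the \emph{second} block during $[B+1,2B]$; and so on, as in Figure \ref{fig:block-protocol}. The $k$-th block, injected by the source during window $k$, reaches the destination during window $k+|E|-1$, provided $k + |E| - 1 \le n/B$. Thus the number of blocks that are fully delivered to node $|E|$ is $n/B - |E| + 1 \ge n/B - |E|$, and each such block occupies a window of length $B$ at the destination, accounting for a total of at least $(n/B - |E|) \cdot B = n - |E|B$ destination time steps, with all of these being disjoint blocks.

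The key step is to argue that the Bhattacharyya distance \emph{adds up} across these disjoint blocks. For a fixed message $m$, the destination's received sequence $\vec{y}_{\rm all}$ decomposes as a concatenation of the per-block received sequences $\vec{y}^{(1)}, \vec{y}^{(2)}, \ldots$ plus possibly a short leftover segment of length less than $B$. Crucially, conditioned on $m$, the blocks are run over disjoint sets of channel uses and the protocol within each block depends only on $m$ (not on other blocks' randomness or outputs), so the random variables $(\vec{y}^{(k)} \mid m)$ are mutually independent across $k$. Hence the joint distribution of $\vec{y}_{\rm all}$ given $m$ is a product distribution over the blocks, and by the tensorization/additivity of Bhattacharyya distance over independent coordinates (the $s=1/2$ case of Lemma \ref{lem:iid}, or directly $\db$ of a product equals the sum of $\db$'s), we get
\begin{equation}
\db\big((\vec{y}_{\rm all}|m=m_1),(\vec{y}_{\rm all}|m=m_2)\big) \ge \sum_{k} \db\big((\vec{y}^{(k)}|m=m_1),(\vec{y}^{(k)}|m=m_2)\big),
\end{equation}
where we have used that $\db \ge 0$ to discard the leftover segment. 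Applying Lemma \ref{lem:series} to each of the (at least $n/B - |E|$) complete blocks bounds each summand below by $B(|f| - o_B(1))$, and summing gives the claimed bound $(n-|E|B)(|f| - o_B(1))$.

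The main obstacle — really the only subtle point — is making the independence claim airtight: one must verify that the staggered schedule is consistent (no node is asked to transmit on the same edge in two overlapping windows, which holds because each edge is used in exactly one window per block and consecutive blocks use a given edge in consecutive windows), and that each block's protocol is self-contained, i.e. the relays act on a block using only the symbols received for that block and shared knowledge of $m$, so that the per-block channel outputs given $m$ are genuinely independent. Granting this, the additivity of $\db$ over product distributions and the nonnegativity of $\db$ finish the argument, with the $o_B(1)$ term inherited unchanged from Lemma \ref{lem:series} and the $|E|B$ overhead coming purely from the blocks at the two ends of the pipeline that are not fully delivered within $n$ steps.
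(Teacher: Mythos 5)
Your proposal is correct and follows essentially the same route as the paper: pipeline the sequential block transmissions as in Figure \ref{fig:block-protocol}, note that the blocks are conditionally independent given the message so that $\db$ tensorizes (Lemma \ref{lem:iid}), and apply Lemma \ref{lem:series} to each of the roughly $n/B-|E|$ delivered blocks. The extra care you take with the staggering schedule and the leftover segment matches the paper's ``up to asymptotically negligible rounding issues'' remark.
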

\begin{proof}
    Consider the process of chaining several sequential block transmissions in parallel, as depicted in Figure \ref{fig:block-protocol}.  If there are $t$ sequential block transmissions of length $B$, then they can clearly be executed in $(t+|E|)\cdot B$ time steps.  In other words, with a block length of $n$, we can perform $n/B-|E|$ sequential block transmissions (up to asymptotically negligible rounding issues).

    For each such sequential block transmission, we use the protocol that achieves Lemma \ref{lem:series}. Since memory is not stored between the blocks, the blocks are conditionally independent given the source message. Let $\vec{y}(t)$ represent the $t$-th block received by the final node, and let $\vec{y}_{\rm all} = (\vec{y}(1), \vec{y}(2), \ldots, \vec{y}(n/B-|E|))$.  Then, using tensorization (Lemma \ref{lem:iid}) and Lemma \ref{lem:series}, we have for all $m_1 \neq m_2$ that
    \begin{eqnarray}
    & &\db\big((\vec{y}_{\rm all}|m=m_1), (\vec{y}_{\rm all}|m=m_2)\big) \nonumber\\
    & = & \sum_{t=1}^{n/B-|E|} \db\big((\vec{y}(t)|m=m_1), (\vec{y}(t)|m=m_2)\big) \\
    & \geq & (n/B-|E|) B(|f|-o_B(1)) \\
    &=& (n-|E|B)\cdot(|f|-o_B(1)),
    \end{eqnarray}
    as required.
\end{proof}

Then, the error exponent is bounded by
\begin{eqnarray}
& &\liminf_{n \to \infty} -\frac{1}{n} \log \pe \\
& \geq &\liminf_{n \to \infty} \min_{m_1 \neq m_2} \db\big((\vec{y}_{\rm all} | m = m_1), (\vec{y}_{\rm all} | m = m_2)\big) \label{eq:bhat_bound} \\
& = & \liminf_{n \to \infty} \frac{n-|E|B}{n} (|f| - o_B(1)) \label{eq:apply_cor} \\
& = & |f|-o_B(1),
\end{eqnarray}
where \eqref{eq:bhat_bound} follows from Lemma \ref{lem:chernoff}, and \eqref{eq:apply_cor} follow from Corollary \ref{cor:series}.  This proves that the error exponents arbitrarily close to $|f|$ are achievable, which proves Theorem \ref{thm:main} for the series case.

\subsection{Proof of Lemma \ref{lem:series}} \label{sec:series_proof}

We first show that it suffices to prove Lemma \ref{lem:series} under an additional assumption that will be convenient to work with:

\begin{lemma}
    Suppose that Lemma \ref{lem:series} holds under the additional restriction that each channel $P^{(j)}$ has $M$ inputs $\{1,2,\ldots, M\}$ such that $\db(m_1, m_2, P^{(j)}) \geq |f|$ for all $m_1\neq m_2$.  Then the general form of Lemma \ref{lem:series} as stated also holds.
    \label{lem:separated_inputs}
\end{lemma}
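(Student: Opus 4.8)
The plan is to reduce the general series setup to the restricted one by pre-processing each channel with the Bhattacharyya-optimal code of Lemma \ref{lem:db_codebook} and then absorbing that code into the channel. Concretely, for each edge $j$, let $\ell_j = M!$ and let $(\vec{x}^{(j)}_1,\ldots,\vec{x}^{(j)}_M)$ be the $(M,\ell_j)$-codebook guaranteed by Lemma \ref{lem:db_codebook}, so that $\db(\vec{x}^{(j)}_{m_1},\vec{x}^{(j)}_{m_2},(P^{(j)})^{\ell_j}) = \ell_j\cdot\tilde{\rate}_{M,P^{(j)}} = \ell_j\cdot\rate^{(j)}$ for every $m_1\neq m_2$ (using pairwise reversibility so that $\tilde{\rate}_{M,P^{(j)}}=\rate^{(j)}$). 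Define $Q^{(j)}$ to be the restriction of $(P^{(j)})^{\ell_j}$ to these $M$ inputs; by Lemma \ref{lem:block_reduction}, $Q^{(j)}$ is again pairwise reversible, it has exactly $M$ inputs labelled $\{1,\ldots,M\}$, and by construction $\db(m_1,m_2,Q^{(j)}) = \ell_j\cdot\rate^{(j)}$. Scaling everything by $\ell_j$, the new line graph with channels $Q^{(1)},\ldots,Q^{(|E|)}$ has 1-hop exponents $\rate^{(j)}_Q = \ell_j\rate^{(j)}$ and maxflow $|f_Q| = \min_j \ell_j\rate^{(j)}$; in particular each $Q^{(j)}$ satisfies the extra hypothesis $\db(m_1,m_2,Q^{(j)})\geq |f_Q|$ (since $\ell_j\rate^{(j)}\geq \min_{j'}\ell_{j'}\rate^{(j')} = |f_Q|$, noting all $\ell_j$ are equal to $M!$ so this is immediate, but even with unequal block lengths the inequality $\ell_j\rate^{(j)}\geq |f_Q|$ holds).

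Next I would run the assumed (restricted) version of Lemma \ref{lem:series} on the $Q$-graph: a sequential block transmission of block size $B$ over the $Q^{(j)}$'s yields a protocol whose received length-$B$ sequence $\vec{y}_Q$ satisfies $\db((\vec{y}_Q|m=m_1),(\vec{y}_Q|m=m_2)) \geq B(|f_Q| - o_B(1))$ for all $m_1\neq m_2$. The point is to translate this back to the original channels. Each use of $Q^{(j)}$ is literally $\ell_j$ uses of $P^{(j)}$ (transmitting $\vec{x}^{(j)}_m$ over $(P^{(j)})^{\ell_j}$ when the abstract symbol is $m$), so a block-$B$ sequential transmission over the $Q$-graph unfolds into a sequential block transmission over the original graph with the block sizes inflated by the corresponding $\ell_j$ — one has to be slightly careful that in the original problem each edge transmits the same number $B'$ of symbols, but since all $\ell_j=M!$ we may simply take $B' = M!\cdot B$ and the unfolded protocol is exactly a legitimate sequential block transmission of block size $B'$ on the original series graph. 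The received sequence at the final node is the same random object $\vec{y}_Q$ (just re-expressed over $\calY_{P^{(|E|)}}^{B'}$ rather than $\calY_{Q^{(|E|)}}^{B}$), so the Bhattacharyya bound is unchanged: $\db((\vec{y}|m=m_1),(\vec{y}|m=m_2)) \geq B(|f_Q| - o_B(1)) = \frac{B'}{M!}(M!\,|f| - o_{B'}(1)) = B'(|f| - o_{B'}(1))$, using $|f_Q| = M!\cdot\min_j\rate^{(j)} = M!\,|f|$ and rewriting $o_B(1)$ as $o_{B'}(1)$ since $B'$ and $B$ differ by the constant factor $M!$. This is exactly the conclusion of Lemma \ref{lem:series} in its general form, for the block size $B'$; since $B'$ ranges over all multiples of $M!$ and $o_{B'}(1)\to 0$, we are done.

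The only genuinely delicate point is bookkeeping around block lengths and the reindexing of the $o(1)$ term: I must check that "block size $B$" in the restricted lemma and "block size $B'$" in the general lemma are related by a fixed constant ($M!$) depending only on $M$, so that a vanishing error term in one is a vanishing error term in the other, and that the sequential (one-after-another) structure of the transmission is preserved when each abstract channel use is expanded into $M!$ real channel uses (it is, since the expansion of edge $j$'s transmission still happens entirely before edge $j+1$'s). There is also the trivial check that the restricted lemma's hypothesis is met by the $Q^{(j)}$'s, which follows because the common block length $M!$ makes $\rate^{(j)}_Q = M!\,\rate^{(j)}$ and hence $\min_j \rate^{(j)}_Q = M!\,|f|$, so $\db(m_1,m_2,Q^{(j)}) = M!\,\rate^{(j)} \geq M!\,|f| = |f_Q|$. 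No step here is hard; the substance of Lemma \ref{lem:series} lives entirely in the restricted case, which this lemma defers.
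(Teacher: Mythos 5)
Your proposal is correct and follows essentially the same route as the paper: restrict each $(P^{(j)})^{M!}$ to the Bhattacharyya-optimal codebook of Lemma \ref{lem:db_codebook}, invoke Lemma \ref{lem:block_reduction} for pairwise reversibility, verify the separated-inputs hypothesis, apply the restricted lemma, and unfold each $Q^{(j)}$-use into $M!$ uses of $P^{(j)}$ with the constant-factor rescaling of $B$ and the $o_B(1)$ term. The only point you state without justification is the equality $\rate_{Q^{(j)}}=M!\,\rate^{(j)}$; the paper derives the two sides separately ($\geq$ from the codebook distances via Lemma \ref{lem:opt_rev}, $\leq$ from the first part of Lemma \ref{lem:block_reduction}), and you need both directions (one for the hypothesis $\db(m_1,m_2,Q^{(j)})\geq|f_Q|$, the other for $|f_Q|\geq M!\,|f|$ when translating the conclusion back), but both follow from lemmas you already cite.
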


\begin{proof}
Consider the general setup of Lemma \ref{lem:series} with arbitrary $P^{(j)}$. 
 By Lemma \ref{lem:db_codebook}, for each $j$, there exist codewords  $\vec{x}^{(j)}_1, \vec{x}^{(j)}_2, \ldots, \vec{x}^{(j)}_M$ of length $\ell = M!$ such that 
\begin{equation}
    \db(\vec{x}^{(j)}_{m_1}, \vec{x}^{(j)}_{m_2}, (P^{(j)})^\ell) \geq \ell \cdot \rate_{M,P^{(j)}}. \label{eq:db_ell}
\end{equation}
Let $Q^{(j)}$ be the restriction of $(P^{(j)})^\ell$ to the codewords  $\vec{x}^{(j)}_1, \vec{x}^{(j)}_2, \ldots, \vec{x}^{(j)}_M$.  We first show that the channels $Q^{(j)}$ satisfy the extra requirement stated in Lemma \ref{lem:separated_inputs}.

By Lemma \ref{lem:block_reduction}, $Q^{(j)}$ is pairwise reversible. Moreover, for each $m_1 \neq m_2$, we have from \eqref{eq:db_ell} that
\begin{equation}
\db(\vec{x}^{(j)}_{m_1}, \vec{x}^{(j)}_{m_2}, Q^{(j)}) = \db(\vec{x}^{(j)}_{m_1}, \vec{x}^{(j)}_{m_2}, (P^{(j)})^\ell) \geq \ell \cdot \rate_{M,P^{(j)}}. \label{eq:db_Q}
\end{equation}
Then, we deduce from Lemma \ref{lem:opt_rev} (with the $M$ inputs of $Q^{(j)}$ being substituted for $(x_1,x_2,\ldots, x_M)$) that $\rate_{M,Q^{(j)}} \geq \ell \cdot \rate_{M,P^{(j)}}$.  Since the maxflow is $|f| = \min_{j} \rate_{M,P^{(j)}}$ in the series setup, this gives $\rate_{M,Q^{(j)}} \ge \ell \cdot |f|$.
Then, let $|f'|$ be the maximum flow for the sequence of channels $Q^{(1)}, \ldots, Q^{(|E|)}$. Since $\rate_{M,Q^{(j)}} \geq \ell \cdot |f|$ for all $j$, the new maximum flow satisfies $|f'| \geq \ell \cdot |f|$. On the other hand, by Lemma \ref{lem:block_reduction}, we have $\rate_{Q^{(j)}} \leq \ell \cdot \rate_{P^{(j)}}$, which gives
\begin{equation}
|f'| = \min_j \rate_{Q^{(j)}} \leq  \min_j \ell \cdot \rate_{P^{(j)}} \leq \ell \cdot |f|.
\end{equation}
Combining these upper and lower bounds on $|f'|$, we conclude that $|f'| = \ell \cdot |f|$, and from \eqref{eq:db_Q} we deduce that the channels $Q^{(j)}$ satisfy the additional requirement imposed in the statement of Lemma \ref{lem:separated_inputs} (with $|f|$ renamed to $|f'|$).

Now suppose that Lemma \ref{lem:series} holds for the new sequence of channels $Q^{(1)}, \ldots, Q^{(|E|)}$, and let $\vec{y}'$ denote the corresponding length-$B$ sequence received by the destination node. This means that there exists a sequential block transmission protocol of block size $B$ such that for any $m_1 \neq m_2$,
\begin{equation}
    \db\big((\vec{y}'|m=m_1), (\vec{y}'|m=m_2)\big) \geq B(|f'| - o_B(1)) \geq B(\ell |f|-o_B(1)).
\end{equation}

Any sequential block transmission with block size $B$ using the new channels $Q^{(1)}, \ldots, Q^{(|E|)}$ can easily be converted to a sequential block transmission with block size $B \cdot \ell$ over the original channels $P^{(1)}, \ldots, P^{(|E|)}$ by simply expanding instances of the new letter $m$ by the codeword $\vec{x}_M$. Therefore, there exists a sequential block transmission of block size $B \cdot \ell$ such that 
 for any $m_1 \neq m_2$,
\begin{equation}
    \db\big((\vec{y}|m=m_1), (\vec{y}|m=m_2)\big) \geq B(\ell|f| - o_B(1)).
\end{equation}
Since $\ell$ is a constant, we can appropriately rescale such that $B \ell$ above is renamed to $B$ (the overall sequential block transmission length), and we conclude that Lemma \ref{lem:series} holds for the original channels $P^{(1)}, \ldots, P^{(|E|)}$.
\end{proof}
For the rest of this section, we work under the extra assumption stated in Lemma \ref{lem:separated_inputs}. We also assume that $B$ is even to avoid rounding issues.

\subsubsection{Description of the Forwarding Protocol}

We consider the following protocol for a single sequential block transmission of length $B$:
\begin{itemize}
    \item Every node $j$ ($0 \leq j \leq |E|$) is assigned a state consisting of two integers $(m^{(j)},\ell^{(j)})$ with $1\leq m^{(j)} \leq M$ and $0\leq \ell^{(j)} \leq B/2$, which we will refer to as $(m,\ell)$ when there is no ambiguity.  
    A state of $(m^{(j)},\ell^{(j)})$ roughly indicates that the node believes $m^{(j)}$ to be the most likely value of $m$, where larger values of $\ell^{(j)}$ indicate higher confidence, whereas a smaller value of $\ell^{(j)}$ indicates that there is another competing value $m' \neq m^{(j)}$ that is also relatively likely (we make this more precise below).
    \item Node 0 is assigned $m^{(0)}=m$ and $\ell^{(0)} = B/2$, where $m$ is the actual message.
    \item If node $j$ is assigned a state $(m,\ell)$, then this node sends out a $B/2 + \ell$ copies of $m$ followed by $B/2 - \ell$ copies of $m+1$ (unless $m=M$, in which case we wrap around and replace $m+1$ by $1$ here).  We refer to the corresponding string as $\vec{x}_{m,\ell}$, or $\vec{x}^{(j)}_{m,\ell}$ when we want to emphasize which node is being considered.
    \item Node $j$ ($j\geq 1$), upon receiving $\vec{y}^{(j)}$ from node $j-1$, computes the likelihoods $\p(\vec{y}^{(j)}|m^{(0)}=m)$ for each possible message $1\leq m \leq M$.  Let $m^*, m^{**}$ be the resulting most likely and second most likely value of $m$ respectively (with arbitrary tie-breaking).
    \item Node $j$ is now assigned a state of
    \begin{equation}
    m^{(j)} = m_1, \ell^{(j)} = \min\left(\frac B2, \Bigg\lfloor \frac{1}{4|f|} \cdot \log \frac{\p(\vec{y}^{(j)}|m^*)}{\p(\vec{y}^{(j)}|m^{**})}\Bigg\rfloor\right).
    \label{eq:state_assign}
    \end{equation}
    These equations are chosen such that $m^{(j)}$ reflects the most likely source message, and $\ell^{(j)}$ appropriately reflects the confidence level such that the most likely message $m^{(j)}$ is roughly $\exp(4\ell^{(j)}|f|)$ times more likely than the next value of $m$.
\end{itemize}

\subsubsection{Analysis of the Strategy}
We start by defining a pseudometric on the states by
\begin{equation}
d((m_1,\ell_1), (m_2, \ell_2)) = 
\begin{cases}
    |f| |\ell_1 - \ell_2| & m_1 = m_2 \\
    |f| |\ell_1 + \ell_2| & m_1 \neq m_2.
\end{cases}
\label{eq:metric}
\end{equation}
To see that $d$ defined above is a pseudometric, note that $d$ is equivalent to the $\ell_1$-norm after applying the mapping $(m,\ell) \rightarrow (0,\ldots, 0, \ell |f|, 0,\ldots, 0)$ (with $\ell |f|$ occurring at position $m$).  The reason for being a pseudometric rather than a metric is that $d((m_1, 0), (m_2,0))=0$.

Intuitively, $d$ is a measure of how far apart the beliefs $(m_1, \ell_1)$ and $(m_2, \ell_2)$ are. If $m_1 = m_2$, then these two beliefs are close when $\ell_1$ and $\ell_2$ are close. If $m_1 \neq m_2$, the beliefs are close only when $m_1$ and $m_2$ are both comparatively likely, which corresponds to the case where both $\ell_1$ and $\ell_2$ are small.

We use $d$ to bound the Bhattacharyya distance associated with two codewords:
\begin{eqnarray}
\db(\vec{x}^{(j)}_{m_1,\ell_1}, \vec{x}^{(j)}_{m_2,\ell_2}, (P^{(j)})^B) & = & \sum_{i=1}^B \db(\vec{x}^{(j)}_{m_1,\ell_1}(i), \vec{x}^{(j)}_{m_2,\ell_2}(i), P^{(j)}) \label{eq:tenorize_expand}\\
& \geq & |f| d_H(\vec{x}^{(j)}_{m_1,\ell_1}(i), \vec{x}^{(j)}_{m_2,\ell_2}(i))\label{eq:hamming_bound} \\
&\geq & d((m_1,\ell_1), (m_2, \ell_2)),
\label{eq:metric_bound}
\end{eqnarray}
where:
\begin{itemize}
    \item \eqref{eq:tenorize_expand} follows from tensorization (Lemma \ref{lem:iid}).
    \item \eqref{eq:hamming_bound} comes from the fact that the number of indices in which $\vec{x}^{(j)}_{m_1,\ell_1}(i)$ and $\vec{x}^{(j)}_{m_2,\ell_2}(i)$ differ is simply their Hamming distance, and each of these corresponding terms are at least $|f|$ due to the extra assumption in Lemma \ref{lem:separated_inputs}.
    \item To see why \eqref{eq:metric_bound} holds, we use the definition of $\vec{x}^{(j)}_{m,\ell}$ and handle the cases $m_1 = m_2$ and $m_1 \neq m_2$ separately. If $m_1 = m_2$, then
    \begin{equation}
        d_H(\vec{x}^{(j)}_{m_1,\ell_1},\vec{x}^{(j)}_{m_2,\ell_2}) = |\ell_1 - \ell_2|,
    \end{equation}
    which matches \eqref{eq:metric}.  
    If $m_1 \neq m_2$, then the first $B/2 + \min(\ell_1, \ell_2)$ symbols of the two codewords consist of only $m_1$ and only $m_2$ respectively, so that
    \begin{equation}
        d_H(\vec{x}^{(j)}_{m_1,\ell_1},\vec{x}^{(j)}_{m_2,\ell_2})  \geq B/2 + \min(\ell_1, \ell_2) \geq \max(\ell_1, \ell_2) + \min(\ell_1, \ell_2) = \ell_1 + \ell_2,
    \end{equation}
    which again matches \eqref{eq:metric}.
\end{itemize}
Observe that the sequence of states taken by the nodes forms a Markov process.  We establish a bound on the probability of achieving any intermediate state given the beginning state; intuitively, the following result states that given the original state $(m^{(0)},\ell^{(0)})=(m,B/2)$, the probability of seeing another state $(m^{(j)}, \ell^{(j)})$ is exponentially small in the distance between the two states defined by \eqref{eq:metric}.

\begin{lemma}
For all $j$, all $m_1 \neq m_2$, and all $(m',\ell')$, we have
\begin{equation}
-\log \p\big(m^{(j)}=m',\ell^{(j)}=\ell'|m=m_1\big) \geq
2d\big((m_1, B/2),(m',\ell')\big) - 2j\log (M(B+1)) - 2j|f| - j\log(M-1)
\label{eq:transition}
\end{equation}
and
\begin{equation}
\db\big(\big(\vec{y}^{(j+1)}|m=m_1\big), \big(\vec{y}^{(j+1)}|m=m_2\big)\big) \geq B|f| - 2(j+1)\log (M(B+1)) - 2j|f| - j\log(M-1).
\label{eq:chernoff_transition}
\end{equation}
\label{lem:transition}
\end{lemma}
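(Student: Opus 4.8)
The plan is to prove the two displays \eqref{eq:transition} and \eqref{eq:chernoff_transition} by a single simultaneous induction on $j$, taking \eqref{eq:transition} at index $j$ as the induction hypothesis and obtaining \eqref{eq:chernoff_transition} at index $j$ as an intermediate step; the logical chain at each stage is \eqref{eq:transition} at $j$ $\Rightarrow$ \eqref{eq:chernoff_transition} at $j$ $\Rightarrow$ \eqref{eq:transition} at $j+1$. The base case $j=0$ is immediate: node $0$ is deterministically in state $(m,B/2)$, so the probability in \eqref{eq:transition} equals $1$ when $(m',\ell')=(m_1,B/2)$ (and both sides are then $0$) and equals $0$ otherwise (and the left side is $+\infty$), while \eqref{eq:chernoff_transition} at $j=0$ follows from \eqref{eq:metric_bound}, since node $0$ transmits $\vec{x}^{(0)}_{m_1,B/2}$ or $\vec{x}^{(0)}_{m_2,B/2}$ and hence $\db(\vec{x}^{(0)}_{m_1,B/2},\vec{x}^{(0)}_{m_2,B/2},(P^{(0)})^B)\geq d((m_1,B/2),(m_2,B/2))=B|f|$, which dominates the claimed right-hand side.

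For the step \eqref{eq:transition} at $j$ $\Rightarrow$ \eqref{eq:chernoff_transition} at $j$, I would write the conditional law of $\vec{y}^{(j+1)}$ given $m=m_i$ as the mixture $\sum_{(m',\ell')}\p(m^{(j)}=m',\ell^{(j)}=\ell'\mid m=m_i)\,(P^{(j)})^B(\cdot\mid\vec{x}^{(j)}_{m',\ell'})$ over the at most $M(B+1)$ states of node $j$. Applying $\sqrt{\sum_k a_k}\le\sum_k\sqrt{a_k}$ to the Bhattacharyya coefficient of a product of two such mixtures bounds it by $\sum\sqrt{\alpha\beta}\,\exp\!\big(-\db(\vec{x}^{(j)}_{m'_1,\ell'_1},\vec{x}^{(j)}_{m'_2,\ell'_2},(P^{(j)})^B)\big)$, where $\alpha,\beta$ are the two state probabilities. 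Substituting \eqref{eq:metric_bound} for the codeword Bhattacharyya distance and the induction hypothesis \eqref{eq:transition} at $j$ (applied once with message $m_1$ and once with $m_2$) for $\sqrt{\alpha}$ and $\sqrt{\beta}$, the three distance terms in the exponent collapse, via the triangle inequality for the pseudometric $d$, to $-d((m_1,B/2),(m_2,B/2))=-B|f|$ plus the accumulated slack terms; summing the at most $(M(B+1))^2$ terms adds $2\log(M(B+1))$, which is exactly what promotes the coefficient of $\log(M(B+1))$ from $2j$ to $2(j+1)$ and produces \eqref{eq:chernoff_transition}.

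For the step \eqref{eq:chernoff_transition} at $j$ $\Rightarrow$ \eqref{eq:transition} at $j+1$, I would fix $m_1$ and a target state $(m',\ell')$ for node $j+1$ and split into cases. If $m'\neq m_1$, then on the event $\{m^{(j+1)}=m',\ell^{(j+1)}=\ell'\}$ the message $m_1$ is not the most likely message and is beaten by the runner-up, so $\p(\vec{y}^{(j+1)}\mid m_1)/\p(\vec{y}^{(j+1)}\mid m')\le e^{-4|f|\ell'}$; Lemma \ref{lem:lrt} with $P=\p(\cdot\mid m_1)$, $Q=\p(\cdot\mid m')$ gives $-\log\p(\cdot)\ge\db+2|f|\ell'$, and inserting \eqref{eq:chernoff_transition} at $j$ for $\db$ (valid since $m_1\neq m'$) together with $2d((m_1,B/2),(m',\ell'))=B|f|+2|f|\ell'$ closes this case with slack $2|f|+\log(M-1)$. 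If $m'=m_1$ and $\ell'=B/2$, the right-hand side of \eqref{eq:transition} is nonpositive and there is nothing to prove. If $m'=m_1$ and $\ell'<B/2$, I would union-bound over the identity $m''$ of the runner-up (at most $M-1$ choices); on the corresponding event $\p(\vec{y}^{(j+1)}\mid m_1)/\p(\vec{y}^{(j+1)}\mid m'')\le e^{4|f|(\ell'+1)}$, so Lemma \ref{lem:lrt} with $Q=\p(\cdot\mid m'')$ gives $-\log\p(\cdot)\ge\db-2|f|(\ell'+1)$, and combining with \eqref{eq:chernoff_transition} at $j$, the union-bound penalty $\log(M-1)$, and $2d((m_1,B/2),(m_1,\ell'))=B|f|-2|f|\ell'$ yields \eqref{eq:transition} at $j+1$ (with equality after a short cancellation).

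The hard part will be this third step. Getting the likelihood-ratio inequalities to point in the usable direction requires carefully unpacking the state-assignment rule \eqref{eq:state_assign} — in particular the floor and the truncation at $B/2$ — and the definition of the second most likely message, and one must treat the degenerate values $\ell'\in\{0,B/2\}$ separately so that the threshold $L_0$ fed into Lemma \ref{lem:lrt} is well-defined and the containment of the event in $\{L\le L_0\}$ genuinely holds. By contrast, the mixture/triangle-inequality step is routine once the $\ell_1$-embedding of the pseudometric $d$ is exploited, and the base case is trivial.
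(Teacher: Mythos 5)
Your proposal is correct and follows essentially the same argument as the paper: the same simultaneous induction (\eqref{eq:transition} at $j$ $\Rightarrow$ \eqref{eq:chernoff_transition} at $j$ $\Rightarrow$ \eqref{eq:transition} at $j+1$), the same use of \eqref{eq:metric_bound}, the triangle inequality for $d$, Lemma \ref{lem:lrt}, and the identical three-case analysis with the union bound over the runner-up message. The only cosmetic difference is that you re-derive the mixture/Bhattacharyya bound for the composite channel directly (incurring the $2\log(M(B+1))$ penalty), whereas the paper simply invokes Lemma \ref{lem:distributive} with $s=1/2$; the content is the same.
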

\begin{proof}
We use an induction argument, with the base case being that \eqref{eq:transition} trivially holds when $j=0$: Either the left-hand side is infinite due to a probability of zero, or the left-hand side is zero due to a probability of one (i.e., $(m',\ell') = (m_1,B/2)$), in which case the first term on the right-hand side is zero.

We proceed by induction on $j$, and break the inductive argument into the following two claims:
\begin{itemize}
\item If \eqref{eq:transition} holds for $j$, then \eqref{eq:chernoff_transition} holds for $j$
\item If \eqref{eq:chernoff_transition} holds for $j$, then \eqref{eq:transition} holds for $j+1$.
\end{itemize}
Start by assuming that \eqref{eq:transition} holds for some $j$. Observe that $m \rightarrow (m^{(j)}, \ell^{(j)}) \rightarrow \vec{y}^{(j+1)}$ forms a Markov chain. Let $P_1$ be the channel $m \rightarrow (m^{(j)}, \ell^{(j)})$ and $P_2$ be the channel $(m^{(j)}, \ell^{(j)}) \rightarrow \vec{y}^{(j+1)}$; applying Lemma \ref{lem:distributive} with $s=1/2$ then gives for any $m_1 \ne m_2$ that
\begin{eqnarray}
& & \db\left(\big(\vec{y}^{(j+1)}|m^{(0)}=m_1\big), \big(\vec{y}^{(j+1)}|m^{(0)}=m_2\big)\right) \nonumber \\ 
&\geq &\min_{m',m'',\ell',\ell''} \Big\{\db(\vec{x}_{m',\ell'}, \vec{x}_{m'',\ell''}, (P^{(j)})^B) - \frac12 \log \p(m',\ell' | m^{(0)}=m_1) - \frac12 \log \p(m'',\ell'' | m^{(0)}=m_2) \Big\} \nonumber\\
& &- 2\log(M(B+1)).
\label{eq:expand_db}
\end{eqnarray}
Since \eqref{eq:transition} holds for $j$ (and for all $(m',\ell')$ and $m_1$), we can combine it with \eqref{eq:metric_bound} to simplify the expression inside the minimization clause:
\begin{eqnarray}
& &\db(\vec{x}_{m',\ell'}, \vec{x}_{m'',\ell''}, (P^{(j)})^B) - \frac12 \log \p(m',\ell' | m=m_1) - \frac12 \p(m'',\ell'' | m=m_2) \nonumber \\
& \geq & d\big((m',\ell'), (m'',\ell'')\big) + d\big((m_1, B/2), (m',\ell')\big) + d\big((m_2, B/2), (m',\ell')\big) \\
& &- 2j\log (M(B+1)) - 2j|f|- j\log(M-1) \nonumber \\
&\geq & d\big((m_1,B/2), (m_2,B/2)\big) - 2j\log (M(B+1)) - 2j|f|- j\log(M-1) \label{eq:last_ineq} \\ 
&=&  B|f| - 2j\log (M(B+1)) - 2j|f| - j\log(M-1),
\label{eq:simplify_minimization}
\end{eqnarray}
where \eqref{eq:last_ineq} follows from non-negativity (first $d$ term) and the triangle inequality (second and third $d$ terms) for the pseudometric $d$, and the last step uses the definition of $d$ with $m_1 \ne m_2$. By substituting \eqref{eq:simplify_minimization} into \eqref{eq:expand_db}, we conclude that \eqref{eq:chernoff_transition} holds for $j$.

We now turn to the second dot point mentioned at the start of the proof. We will split this into two cases.  We note that in \eqref{eq:transition} we are interested in the event that $(m^{(j)},\ell^{(j)}) = (m',\ell')$, and for this to happen, $m'$ must be the value assigned to $m^*$ used in \eqref{eq:state_assign}, i.e., $m'$ must be the most likely message given $\vec{y}^{(j)}$.  We will use this fact in both cases.

\textbf{Case 1 ($m'=m_1$).} If $\ell' = B/2$, then $d((m_1,B/2), (m', \ell'))=0$, so that the RHS of \eqref{eq:transition} is upper bounded by 0, and therefore \eqref{eq:transition} trivially holds. Therefore, we assume that $\ell' < B/2$.

For each $\vec{y}^{(j+1)}$ and each $m'' \neq m_1$, define
\begin{equation}
L_{m''}(\vec{y}^{(j+1)}) = \frac{\p(\vec{y}^{(j+1)}|m^{(0)}=m_1)}{\p(\vec{y}^{(j+1)}|m^{(0)}=m'')}.
\end{equation}
By \eqref{eq:state_assign} and the assumption $\ell^{(j+1)} < B/2$, there exists $m''$ (which can be chosen to be the second most likely value of $m^{(0)}$ given $\ell^{(j+1)}$, noting that the most likely value is $m'=m_1$) such that
\begin{equation}
\frac{1}{4|f|} \log L_{m''}(\vec{y}^{(j+1)}) < \ell^{(j+1)}+1.
\end{equation}
Combining this with a union bound over all $m''$, we have
\begin{eqnarray}
& &\p(m^{(j+1)}=m',\ell^{(j+1)}=\ell'|m^{(0)}=m_1) \\
&\leq & (M-1) \max_{m'' \neq m_1} \p\left(L_{m''}(\vec{y}^{(j+1)}) < \exp(4|f|(\ell'+1)) \,\Big|\, m^{(0)}=m_1\right).
\end{eqnarray}
We can now conclude that
\begin{eqnarray}
& & -\log\p(m^{(j+1)}=m',\ell^{(j+1)}=\ell'|m^{(0)}=m_1) \nonumber \\
&\geq & -\log\bigg(\max_{m'' \neq m_1}\p\left(L_{m''}(\vec{y}^{(j+1)}) < \exp(4|f|(\ell'+1)) \,\Big|\, m^{(0)}=m_1\right)\bigg) -\log(M-1) \label{eq:case1_step1}\\
&\geq  & \min_{m'' \neq m_1} \bigg(\db\left((\vec{y}^{(j+1)}|m^{(0)}=m_1), (\vec{y}^{(j+1)}|m^{(0)}=m'')\right) \bigg) - 2|f|(\ell'+1)  - \log(M-1)\label{eq:use_lrt} \\
&\geq & B|f| - 2(j+1)\log (M(B+1)) - 2j|f| - j\log(M-1) - 2|f|(\ell'+1) -\log(M-1) \label{eq:use_chernoff} \\
& = & (B-2\ell')|f|-2(j+1)\log(M(B+1))-2(j+1)|f| - (j+1)\log(M-1)\\
& = & 2d\big((m_1, B/2), (m', \ell')\big)-2(j+1)\log(M(B+1))-2(j+1)|f| - (j+1)\log(M-1). \label{eq:use_metric}
\end{eqnarray}
where \eqref{eq:use_lrt} follows from Lemma \ref{lem:lrt}, \eqref{eq:use_chernoff} follows from the inductive assumption \eqref{eq:chernoff_transition}, and \eqref{eq:use_metric} follows from the definition of $d$ with $m' = m_1$.  Thus, \eqref{eq:transition} holds for $j+1$ as desired.

\textbf{Case 2 ($m'\neq m_1$).} For each $\vec{y}^{(j+1)}$, define
\begin{equation}
L(\vec{y}^{(j+1)}) = \frac{\p(\vec{y}^{(j+1)}|m^{(0)}=m_1)}{\p(\vec{y}^{(j+1)}|m^{(0)}=m')}.
\end{equation}
Let $m''$ (which may depend on $\vec{y}^{(j+1)}$) be the second most likely value of $m^{(0)}$ (noting that $m'$ is the most likely value).  We then have
\begin{equation}
\frac{1}{4|f|} \log L(\vec{y}^{(j+1)}) \leq \frac{1}{4|f|} \log \frac{\p(\vec{y}^{(j+1)}|m^{(0)}=m'')}{\p(\vec{y}^{(j+1)}|m^{(0)}=m')} \leq -\ell^{(j+1)},
\end{equation}
where the first inequality uses the fact that the likelihood for $m_1$ is at most that of $m''$, and the second inequality follows from \eqref{eq:state_assign} after lowing bounding the minimum therein by the second term.

We now proceed similarly to Case 1:
\begin{eqnarray}
& &-\log \p(m^{(j+1)}=m',\ell^{(j+1)}=\ell'|m^{(0)}=m_1) \\
&\geq &-\log \p\left(L(\vec{y}^{(j+1)}) \leq \exp(-4|f|\ell') \,\Big|\, m^{(0)}=m_1\right)\nonumber \\
&\geq & \db\left((\vec{y}^{(j+1)}|m^{(0)}=m'), (\vec{y}^{(j+1)}|m^{(0)}=m_1)\right) + 2|f|\ell'\nonumber \\
&\geq & B|f| - 2(j+1)\log (M(B+1)) - 2j|f| - j\log(M-1) + 2|f|\ell'\nonumber \\
& = & (B+2\ell')|f| - 2(j+1)\log (M(B+1)) - 2j|f| -j\log(M-1)\nonumber \\
& \geq & 2d\big((m',\ell'),(m_1, B/2)\big)-2(j+1)\log(M(B+1))-2(j+1)|f| - (j+1)\log(M-1),
\end{eqnarray}
where each step follow similarly to its counterpart in \eqref{eq:case1_step1}--\eqref{eq:use_metric}. 
We have thus established that \eqref{eq:transition} holds for $j+1$, which completes the proof of Lemma \ref{lem:transition}.
\end{proof}

The key result in Lemma \ref{lem:series} now immediately follows from \eqref{eq:chernoff_transition} applied to $j=|E|-1$, and by recalling that the lemma is stated with respect to growing $B$ with fixed $|E|$ and $M$.

\subsection{Comparison to Previous Work} \label{sec:prev}

In our previous paper \cite{multibit}, we proved the 2-hop version of Lemma \ref{lem:series}. In \cite{multibit}, the problem is reduced to the case $|\calX_P| = M$ (similarly to Lemma \ref{lem:separated_inputs}) and then handled as follows:
\begin{itemize}
    \item The relay node calculates the KL-divergence between the empirical distribution of the $B$ received symbols against the output distribution associated with each message.
    \item The relay then sends a sorted sequence with more occurrences of the symbol whose associated KL-divergence is smallest, and an equal number of occurrences of the rest.
\end{itemize}
Our approach in the present paper is instead based on a likelihood ratio (see \eqref{eq:state_assign}), and upon calculating this, a sorted sequence is sent containing at most two distinct values.

At first glance, one may have expected that using likelihood ratios would be difficult, as there can be $\binom{M}{2}$ likelihood ratios of interest. However, our approach here shows that using the likelihood ratios between only the two most likely values is sufficient.  Overall, our approach here is more similar to the one used in \cite{onebit} (for general binary-input channels rather than the BSC), which directly computes the likelihood ratio but only handles the 2-hop setting and requires $M=2$.

\section{Proof of Theorem \ref{thm:main} (Main Achievability Bound)}
\label{sec:proof_main}

In this section, we prove Theorem \ref{thm:main} in the general case, using the series case from Section \ref{sec:series} as a stepping stone.    
We crucially make use of the flow decomposition theorem (Theorem \ref{thm:flow_decomp}) together with its associated paths $p_1, p_2,\ldots, p_k$ and flows $f_1, f_2,\ldots, f_k$. The overall idea is to send conditionally independent messages along different paths and only accumulate them at the end. 

\subsection{Edge-Disjoint Paths}
\label{sec:disjoint}

The protocol is simpler when the paths $p_1, p_2,\ldots, p_k$ are edge-disjoint, so we will consider this case before turning to the general case.

The key idea is to transmit information across each path independently and aggregate them only at the end. Although we assume that paths are edge-disjoint, they may still share nodes. However, each node is designed to act independently on every path it is involved in, so that information entering a node in one path does not affect other outgoing paths.

Let $\vec{y}(1), \ldots, \vec{y}(k)$ be the sequences received by the destination node along the $k$ different paths after $n$ time steps, and let $\vec{y}_{\rm all} = (\vec{y}(1), \ldots, \vec{y}(k))$. By Corollary \ref{cor:series}, for all $i$ and all $m_1 \neq m_2$, there exists a (single-path) protocol such that
\begin{equation}
\db\big((\vec{y}(i)|m=m_1), (\vec{y}(i)|m=m_2)\big) \geq (n-|V|B)(|f_i|-o_B(1)),
\end{equation}
where the $|V|$ term comes from the fact that a simple path cannot have more than $|V|$ edges.  
Therefore, again letting  $B \rightarrow \infty$, we have
\begin{eqnarray}
& &\db\big((\vec{y}_{\rm all} | m=m_1), (\vec{y}_{\rm all} | m=m_2)\big) \nonumber \\
& = & \sum_{i=1}^k \db\big(\vec{y}(i)|m=m_1, \vec{y}(i)|m=m_2\big) \nonumber \\
& \geq & \sum_{i=1}^k (n-|V|B)(|f_i|-o_B(1)) \\
& = & (n-|V|B)(|f|-o_B(1)).
\end{eqnarray}
Note that here $k$ is treated as a constant, because $k$ is a property of the graph $G$ so does not depend on $B$.

Applying Lemma \ref{lem:chernoff}, we deduce that the total error exponent is bounded below by
\begin{equation}
\rate_G \geq \liminf_{n \to \infty} -\frac{1}{n}(n-|V|B)(|f|-o_B(1)) = |f|-o_B(1),
\end{equation}
which shows that error exponents arbitrarily close to $|f|$ are achievable.

\subsection{The General Case}

For the general case, we will still rely on the edge decomposition as before. Our approach is to reduce to the case where paths are edge-disjoint by considering the use of parallel edges, which are allowed to exist in all of our analysis and results.  This idea has also been used previously when studying channel capacity results for general networks, e.g., see \cite[Sec.~15.3]{gamalkim}

We first make two observations:
\begin{itemize}
    \item Suppose that there is some edge $e$ such that $P_e$ is a product channel, say $P_e = Q_1 \times Q_2$. We can delete $e$ and replace it by two parallel edges $e_1$ and $e_2$, and put the channels $Q_1$ and $Q_2$ respectively and preserve the error exponent. This is because any symbol that can be sent across $Q_1 \times Q_2$ can be separately sent across the two channels in an identical manner.
    \item For any graph $G$, let $G'$ be the resultant graph formed by replacing each $P_e$ with its $\ell$-fold product $P_e^{\ell}$. Then $\rate_{G'}\leq \ell \rate_{G}$. To see this, observe that any protocol with $n$ time steps over $G'$ can be executed with $n\cdot \ell$ time steps over $G$.
\end{itemize}

These two observations allow us to do the following:
\begin{itemize}
\item Pick a large block size $B$, and replace every channel $P_e$ with the $(B+k)$-fold product channel $P_e^{B+k}$
\item For each edge $e$, let
\begin{equation}
S_e = \{i : e \in p_i\}
\end{equation}
be the indices of the paths that pass through $e$.
For each $i \in S_e$, define
\begin{equation}
B_i = \Bigg\lceil \frac{f_i}{\sum_{j \in S_e} f_j} \cdot B \Bigg\rceil.
\end{equation}
Since there are only $k$ paths, we have $|S_e| \leq k$.  Moreover, we have
\begin{equation}
    \sum_{i \in S_e} B_i = \sum_{i \in S_e}  \Bigg\lceil \frac{f_i}{\sum_{j \in S_e} f_j} \cdot B \Bigg\rceil \leq \sum_{i \in S_e}  \Bigg( \frac{f_i}{\sum_{j \in S_e} f_j} \cdot B + 1\Bigg) = B + |S_e| \leq B + k.
\end{equation}

For each edge $e$ with corresponding channel $P_e^{B+k}$, replace this by $|S_e|$ parallel edges, where the corresponding channels are $P_e^{B_i}$ for each $i \in S_e$. We will refer to each of these new edges by $e'_i$. This is possible since $\sum_{i \in S_e} B_i \leq B+k$. If $\sum_{i\in S_e} B_i < B + k$, then we simply add a dummy (unused) channel to make up for the difference. Let $G'$ denote this new graph.
\end{itemize}

Observe that after the transformation, the capacities of the corresponding parallel edges are
\begin{equation}
\Bigg\lceil \frac{f_i}{\sum_{j \in S_e} f_j} \cdot B \Bigg\rceil \cdot \rate_e \ge B \cdot f_i,
\label{eq:k_i}
\end{equation}
noting that $\sum_{j \in S_e} f_j = f_e \leq \rate_e$ by the flow constraints.

Let $p'_i$ be the path in $G'$ formed by joining up edges of the form $e'_i$ for each $e \in p_i$. Since $p_i$ is a path and $e'_i$ joins the same pair of vertices as $e$, $p'_i$ is also a path. By \eqref{eq:k_i}, each of these edges $e'_i$ has error exponent at least $B \cdot f_i$, so that the capacity on path $p'_i$ is at least $B \cdot f_i$.  
Thus, there exist edge-disjoint paths $p'_1, p'_2, \ldots, p'_k$ on $G'$ such that the capacity on each path $p'_i$ is at least $B\cdot f_i$.

We have now reduced the problem to the edge-disjoint case studied in the previous subsection, and we conclude that we can achieve error exponents arbitrarily close to $B |f|$. Since time was scaled by a factor of $B+k$, we conclude that $\rate_{G'} \leq (B+k)\rate_G$ and therefore
\begin{equation}
\rate_G \geq \frac{1}{(B+k)}\rate_{G'} = \frac{B|f|}{(B+k)}
\end{equation}
Taking $B\rightarrow\infty$, we conclude that $\rate_G \geq |f|$.

\section{Proof of Theorem \ref{thm:main_converse} (Main Converse Bound)}
\label{sec:main_converse}

The proof of our converse bound consists of using the max-flow min-cut duality theorem (Theorem \ref{thm:duality}) to construct a cut with the same value as $|f|$. We will additionally use the following lemma, where we recall that $\feedbackrate_{M,P}$ denotes the 1-hop error exponent of a channel with full feedback:

\begin{lemma}
Consider any partition $(A,B)$ of the nodes in $G$ with the source node in $A$ and the destination node in $B$. Let $P^{(1)}, P^{(2)},\ldots, P^{(k)}$ be the channels associated with edges from $A$ to $B$, and let $\hat{P} = P^{(1)}\times P^{(2)} \times \ldots \times P^{(k)}$ be the product channel. Then for all $M$,
\begin{equation}
\rate_{M,G} \leq \feedbackrate_{M,\hat{P}}.
\end{equation}

\label{lem:cut_bound}
\end{lemma}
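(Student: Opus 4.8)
The plan is to show that any protocol on $G$ can be simulated by a protocol for 1-hop communication over $\hat{P}$ with full feedback, so that the error probability (and hence the error exponent) transfers directly. Fix a partition $(A,B)$ with the source in $A$ and the destination in $B$, and consider any network protocol running for $n$ time steps. At each time step, exactly $k$ symbols cross the cut: one for each edge $(a,b)$ with $a\in A$, $b\in B$. I claim that we can view the collection of nodes in $A$ as a single ``super-sender'' and the collection of nodes in $B$ as a single ``super-receiver'' communicating over $\hat{P}$ used $n$ times, where at each use the input is the tuple of symbols sent across the $k$ cut edges and the output is the tuple of noisy symbols received on the $B$-side.

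The key observation is why feedback is needed and why it suffices. The nodes in $A$ decide what to send across the cut at time $t$ based only on the message $m$ and on everything they have observed so far; crucially, what they observe may include symbols sent \emph{back} from $B$ to $A$ across back-edges of the cut. In the simulated 1-hop channel $\hat{P}$, those back-edge transmissions are not part of the channel output the super-receiver is trying to decode, but they are functions of the noisy outputs the $B$-side has already seen (plus the internal randomness and message-independent computation of the $B$-side nodes). With full feedback, the super-sender has access to all past channel outputs of $\hat{P}$, so it can internally reconstruct everything the $A$-side nodes would have observed — in particular it can recompute whatever the $B$-side would have fed back over the back-edges — and therefore reproduce exactly the distribution of the next $k$-tuple of cut symbols. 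Thus any $(n,M,G)$ protocol induces an $(n,M,\hat{P})$ feedback protocol with the same error probability, giving $\pe^f(n,M,\hat{P}) \le \pe(n,M,G)$ and hence $\rate_{M,G} \le \feedbackrate_{M,\hat{P}}$.

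Concretely, the steps I would carry out are: (i) set up the simulation by defining the super-sender as the induced subgraph on $A$ together with the source, the super-receiver as the induced subgraph on $B$ together with the destination, and the channel as $\hat{P}$ with the cut edges as its component channels; (ii) argue by induction on $t$ that the super-sender (using feedback) can maintain a faithful copy of the joint state of all $A$-side nodes, using the fact that the only information flowing from $B$ to $A$ is via back-edges whose contents are determined by the $B$-side's view of past $\hat{P}$-outputs plus message-independent randomness, all of which the super-sender has after receiving the feedback; (iii) note the super-receiver's final estimate of $m$ depends only on its own observations, which are exactly the $\hat{P}$-outputs, so it is a valid decoder; (iv) conclude the error-probability inequality and take $\liminf -\frac1n\log(\cdot)$.

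The main obstacle is step (ii): making rigorous the claim that the back-edge traffic across the cut carries no information that the feedback does not already supply. This requires being careful about the order of operations within a single time step (who transmits before whom), about shared/private randomness at the nodes, and about the fact that a ``node in $A$'' and a ``node in $B$'' may both be incident to multiple cut edges in both directions. The cleanest way to handle it is probably to expand time: replace each of the $n$ rounds by a fine-grained schedule of individual edge uses, so that within the simulation the super-sender produces its $k$ cut-symbols, then ``receives feedback'' (the $k$ noisy copies), then internally simulates all $B$-side computation and all back-edge transmissions to update its copy of the $A$-side state, before moving to the next round. With that bookkeeping in place the induction is routine, and it is also where the restriction to feedback (rather than the ordinary error exponent) is genuinely used — which is exactly what makes the bound weaker than $|f_{M,G}|$ in general but still sharp for $M=2$ via Theorem \ref{thm:feedback_m_2}.
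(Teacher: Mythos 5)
Your proposal is correct and follows essentially the same route as the paper: collapse the $A$-side and $B$-side into a super-sender and super-receiver, view the $k$ cut edges as $n$ uses of $\hat{P}$, and observe that back-edge traffic is subsumed by full feedback, giving $\pe^f(n,M,\hat{P}) \le \pe(n,M,G)$. The paper states this more briefly (compressing each side into a single node with free internal communication), leaving implicit the reconstruction argument you spell out in step (ii), which is a reasonable level of detail to add but not a different method.
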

\begin{proof}
We compress $A$ into a single node by allowing infinite information transfer between different nodes inside $A$, and do the same for $B$.  Clearly, this modification cannot decrease the overall error exponent.

At first glance, it may appear that the net result of doing this is simply allowing $A$ to send information to $B$ via $P^{(1)}, P^{(2)},\ldots, P^{(k)}$, each a total of $n$ times, which is the same as using $\hat{P}$ for $n$ times. However, we must also account for the possibility of back-edges, which amounts to having a certain degree of feedback.   
In view of this, since any protocol that can be executed under the original setup can also be executed under the new setup with $n$ uses of $\hat{P}$ {\em with full feedback}, we conclude that $\rate_{M,G} \leq \feedbackrate_{M,\hat{P}}$.
\end{proof}

Since $\rate_{M,P}^f$ is a decreasing function of $M$, $\rate_{M,P}^f \leq \rate_{2,P}^f$. Furthermore, feedback does not improve the error exponent when $M = 2$ (Theorem \ref{thm:feedback_m_2}), and thus Lemma \ref{lem:cut_bound} immediately gives the following corollary:
\begin{cor} 
Under the definition of $\hat{P}$ in Lemma \ref{lem:cut_bound}, we have
\begin{equation}
\rate_{M,G} \leq \rate_{2,\hat{P}}.
\label{cor:better_cut_bound}
\end{equation}
\end{cor}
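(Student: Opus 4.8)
The plan is to obtain this as a short chain of three already-established facts: the cut bound of Lemma~\ref{lem:cut_bound}, the monotonicity of the feedback error exponent in the number of messages, and the equality of the feedback and non-feedback exponents at $M=2$ (Theorem~\ref{thm:feedback_m_2}). No new machinery is needed; the work is in assembling these pieces.

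First I would invoke Lemma~\ref{lem:cut_bound} verbatim: for the given partition $(A,B)$ and the associated product channel $\hat{P} = P^{(1)} \times \cdots \times P^{(k)}$, it gives $\rate_{M,G} \le \feedbackrate_{M,\hat{P}}$. Second, I would show that $\feedbackrate_{M,P}$ is non-increasing in $M$, so that $\feedbackrate_{M,\hat{P}} \le \feedbackrate_{2,\hat{P}}$ for every $M \ge 2$ (the case $M=1$ being vacuous). This monotonicity is immediate: any feedback protocol for $M$ messages may be restricted to the message set $\{1,2\}$, yielding a feedback protocol for two messages whose error probability is no larger, so $\pe^f(n,2,\hat{P}) \le \pe^f(n,M,\hat{P})$ and hence $-\tfrac1n\log\pe^f(n,2,\hat{P}) \ge -\tfrac1n\log\pe^f(n,M,\hat{P})$; taking $\liminf$ preserves the inequality. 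Third, I would apply Theorem~\ref{thm:feedback_m_2} to replace $\feedbackrate_{2,\hat{P}}$ by $\rate_{2,\hat{P}}$, since feedback does not improve the $M=2$ exponent for any DMC, and $\hat{P}$ (a product of DMCs) is itself a DMC. Concatenating the three steps gives $\rate_{M,G} \le \feedbackrate_{M,\hat{P}} \le \feedbackrate_{2,\hat{P}} = \rate_{2,\hat{P}}$, which is exactly the claim.

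I do not expect a genuine obstacle here: the substantive content was already carried out in Lemma~\ref{lem:cut_bound} (handling back-edges as feedback) and in Theorem~\ref{thm:feedback_m_2}, and the only fresh ingredient is the trivial monotonicity of $\feedbackrate_{M,P}$ in $M$. The one point I would be careful to state explicitly is that $\hat{P}$ qualifies as a discrete memoryless channel so that Theorem~\ref{thm:feedback_m_2} applies. Downstream I would then expect this corollary to be combined with maxflow--mincut duality (Theorem~\ref{thm:duality}) and the single-letterization of $\rate_{2,\hat{P}}$ from Lemma~\ref{lem:parallel} to finish the proof of Theorem~\ref{thm:main_converse}.
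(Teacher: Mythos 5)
Your proposal is correct and follows exactly the paper's route: Lemma \ref{lem:cut_bound}, monotonicity of $\feedbackrate_{M,\hat{P}}$ in $M$, and Theorem \ref{thm:feedback_m_2} for $M=2$, in that order. The only difference is that you spell out the (trivial) monotonicity argument, which the paper simply asserts.
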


We can now use maxflow-mincut duality (Theorem \ref{thm:duality}) and consider a cut $(A,B)$ such such that $A$ contains the source, $B$ contains the sink, and 
\begin{equation}
\sum_{i=1}^k \rate_{2,P^{(i)}} = |f|,
\label{eq:special_cut}
\end{equation}
where $P^{(1)}, P^{(2)},\ldots, P^{(k)}$ are the channels associated with edges that begin in $A$ and end in $B$. 
Letting $\hat{P} = P^{(1)} \times P^{(2)} \times \ldots \times P^{(k)}$ be the product channel, Lemma \ref{lem:parallel} gives
\begin{equation}
\rate_{2,\hat{P}} \leq \sum_{i} \rate_{2,P^{(i)}} 
\end{equation}
which implies
\begin{equation} 
\rate_{M,G} \leq \rate_{2,\hat{P}} \leq \sum_{i} \rate_{2,P^{(i)}} = |f|
\end{equation}
as required.

\section{Achievability Bounds for General Channels} \label{sec:ach_general}

In this section, we study achievability bounds in the case that the channels need not be pairwise reversible. 

\subsection{Constant Number of Messages}

When the pairwise reversible assumption is dropped, Theorem \ref{thm:main} can fail to hold even in the 2-hop setting corresponding to a graph with 3 nodes and 2 edges \cite[Thm.~3]{multibit}. However, it is possible to prove some weaker achievability bounds. Recalling $\tilde{\rate}$ from Definition \ref{dfn:rev_rate}, we state the following achievability bound for general channels:

\begin{lemma} \label{lem:ach_general}
    Given an arbitrary graph $G$, let $|\tilde{f}|$ be the maxflow formed when the capacity of every edge $e$ is set to $\tilde{\rate}_{P_e}$. Then
    \begin{equation}
    \rate_G \geq |\tilde{f}|
    \end{equation}
\end{lemma}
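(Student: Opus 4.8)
\textbf{Proof proposal for Lemma \ref{lem:ach_general}.}

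The plan is to reduce the general-channel case to the pairwise-reversible case already handled by Theorem \ref{thm:main}, by replacing each channel $P_e$ with a carefully chosen \emph{symmetrized} product channel that (i) is pairwise reversible and (ii) has 1-hop error exponent (for the relevant number of messages) at least the capacity value $\tilde{\rate}_{P_e}$ we want to place on edge $e$. The natural candidate comes from Lemma \ref{lem:db_codebook}: for each edge $e$ there is an $(M,\ell)$-codebook over $P_e^{\ell}$ with $\ell = M!$ whose pairwise Bhattacharyya distances all equal $\ell\cdot\tilde{\rate}_{M,P_e}$. Let $Q_e$ be the restriction of $P_e^{\ell}$ to this codebook. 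Then $Q_e$ has exactly $M$ inputs, all pairwise Bhattacharyya distances equal $\ell\cdot\tilde{\rate}_{M,P_e}$, and—crucially—by Lemma \ref{lem:block_reduction} it need \emph{not} be pairwise reversible in general, so I must be careful: I should instead observe that since all the pairwise $\db$ values are equal and achieve the bound \eqref{eq:opt_rev}, Lemma \ref{lem:opt_rev} gives $\tilde{\rate}_{M,Q_e} = \rate_{M,Q_e}$ only if $Q_e$ is pairwise reversible, which is not guaranteed.

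To get around this, the cleaner route is: do not symmetrize the \emph{channel}, but work directly with the $\db$-based flow $|\tilde f|$ and re-run the series/general achievability argument of Sections \ref{sec:series}--\ref{sec:proof_main} using $\tilde{\rate}$ in place of $\rate$ everywhere. Inspecting the proof of Lemma \ref{lem:series} and its reduction Lemma \ref{lem:separated_inputs}: the reduction uses only Lemma \ref{lem:db_codebook} (valid for arbitrary channels) to pass to channels $Q^{(j)}$ with $M$ inputs satisfying $\db(m_1,m_2,Q^{(j)}) \geq |f|$ for all $m_1\ne m_2$; the maxflow bookkeeping there only needs $\tilde{\rate}_{M,Q^{(j)}} \geq \ell\cdot\tilde{\rate}_{M,P^{(j)}}$ and $\tilde{\rate}_{M,Q^{(j)}} \leq \ell\cdot\tilde{\rate}_{M,P^{(j)}}$, the first from \eqref{eq:rev_rate} applied with the $M$ inputs of $Q^{(j)}$ and the second from a $\db$-version of Lemma \ref{lem:block_reduction}. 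Then the forwarding-protocol analysis (Lemma \ref{lem:transition}) uses only the Bhattacharyya distance, tensorization (Lemma \ref{lem:iid}), the composite-channel bound Lemma \ref{lem:distributive} \emph{at $s=1/2$}, Lemma \ref{lem:lrt}, and Lemma \ref{lem:chernoff}—none of which need pairwise reversibility once we have committed to $s=1/2$ and to $\db$ (rather than $\dc$) as the performance measure. So the whole series argument goes through verbatim with $|f|$ reinterpreted as $\min_j \tilde{\rate}_{M,P^{(j)}}$, yielding $\rate_G \geq \min_j \tilde{\rate}_{P^{(j)}}$ in the series case.

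For the general (non-series) case I would follow Section \ref{sec:proof_main} unchanged: apply the flow decomposition theorem to $|\tilde f|$, handle edge-disjoint paths by sending conditionally independent messages along each path and combining Bhattacharyya distances additively via Lemma \ref{lem:iid}, then reduce the general case to the edge-disjoint case via the parallel-edge / product-channel construction. The only point needing a small check is that the two observations in Section \ref{sec:proof_main}'s general-case argument (splitting a product channel into parallel edges, and $\rate_{G'}\leq \ell\rate_G$) are stated for general channels already, and that $\tilde{\rate}$ behaves additively enough over products for the bookkeeping in \eqref{eq:k_i}: here it suffices that $\tilde{\rate}_{M,P_e^{B_i}} \geq B_i\cdot\tilde{\rate}_{M,P_e}$, which follows by tensorization of $\db$ (take the per-letter optimal $M$-tuple and repeat it $B_i$ times).

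The main obstacle—really the only subtle point—is bookkeeping the interplay between (a) the fixed blocklength $\ell=M!$ coming from Lemma \ref{lem:db_codebook} (needed so that the symmetrized per-letter $\db$ exactly equals $\tilde{\rate}_{M,P_e}$ rather than merely approaching it), and (b) the growing block size $B$ of the sequential-block protocol, and (c) the $\liminf$ defining $\rate_{M,P}$: since $\tilde{\rate}$ is defined via an exact $\max$ over input tuples, not a liminf, I do not actually need Theorem \ref{thm:berlekamp}-style "for all sufficiently large $\ell$" statements, which simplifies things. I would therefore present the proof as: "the argument of Sections \ref{sec:series}--\ref{sec:proof_main} applies verbatim upon replacing every occurrence of $\rate$ by $\tilde{\rate}$ and fixing $s=1/2$ throughout; we indicate the two places where pairwise reversibility was used and check it is not needed here," and then spell out those two places (the equality $\tilde\rate=\rate$ was used only to interpret the final bound, and the $\dc$-to-$\db$ step was used only to invoke Lemma \ref{lem:chernoff}, whose statement is already in terms of $\dc \geq \db$).
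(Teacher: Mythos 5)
Your proposal is correct and takes essentially the same route as the paper: the paper's proof likewise observes that pairwise reversibility enters the achievability argument only through the codeword construction in Lemma \ref{lem:separated_inputs}, which Lemma \ref{lem:db_codebook} supplies for arbitrary channels with $\tilde{\rate}_{M,P}$ in place of $\rate_{M,P}$, and then reruns the argument of Sections \ref{sec:series}--\ref{sec:proof_main} verbatim with $\tilde{\rate}$ (and $s=1/2$, i.e.\ $\db$) throughout. Your audit of the remaining steps (tensorization, Lemma \ref{lem:distributive} at $s=1/2$, Lemma \ref{lem:lrt}, Lemma \ref{lem:chernoff} via $\dc \ge \db$, and the flow bookkeeping) matches what the paper leaves implicit, and your abandoned first attempt is correctly discarded.
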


\begin{proof}
We use the same proof as the pairwise reversible case. The only use of the pairwise reversibility condition is in Lemma \ref{lem:separated_inputs} when constructing the codewords $\vec{x}^{(j)}_1, \vec{x}^{(j)}_2, \ldots, \vec{x}^{(j)}_M$ of length $\ell = M!$ such that $\db(\vec{x}^{(j)}_{m_1}, \vec{x}^{(j)}_{m_2}, (P^{(j)})^\ell) \geq \ell \cdot \rate_{M,P^{(i)}}$. However, such codewords of length $\ell = M!$ with $\db(\vec{x}^{(j)}_{m_1}, \vec{x}^{(j)}_{m_2}, (P^{(j)})^\ell) \geq \ell \cdot \tilde{\rate}_{M,P^{(i)}}$ (i.e., with $\tilde{\rate}$ in place of $\rate$) are still guaranteed to exist by Lemma \ref{lem:db_codebook} even when the pairwise reversibility condition is dropped. Outside of Lemma \ref{lem:separated_inputs}, the proof follows with $\rate_{M,P^{(i)}}$ replaced by $\tilde{\rate}_{M,P^{(i)}}$ where appropriate.
\end{proof}

\subsection{Zero-Rate Error Exponents} \label{sec:zero_rate_proof}

In this subsection, we prove Theorem \ref{thm:main_zero_rate}. 
 We start by relating $\tilde{\rate}_{M,P}$ to the zero-rate error exponent:

\begin{lemma}
    For any $M\geq 2$, we have
    \begin{equation}
    \tilde{\rate}_{M,P} \geq E_P(0) \geq \frac{M-1}{M} \tilde{\rate}_{M,P}
    \label{eq:zero_rate_tilde_r}.
\end{equation}
By taking $M\rightarrow \infty$ on both sides, we also have
\begin{equation}
\lim_{M\rightarrow\infty} \tilde{\rate}_{M,P} = E_P(0).
\label{eq:zero_rate_limit}
\end{equation}
\label{lem:zero_rate_tilde_r}
\end{lemma}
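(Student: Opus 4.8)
The plan is to recognize $\tilde{\rate}_{M,P}$ as a ``discretized'' version of the variational formula $\erate_P(0) = \max_{q \in \calP(\calX)} \sum_{x,x'} q_x q_{x'} \db(x,x',P)$ from Theorem~\ref{thm:berlekamp_zero_rate}, and to move between the two quantities by two elementary operations: (i) drawing $M$ i.i.d.\ input symbols from a near-optimal distribution $q$, and (ii) passing from an optimal $M$-tuple of inputs to its empirical distribution. Throughout I will use that $\db(x,x,P) = -\log\sum_y P(y|x) = 0$, so diagonal terms in $\sum_{x,x'} q_x q_{x'}\db(x,x',P)$ vanish; this is what makes the two formulas comparable up to the factor $\tfrac{M-1}{M}$. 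Both maxima are attained since $\calX$ is finite and $\calP(\calX)$ is compact.

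For the inequality $\tilde{\rate}_{M,P} \geq \erate_P(0)$, I would argue by the probabilistic method. Let $q^\star$ attain the maximum in \eqref{eq:berlekamp_zero_rate}, and draw $X_1,\dots,X_M$ i.i.d.\ from $q^\star$. By linearity of expectation, $\e\big[\sum_{1\le m_1 < m_2 \le M} \db(X_{m_1}, X_{m_2}, P)\big] = \binom{M}{2}\sum_{x,x'} q^\star_x q^\star_{x'}\db(x,x',P) = \binom{M}{2}\erate_P(0)$, so some realization $(x_1,\dots,x_M)\in\calX^M$ makes the sum at least $\binom{M}{2}\erate_P(0)$. Substituting into \eqref{eq:rev_rate} and using $\tfrac{2}{M(M-1)}\binom{M}{2}=1$ gives $\tilde{\rate}_{M,P}\geq\erate_P(0)$.

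For the inequality $\erate_P(0) \geq \tfrac{M-1}{M}\tilde{\rate}_{M,P}$, I would take $(x_1,\dots,x_M)$ attaining the maximum in \eqref{eq:rev_rate} and let $q$ be its empirical distribution, $q_x = \tfrac1M|\{i : x_i = x\}|$. Then $\sum_{x,x'} q_x q_{x'}\db(x,x',P) = \tfrac{1}{M^2}\sum_{i,j}\db(x_i,x_j,P) = \tfrac{2}{M^2}\sum_{i<j}\db(x_i,x_j,P) = \tfrac{2}{M^2}\cdot\tfrac{M(M-1)}{2}\tilde{\rate}_{M,P} = \tfrac{M-1}{M}\tilde{\rate}_{M,P}$, and since $q$ is feasible in \eqref{eq:berlekamp_zero_rate} this lower-bounds $\erate_P(0)$. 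Combining the two inequalities yields $\tfrac{M-1}{M}\tilde{\rate}_{M,P}\le\erate_P(0)\le\tilde{\rate}_{M,P}$, hence also $\erate_P(0)\le\tilde{\rate}_{M,P}\le\tfrac{M}{M-1}\erate_P(0)$, and letting $M\to\infty$ and squeezing gives \eqref{eq:zero_rate_limit}. There is no serious obstacle in this lemma; the one point requiring care is the bookkeeping of the $i=j$ (equivalently $x=x'$) terms, which contribute zero to the Bhattacharyya sum yet are still counted in the $M^2$ normalization of the empirical distribution, and this is precisely the source of the $\tfrac{M-1}{M}$ factor.
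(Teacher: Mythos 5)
Your proposal is correct and follows essentially the same route as the paper: drawing $M$ i.i.d.\ inputs from the maximizer of \eqref{eq:berlekamp_zero_rate} for the upper bound on $E_P(0)$, and plugging the empirical distribution of an optimal $M$-tuple into \eqref{eq:berlekamp_zero_rate} for the lower bound, with the vanishing diagonal terms accounting for the $\tfrac{M-1}{M}$ factor. The only difference is presentational (you spell out the probabilistic-method step and the squeeze explicitly), so there is nothing substantive to add.
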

\begin{proof}
We use the expression for $E_P(0)$ in Theorem \ref{thm:berlekamp_zero_rate}.
Let $q$ maximize the expression in \eqref{eq:berlekamp_zero_rate}. Draw $(x_1, x_2,\ldots, x_M)$ independently according to distribution $q$, and observe that the definition of $\tilde{\rate}$ in Definition \ref{dfn:rev_rate} gives
\begin{equation}
\tilde{\rate}_{M,P} \geq \e\left(\frac{2}{M(M-1)} \sum_{1\leq m_1<m_2\leq M} \db(x_{m_1}, x_{m_2}, P)\right) = E_P(0).
\end{equation}
To show the second inequality in the lemma statement, let $q$ be the empirical distribution (type) of $(x_1, x_2,\ldots, x_M)$ achieving \eqref{eq:rev_rate}. Substituting $q$ into Theorem \ref{thm:berlekamp_zero_rate}, we have
\begin{eqnarray}
E_P(0) &\geq& \sum_{x, x'} q_xq_{x'}\db(x,x',P) \\
& = & \frac{1}{M^2} \sum_{1\leq m_1, m_2 \leq M} \db(x_{m_1}, x_{m_2}, P) \\
& = & \frac{2}{M^2} \sum_{1\leq m_1 < m_2 \leq M} \db(x_{m_1}, x_{m_2}, P) \\ 
& \stackrel{\eqref{eq:rev_rate}}{=} & \frac{M-1}{M} \tilde{\rate}_{M,P}.
\end{eqnarray}
\end{proof}

We now prove Theorem \ref{thm:main_converse} by devising a protocol whose error exponent is arbitrarily close to $|f|$, where the edge weights in $f$ are the respective 1-hop zero-rate error exponents (i.e., $E_{P_e}(0)$ for edge $e$).

 Let $M$ be a (large) constant, and let $|\tilde{f}_M|$ be the error exponent formed by setting the capacity of each edge $e$ to $\tilde{\rate}_{M,P_e}$. Since $\tilde{\rate}_{M,P_e} \geq E_{P_e}(0)$ for all $P$, we have $|\tilde{f}_M| \geq |f|$.

Building on Sections \ref{sec:series} and \ref{sec:proof_main}, choose a (large) block size $B$, and generate a sequential block transmission protocol of length $B$ with $M$ messages. We can treat a single sequential block transmission as a discrete memoryless channel $Q_{B,M}$. More precisely, the input to $Q_{B,M}$ is the message that the source node receives, and the output of $Q_{B,M}$ is what the destination receives from a single block. The analysis in Sections \ref{sec:series} and \ref{sec:proof_main} shows that given any $\epsilon>0$, there is some sufficiently large $B$ such that
\begin{equation}
\db(m_1, m_2, Q_{B,M}) \geq B(|\tilde{f}_M| - \epsilon)
\end{equation}
for all $m_1 \ne m_2$.  Substituting into \eqref{dfn:rev_rate}, it follows that
\begin{equation}
 \tilde{\rate}_{M,Q_{B,M}} \geq B(|\tilde{f}_M| - \epsilon).
\end{equation}

We now proceed to bound the zero-rate error exponent of $Q_B$. By Lemma \ref{lem:zero_rate_tilde_r},
\begin{equation}
E_{Q_{B,M}}(0) \geq \frac{M-1}{M} \tilde{\rate}_{M,Q_B} \geq \frac{M-1}{M} B(|\tilde{f}_M| - \epsilon) \geq \frac{M-1}{M} B(|f| - \epsilon)
\end{equation}
Similar to the proof of Corollary \ref{cor:series}, we can run $n/B-|V|$ 
conditionally independent copies of the block protocol in $n$ time steps, which is equivalent to using channel $Q_B$ for a total of $n/B-|V|$ times.  By adopting any coding strategy for $Q_B$ that attains its zero-rate error exponent,\footnote{Such a strategy may be very complex, but we recall that our focus is on information-theoretic limits and not practical protocols.} we deduce that 
\begin{equation}
E_G(0) \geq \lim_{n \rightarrow \infty} -\frac{1}{n} \cdot \frac{n-|V|B}{B} \cdot \frac{M-1}{M} \cdot B(|f| - \epsilon) = \frac{M-1}{M} \cdot (|f|-\epsilon).
\end{equation}
Since $\epsilon$ is arbitrarily small and $M$ is arbitrarily large, we conclude that
\begin{equation}
E_G(0) \geq |f|
\end{equation}
as required.

\section{Further Results} \label{sec:further}

In this section, we provide some additional results that address when it is (or is not) possible to relax the assumptions made in our main results.

\subsection{Converse Bound for \texorpdfstring{$M > 2$}{M>2}}

Recall that $|f_{M,G}|$ denotes the maxflow for the graph $G$ with $M$ messages.  One might hope that the proof of Theorem \ref{thm:main_converse} can be modified to prove a converse bound of $\rate_{M,G} \leq |f_{M,G}|$ (instead of $|f_{2,G}|$). However, there are at least two problems in doing so:
\begin{itemize}
    \item We cannot guarantee that the 1-hop feedback and non-feedback error exponents are equal when $M > 2$;
    \item The required analog of Lemma \ref{lem:parallel} may not hold when $M=2$.
\end{itemize}
In this subsection, we partially address these issues by imposing additional assumptions. Specifically, we prove the following:
\begin{theorem}
    Suppose that the following conditions are satisfied:
    \begin{itemize}
        \item $G$ has a mincut with no back-edges (i.e., all edges are from the source side to the destination side);
        \item Every channel in $G$ is pairwise reversible.
    \end{itemize} 
    Then, we have $\rate_{M,G} = |f_{M,G}|$.
    \label{thm:converse_other}
\end{theorem}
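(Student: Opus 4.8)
The plan is to prove the two bounds $\rate_{M,G}\ge |f_{M,G}|$ and $\rate_{M,G}\le |f_{M,G}|$ separately. The first is immediate and does not even use the mincut hypothesis: since every channel in $G$ is pairwise reversible, Theorem \ref{thm:main} gives $\rate_{M,G}\ge |f_{M,G}|$. So all the real work is in the converse direction.

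For the converse, I would begin with the mincut $(A,B)$ supplied by the hypothesis and let $P^{(1)},\dots,P^{(k)}$ be the channels on the edges going from $A$ to $B$. Because this cut has no back-edges, these are the \emph{only} edges crossing the partition, so the value of the cut in the $(M,G)$ flow problem (where edge $e$ has capacity $\rate_{M,P_e}$) is exactly $\sum_{i=1}^k \rate_{M,P^{(i)}}$; since it is a mincut, maxflow--mincut duality (Theorem \ref{thm:duality}) gives $\sum_{i=1}^k \rate_{M,P^{(i)}}=|f_{M,G}|$. Next I would revisit the proof of Lemma \ref{lem:cut_bound}: after compressing $A$ and $B$ each into a single node, the feedback in that argument arose \emph{only} from the possibility of edges directed from $B$ back to $A$. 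Under the no-back-edges assumption this cannot occur, so the super-node $A$ transmits to the super-node $B$ using $\hat{P}=P^{(1)}\times\cdots\times P^{(k)}$ for $n$ time steps with no feedback, which yields the feedback-free bound $\rate_{M,G}\le \rate_{M,\hat{P}}$. Finally, since each $P^{(i)}$ is pairwise reversible, Lemma \ref{lem:prod_parallel_reversible} gives $\rate_{M,\hat{P}}=\sum_{i=1}^k \rate_{M,P^{(i)}}$, and chaining everything together produces $\rate_{M,G}\le \rate_{M,\hat{P}}=\sum_{i=1}^k \rate_{M,P^{(i)}}=|f_{M,G}|$, as required. Combining with the achievability bound yields equality.

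The two difficulties flagged just before the theorem statement are precisely what the two hypotheses resolve: the no-back-edges condition removes the gap between feedback and non-feedback $1$-hop exponents (so we no longer invoke Theorem \ref{thm:feedback_m_2}, which is available only for $M=2$), and pairwise reversibility upgrades the one-directional inequality of Lemma \ref{lem:parallel} to the all-$M$ equality of Lemma \ref{lem:prod_parallel_reversible}. I expect the only genuinely delicate point to be the rigorous statement and proof of the ``no feedback'' strengthening of Lemma \ref{lem:cut_bound}: one must verify that merging $A$ and $B$ into super-nodes, in the absence of $B\to A$ edges, really does convert every protocol on $G$ into a feedback-free protocol that uses $\hat{P}$ exactly $n$ times, with no side information leaking back to the source side. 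Once that is in place, the remainder is a short composition of the cited lemmas.
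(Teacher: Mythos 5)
Your proposal is correct and follows essentially the same route as the paper: achievability from Theorem \ref{thm:main}, and the converse by observing that the no-back-edge condition removes the feedback in Lemma \ref{lem:cut_bound} (the paper states this as Lemma \ref{lem:cut_bound_no_back_edge}), then applying Lemma \ref{lem:prod_parallel_reversible} and maxflow--mincut duality to get $\rate_{M,\hat{P}} = \sum_i \rate_{M,P^{(i)}} = |f_{M,G}|$.
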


We proceed with the proof. 
Although the bound in Lemma \ref{lem:cut_bound} uses the feedback error exponent, we do not need to consider feedback if the partition has no back-edges. We immediately deduce the following:
\begin{lemma}
Consider any partition $(A,B)$ of the nodes in $G$ with the source node in $A$ and the destination node in $B$. Let $P^{(1)}, P^{(2)},\ldots, P^{(k)}$ be the channels associated with edges from $A$ to $B$, and let $\hat{P} = P^{(1)}\times P^{(2)} \times \ldots \times P^{(k)}$ be the product channel. If there are no back-edges, then for all $M$, we have
\begin{equation}
\rate_{M,G} \leq \rate_{M,\hat{P}}.
\end{equation}
\label{lem:cut_bound_no_back_edge}
\end{lemma}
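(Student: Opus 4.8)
The plan is to mirror the proof of Lemma \ref{lem:cut_bound} almost verbatim, the only change being that the absence of back-edges removes the need for feedback. First I would contract the entire vertex set $A$ into a single super-source and the entire vertex set $B$ into a single super-destination, allowing noiseless, instantaneous, and unlimited communication among the nodes within $A$, and likewise within $B$. This transformation can only enlarge the set of available protocols, so it cannot decrease the error exponent; hence any converse proved for the contracted network applies to $G$ as well.

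Next I would identify the contracted network explicitly. By the no-back-edge hypothesis there are no edges from $B$ to $A$, so the only edges connecting the two super-nodes are the $k$ edges from $A$ to $B$ carrying the channels $P^{(1)}, \ldots, P^{(k)}$. Thus in each of the $n$ time steps the super-source transmits one symbol through each $P^{(i)}$ and receives nothing in return, which is precisely one use of the product channel $\hat{P} = P^{(1)} \times \cdots \times P^{(k)}$ with \emph{no} feedback. Consequently, every $n$-step protocol on the contracted network is an $n$-use protocol for the point-to-point channel $\hat{P}$ (without feedback) carrying the message $m \in \{1, \ldots, M\}$, so its error probability is at least $\pe(n, M, \hat{P})$. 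Taking $\liminf_{n \to \infty} -\frac{1}{n}\log(\cdot)$ of both sides yields $\rate_{M,G} \le \rate_{M,\hat{P}}$, as claimed.

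I do not anticipate a genuine obstacle here, since this is a direct specialization of Lemma \ref{lem:cut_bound}. The only point that warrants a sentence of care is to spell out why ``no back-edges'' is exactly the condition that eliminates the feedback present in Lemma \ref{lem:cut_bound}, so that the resulting bound involves the ordinary error exponent $\rate_{M,\hat{P}}$ rather than the feedback exponent $\feedbackrate_{M,\hat{P}}$; beyond this, the argument is purely a reduction and requires no new estimates.
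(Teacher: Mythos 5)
Your proposal is correct and follows essentially the same route as the paper: it reuses the node-contraction argument from Lemma \ref{lem:cut_bound} and observes that the absence of back-edges removes the feedback, so the bound holds with $\rate_{M,\hat{P}}$ in place of $\feedbackrate_{M,\hat{P}}$.
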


Starting from a mincut with no back-edges, let $P^{(1)}, P^{(2)},\ldots, P^{(k)}$ be the channels going across the cut. Due to our pairwise reversible assumption, we may apply Lemma \ref{lem:prod_parallel_reversible} to obtain
\begin{equation}
\rate_{M,\hat{P}} = \sum_i \rate_{M, P^{(i)}} = |f_{M,G}|, \label{eq:rate_match}
\end{equation}
where the last step holds by maxflow-mincut duality.  This completes the converse part of Theorem \ref{thm:converse_other}. The achievability part is an immediate consequence of Theorem \ref{thm:main}.

Next, for zero-rate error exponents, we show that the pairwise reversibility condition can be dropped:
\begin{lemma}
    Suppose that $G$ has a mincut with no back-edges.
    Then, we have $E_G(0) = |f|$ where $|f|$ is the maxflow formed by setting the capacity of every edge $e$ to $E_{P_{e}}(0)$.
    \label{thm:converse_other_zero_rate}
\end{lemma}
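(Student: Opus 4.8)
The plan is to mirror the structure of the proof of Theorem~\ref{thm:converse_other}, but replace the fixed-$M$ exponents by zero-rate exponents and use a limiting argument in $M$ to avoid pairwise reversibility. For the achievability direction, $E_G(0) \geq |f|$ is exactly Theorem~\ref{thm:main_zero_rate}, so nothing new is needed there. The work is entirely in the converse $E_G(0) \leq |f|$.

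For the converse, I would start from the given mincut $(A,B)$ with no back-edges, and let $P^{(1)},\dotsc,P^{(k)}$ be the channels crossing the cut with product $\hat P = P^{(1)} \times \dotsb \times P^{(k)}$. The first step is a zero-rate analog of Lemma~\ref{lem:cut_bound_no_back_edge}: since the cut has no back-edges, compressing $A$ and $B$ into single nodes shows that any rate-$R$ network protocol becomes a rate-$R$ protocol over $\hat P$ with no feedback, hence $\erate_G(R) \leq \erate_{\hat P}(R)$ for every $R > 0$, and taking $R \to 0^+$ gives $E_G(0) \leq E_{\hat P}(0)$. The second step is to show $E_{\hat P}(0) = \sum_i E_{P^{(i)}}(0)$; combined with maxflow-mincut duality (Theorem~\ref{thm:duality}) applied to the edge weights $E_{P_e}(0)$, and the fact that $(A,B)$ is a mincut, this yields $E_{\hat P}(0) = |f|$ and completes the proof.

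So the crux is the additivity of the zero-rate exponent over product channels, \emph{without} pairwise reversibility. The ``$\leq$'' direction is subadditivity: a codebook for $\hat P$ of blocklength $n$ can have its error probability bounded below using the component channels, which I would get by a Chernoff/Bhattacharyya argument analogous to Lemma~\ref{lem:parallel} but applied to the zero-rate regime. The ``$\geq$'' direction is the interesting one: here I would use Lemma~\ref{lem:zero_rate_tilde_r}, which says $E_P(0) = \lim_{M\to\infty} \tilde\rate_{M,P}$, together with Lemma~\ref{lem:prod_parallel_reversible}'s idea. Concretely, for each component channel $P^{(i)}$ and each $M$, Lemma~\ref{lem:db_codebook} gives length-$M!$ codewords whose pairwise Bhattacharyya distances all equal $M! \cdot \tilde\rate_{M,P^{(i)}}$; taking the product of these across $i$ and using tensorization (Lemma~\ref{lem:iid}) produces a code for $\hat P$ whose pairwise Bhattacharyya distances are $\sum_i M!\cdot\tilde\rate_{M,P^{(i)}}$, so $\tilde\rate_{M,\hat P} \geq \sum_i \tilde\rate_{M,P^{(i)}}$; letting $M \to \infty$ and invoking \eqref{eq:zero_rate_limit} gives $E_{\hat P}(0) \geq \sum_i E_{P^{(i)}}(0)$. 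Combining the two directions gives the needed equality.

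I expect the main obstacle to be getting the subadditivity direction $E_{\hat P}(0) \leq \sum_i E_{P^{(i)}}(0)$ cleanly, since the zero-rate exponent is defined via a limit of rates $R \to 0^+$ rather than a fixed-$M$ quantity, and the ``number of messages'' $e^{nR}$ grows with $n$; one must be careful that the pairwise-distance bound used to upper bound the zero-rate exponent (essentially Theorem~\ref{thm:berlekamp_zero_rate}'s formula, which involves an optimal input type $q$) behaves additively over the product. An alternative route that sidesteps this is to prove subadditivity directly from Theorem~\ref{thm:berlekamp_zero_rate}: the optimal type for $\hat P$ is a distribution on $\prod_i \calX_{P^{(i)}}$, and $\db$ over the product is the sum of component $\db$'s (tensorization), so the quadratic form defining $E_{\hat P}(0)$ splits into a sum of quadratic forms over the marginals, each bounded by the corresponding $E_{P^{(i)}}(0)$. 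This reduces the whole subadditivity claim to a short calculation with Theorem~\ref{thm:berlekamp_zero_rate} and Lemma~\ref{lem:iid}, and I would present it that way.
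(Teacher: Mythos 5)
Your proposal is correct and follows the same skeleton as the paper's proof: achievability is quoted from Theorem \ref{thm:main_zero_rate}, and the converse combines the no-back-edge cut bound (the zero-rate analog of Lemma \ref{lem:cut_bound_no_back_edge}, which you justify correctly by letting $R \to 0^+$) with additivity of the zero-rate exponent over the product of the cut channels and maxflow--mincut duality. The only place you diverge is in how the additivity $E_{\hat P}(0)=\sum_i E_{P^{(i)}}(0)$ is established. The paper observes that the proof of Lemma \ref{lem:prod_parallel_reversible} yields $\tilde{\rate}_{M,\hat P}=\sum_i \tilde{\rate}_{M,P^{(i)}}$ with no reversibility assumption (both directions are just tensorization of $\db$ plus single-letter optimizers), and then sends $M\to\infty$ via \eqref{eq:zero_rate_limit}; this handles the subadditivity direction you were worried about without ever touching the $R\to 0^+$ limit directly. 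Your alternative for subadditivity---marginalizing the quadratic form in Theorem \ref{thm:berlekamp_zero_rate} using $\db(\vec x,\vec x',\hat P)=\sum_i \db(x^{(i)},x'^{(i)},P^{(i)})$---is a valid and arguably more direct route, and in fact the same computation with a product input distribution $q=\prod_i q^{(i)}$ gives the superadditivity direction too, so you could dispense with the $\tilde\rate_M$ detour entirely. Two small remarks on your superadditivity argument as written: the length-$M!$ codebooks from Lemma \ref{lem:db_codebook} give a multi-letter bound, so you need one extra (trivial) averaging/single-letterization step to conclude $\tilde{\rate}_{M,\hat P}\ge\sum_i\tilde{\rate}_{M,P^{(i)}}$; choosing per-channel single-letter optimizers of \eqref{eq:rev_rate} and stacking them, as in the proof of Lemma \ref{lem:prod_parallel_reversible}, avoids this. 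Also, for the final identification with $|f|$, be explicit that the hypothesized mincut is with respect to the edge weights $E_{P_e}(0)$, which is what makes $\sum_i E_{P^{(i)}}(0)=|f|$.
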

\begin{proof}
Similar to \eqref{eq:rate_match}, the proof of Lemma \ref{lem:prod_parallel_reversible} shows that
    \begin{equation}
        \tilde{\rate}_{M,\hat{P}} = \sum_i \tilde{\rate}_{M,P^{(i)}}.
    \end{equation}
    Taking $M\rightarrow\infty$ and applying \eqref{eq:zero_rate_limit}, we obtain
    \begin{equation}
        E_{\hat{P}}(0) = \sum_{i} E_{P^{(i)}}(0).
    \end{equation}
    Under the same conditions as Lemma \ref{lem:cut_bound_no_back_edge}, we can similarly conclude that $E_{G}(0) \leq E_{\hat{P}}(0)$, and therefore
    \begin{equation}
        E_{G}(0) \leq \sum_{i} E_{P^{(i)}}(0),
    \end{equation}
    which completes the converse part. The achievability part is an immediate consequence of Theorem \ref{thm:main_zero_rate}.
\end{proof}

\subsection{Approximation Guarantees}

Lemma \ref{lem:ach_general} and Theorem \ref{thm:main_converse} reveal that for any network $G$ and any $M$, we have
\begin{equation}
    |\tilde{f}_{M,G}| \leq \rate_{M,G} \leq |f_{2,G}|,
\label{eq:rate_bound}
\end{equation}
where we recall that $\tilde{f}_{M,G}$ uses \eqref{eq:rev_rate} for the edge weights in the maxflow calculation.  
We will proceed to show that this is in fact a 4-approximation:
\begin{lemma} For any $M$ and $G$, we have
\begin{equation}
    |f_{2,G}| \leq 4 {|\tilde{f}_{M,G}|}.
    \label{eq:4approx}
\end{equation}
\label{lem:4approx}
\end{lemma}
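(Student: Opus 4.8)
The plan is to reduce the network-level inequality \eqref{eq:4approx} to a purely local (single-channel) inequality via maxflow-mincut duality, and then establish that local inequality using the explicit formulas for the two relevant one-hop exponents. First I would invoke Theorem \ref{thm:duality}: let $(A,B)$ be a mincut for the graph with edge weights $\tilde{\rate}_{M,P_e}$, so that $|\tilde{f}_{M,G}| = \sum_{i} \tilde{\rate}_{M,P^{(i)}}$ where $P^{(1)},\dotsc,P^{(k)}$ are the forward edges across this cut. The same cut is feasible for the $(2,G)$ maxflow problem (it is just a cut), so $|f_{2,G}| \leq \sum_i \rate_{2,P^{(i)}}$ by duality (mincut $\le$ value of this particular cut). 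Hence it suffices to prove the per-edge bound $\rate_{2,P} \leq 4\,\tilde{\rate}_{M,P}$ for every channel $P$ (in fact it is cleanest to prove it with $M=2$ on the right, since $\tilde{\rate}_{M,P}$ is nonincreasing in $M$, so $\tilde{\rate}_{M,P} \le \tilde{\rate}_{2,P}$ would go the wrong way — I should instead prove $\rate_{2,P} \le 4\,\tilde{\rate}_{M,P}$ directly, or first reduce to showing $\rate_{2,P}\le 4\tilde\rate_{2,P}$ and then note monotonicity gives $\tilde\rate_{2,P}$... wait, that is also the wrong direction). Let me restate: the right target is $\rate_{2,P} \le 4\,\tilde{\rate}_{M,P}$, and since for larger $M$ the quantity $\tilde{\rate}_{M,P}$ only gets smaller, the binding case is large $M$, i.e.\ it suffices to show $\rate_{2,P} \le 4\, E_P(0)$, using $\tilde{\rate}_{M,P} \ge E_P(0)$ from Lemma \ref{lem:zero_rate_tilde_r}.

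So the heart of the matter is the single-channel inequality $\rate_{2,P} \leq 4\,E_P(0)$. Here I would use the two explicit expressions available: by \eqref{eq:two_codewords}, $\rate_{2,P} = \max_{x_1,x_2} \dc(x_1,x_2,P)$, and by Theorem \ref{thm:berlekamp_zero_rate}, $E_P(0) = \max_{q} \sum_{x,x'} q_x q_{x'} \db(x,x',P)$. Let $x_1^\star, x_2^\star$ achieve the maximum in the $\rate_{2,P}$ expression. Choosing the two-point distribution $q = \frac12\delta_{x_1^\star} + \frac12\delta_{x_2^\star}$ in the $E_P(0)$ formula gives $E_P(0) \ge \frac14 \cdot 2\,\db(x_1^\star,x_2^\star,P) = \frac12 \db(x_1^\star,x_2^\star,P)$ (the factor $2$ from the two cross terms $(x_1^\star,x_2^\star)$ and $(x_2^\star,x_1^\star)$, the $\frac14$ from $q_x q_{x'}$; the diagonal terms vanish). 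Thus $E_P(0) \ge \frac12 \db(x_1^\star,x_2^\star,P)$, and it remains to relate $\dc(x_1^\star,x_2^\star,P)$ to $\db(x_1^\star,x_2^\star,P)$. Since $\dc(Q,Q') = \max_{0\le s\le 1} \dc(Q,Q',s)$ whereas $\db(Q,Q') = \dc(Q,Q',\tfrac12)$, we need a bound of the form $\dc(Q,Q') \le 2\,\db(Q,Q')$.

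The step I expect to be the main obstacle is precisely this comparison $\dc(x,x',P) \le 2\,\db(x,x',P)$ between the Chernoff divergence (optimized over $s$) and the Bhattacharyya distance (the $s=\frac12$ value). This is a classical fact — one way to see it is that $s\mapsto -\log\sum_y P(y|x)^{1-s}P(y|x')^s$ is concave in $s$ on $[0,1]$ (it is the negative log of a log-convex function, a standard Hölder-type argument), vanishes at $s=0$ and $s=1$, so its value at any interior $s$ is at most twice its value at the midpoint $s=\tfrac12$; concavity plus the symmetric-looking endpoint behavior gives $\max_s \le 2 \cdot (\text{value at } \tfrac12)$ once one checks the maximizing $s$ lies in, say, $[0,1]$ and uses concavity to push it towards $\tfrac12$. (Concretely: by concavity, for any $s\in[0,1]$, writing $s = 2s\cdot\tfrac12 + (1-2s)\cdot 0$ if $s\le\tfrac12$ gives $g(s) \ge 2s\, g(\tfrac12)$, which is the wrong direction; the correct direction uses $g(\tfrac12) \ge \tfrac12 g(s) + \tfrac12 g(1-s)$ and then, if $g(s)$ is the max, also $g(1-s)\le g(s)$... this needs care, but if $g$ is additionally symmetric about $\tfrac12$ — which it is \emph{not} in general — it is immediate; in the general case one argues $g(\tfrac12) \ge \tfrac12 g(s^\star) + \tfrac12 g(1-s^\star) \ge \tfrac12 g(s^\star)$ using $g\ge 0$ on $[0,1]$, yielding exactly $\dc \le 2\db$.) Assembling: $\rate_{2,P} = \dc(x_1^\star,x_2^\star,P) \le 2\,\db(x_1^\star,x_2^\star,P) \le 4\, E_P(0) \le 4\,\tilde{\rate}_{M,P}$. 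Summing over the cut edges, $|f_{2,G}| \le \sum_i \rate_{2,P^{(i)}} \le 4\sum_i \tilde{\rate}_{M,P^{(i)}} = 4\,|\tilde{f}_{M,G}|$, completing the proof. I would double-check the nonnegativity of $g$ on $[0,1]$ (which holds since $\sum_y P(y|x)^{1-s}P(y|x')^s \le 1$ by Hölder / Jensen when $P_x,P_{x'}$ are probability distributions, with equality at the endpoints) as the one genuinely load-bearing auxiliary fact.
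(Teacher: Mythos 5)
Your proposal is correct and follows essentially the same route as the paper: reduce to the per-edge inequality $\rate_{2,P}\le 4\,\tilde{\rate}_{M,P}$ (the paper does this by noting the two maxflows use edge weights differing by at most a factor of 4, you via an explicit mincut, which is the same thing), then gain one factor of 2 from $\dc \le 2\,\db$ using concavity of $\dc(\cdot,\cdot,s)$ in $s$ together with nonnegativity, and the other factor of 2 from $\tilde{\rate}_{M,P}\ge E_P(0)\ge \tfrac12\db(x,x',P)$ via Lemma \ref{lem:zero_rate_tilde_r} and the two-point input distribution in Theorem \ref{thm:berlekamp_zero_rate}. The load-bearing auxiliary facts you flag (concavity, $\dc\ge 0$ on $[0,1]$) are exactly those invoked in the paper's \eqref{eq:rev_step3}, so no gap remains.
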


\begin{proof}
    It is sufficient to show that $\rate_{2,P} \leq 4\tilde{\rate}_{M,P}$, as this means that $f_{2,G}$ and $\tilde{f}_{M,G}$ are defined with respect to edge weights that differ by at most a factor of 4.  We have 
    \begin{eqnarray}
        \rate_{2,P} &=& \max_{x, x'} \dc(x, x', P) \label{eq:rev_step1} \\
        &=& \max_{x, x'} \max_{s \in [0,1]} \dc(x, x', P, s) \label{eq:rev_step2} \\
        &\leq& 2 \max_{x, x'} \dc(x, x', P, 1/2) \label{eq:rev_step3} \\
        &=& 2 \tilde{\rate}_{2,P}, \label{eq:rev_step4}
        \label{eq:rev_rate_factor_2}
    \end{eqnarray}
    where \eqref{eq:rev_step1} follows from \eqref{eq:two_codewords}, \eqref{eq:rev_step3} follows by using the concavity of $\dc$ in $s$ \cite[Thm.~5]{berlekampI} and considering an equal combination of the values at $s$ and $1-s$ (along with $\dc \ge 0$ for the latter), and \eqref{eq:rev_step4} follows from \eqref{eq:rev_rate}.
    
    Next, let $(x,x')$ be chosen to maximize $\db(x, x', P, 1/2)$. By considering the distribution assigning a mass of $1/2$ to $x$ and $1/2$ to $x'$, we have
    \begin{equation}
        \tilde{\rate}_{M,P} \geq E_P(0) \geq \frac12 \db(x, x', P, 1/2) = \frac12 \tilde{\rate}_{2,P},
        \label{eq:general_loss}
    \end{equation}
    where we first applied Lemma \ref{lem:zero_rate_tilde_r},
    followed by \eqref{eq:berlekamp_zero_rate} with the maximum replaced by the above-mentioned distribution, then \eqref{eq:rev_rate} with $M=2$. 
    Combining equations \eqref{eq:rev_rate_factor_2} and \eqref{eq:general_loss} gives Lemma \ref{lem:4approx}.
\end{proof}

The next lemma gives conditions under which a tighter approximation is possible:
\begin{lemma} \label{lem:2approx}
    Suppose that either of the following conditions hold:
    \begin{itemize}
        \item $M=2$; or
        \item Every channel in $G$ is pairwise reversible.
    \end{itemize}
    Then, we have a 2-approximation in \eqref{eq:rate_bound}, i.e., $|f_{2,G}| \leq 2 {|\tilde{f}_{M,G}|}$.
\end{lemma}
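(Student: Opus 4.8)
The plan is to reduce, exactly as in the proof of Lemma~\ref{lem:4approx}, to showing the pointwise edge-weight inequality $\rate_{2,P} \leq 2\tilde{\rate}_{M,P}$ under each of the two hypotheses, since then $f_{2,G}$ and $\tilde{f}_{M,G}$ are maxflows with respect to edge weights differing by at most a factor of $2$, and the desired conclusion $|f_{2,G}| \le 2|\tilde{f}_{M,G}|$ follows.

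For the first case ($M=2$), I would note that $\tilde{\rate}_{M,P} = \tilde{\rate}_{2,P}$ and that \eqref{eq:rev_rate_factor_2} (established in the proof of Lemma~\ref{lem:4approx}) already gives $\rate_{2,P} \leq 2\tilde{\rate}_{2,P}$ directly, with no further work needed. For the second case (every channel pairwise reversible), I would invoke the defining property of pairwise reversibility: $\dc(x,x',P) = \db(x,x',P)$ for all $x,x'$, i.e.\ the maximum over $s$ in the definition of $\dc$ is attained at $s=1/2$. Hence \eqref{eq:rev_step1}--\eqref{eq:rev_step4} collapses without the factor-$2$ loss in \eqref{eq:rev_step3}, giving $\rate_{2,P} = \tilde{\rate}_{2,P}$. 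Combining with $\tilde{\rate}_{2,P} \le 2\tilde{\rate}_{M,P}$ — which comes from the second chain \eqref{eq:general_loss} in the proof of Lemma~\ref{lem:4approx}, namely $\tilde{\rate}_{M,P} \ge \frac12 \tilde{\rate}_{2,P}$ — yields $\rate_{2,P} \le 2\tilde{\rate}_{M,P}$, as required. (Alternatively, in the pairwise reversible case one has $\tilde{\rate}_{M,P} = \rate_{M,P}$, and one could appeal to monotonicity-type relations between $\rate_{2,P}$ and $\rate_{M,P}$, but routing everything through the already-proven inequalities in Lemma~\ref{lem:4approx} is cleanest.)

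The main obstacle, such as it is, is simply identifying \emph{which} of the two factor-$2$ losses in the proof of Lemma~\ref{lem:4approx} is avoidable under which hypothesis: in the $M=2$ case it is the loss $\tilde{\rate}_{2,P} \le 2\tilde{\rate}_{M,P}$ that is vacuous (equality holds), while in the pairwise reversible case it is the loss $\rate_{2,P} \le 2\tilde{\rate}_{2,P}$ that is vacuous (equality holds, by the very definition of pairwise reversibility). In both cases exactly one factor of $2$ survives, so the $4$-approximation of Lemma~\ref{lem:4approx} sharpens to a $2$-approximation. There are no substantive calculations beyond what is already in the excerpt; the proof is essentially a matter of bookkeeping over the two halves of the earlier argument.

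\begin{proof}
    As in the proof of Lemma~\ref{lem:4approx}, it suffices to show that under either hypothesis we have $\rate_{2,P} \leq 2\tilde{\rate}_{M,P}$ for every channel $P$ in $G$, since then $f_{2,G}$ and $\tilde{f}_{M,G}$ are defined with respect to edge weights differing by at most a factor of $2$.

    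If $M=2$, then $\tilde{\rate}_{M,P} = \tilde{\rate}_{2,P}$, and the chain \eqref{eq:rev_step1}--\eqref{eq:rev_step4} already gives $\rate_{2,P} \leq 2\tilde{\rate}_{2,P} = 2\tilde{\rate}_{M,P}$, as required.

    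Now suppose every channel in $G$ is pairwise reversible. Then, for every $P$ in $G$ and all $x,x'$, Definition~\ref{dfn:pairwise_reversible} gives $\dc(x,x',P) = \db(x,x',P) = \dc(x,x',P,1/2)$. Hence the inequality \eqref{eq:rev_step3} in the proof of Lemma~\ref{lem:4approx} holds with equality and without the factor of $2$, so that, using \eqref{eq:two_codewords} and \eqref{eq:rev_rate},
    \begin{equation}
        \rate_{2,P} = \max_{x,x'} \dc(x,x',P) = \max_{x,x'} \dc(x,x',P,1/2) = \tilde{\rate}_{2,P}.
    \end{equation}
    Combining this with the bound $\tilde{\rate}_{M,P} \geq \tfrac12 \tilde{\rate}_{2,P}$ from \eqref{eq:general_loss} gives $\rate_{2,P} = \tilde{\rate}_{2,P} \leq 2\tilde{\rate}_{M,P}$, as required.
\end{proof}
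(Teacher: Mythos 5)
Your proposal is correct and matches the paper's own argument: the $M=2$ case is handled by \eqref{eq:rev_rate_factor_2}, and the pairwise reversible case by noting $\rate_{2,P}=\tilde{\rate}_{2,P}$ and invoking \eqref{eq:general_loss}, exactly as in the paper. The only difference is that you spell out the reduction to per-edge weight inequalities, which the paper leaves implicit.
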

\begin{proof}
    The case $M=2$ follows immediately from \eqref{eq:rev_rate_factor_2}.  Moreover, if every channel is pairwise reversible, then $\tilde{\rate}_{2,P} = \rate_{2,P}$, so the 2-approximation result follows from \eqref{eq:general_loss}.
\end{proof}

We also identify another scenario where our bounds are exactly tight. Consider the $K$-ary symmetric channel with inputs $\{1,2,\ldots, K\}$, outputs  $\{1,2,\ldots, K\}$, and
\begin{equation}
    P(y|x) = \begin{cases}
        1-(K-1)p & y=x\\
        p & y\neq x
    \end{cases}
\end{equation}
for some $p \in \big(0,\frac{1}{K-1}\big)$.

\begin{lemma}
    Suppose that every channel is a $K$-ary symmetric channel for some $K \geq M$ (different channels can have different values of $K$ and different noise levels).  Then $|\tilde{f}_{M,G}| = |f_{2,G}|$, and the protocol given in Section \ref{sec:proof_main} provides the achievability part.
    \label{lem:k_sym}
\end{lemma}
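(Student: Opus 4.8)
The plan is to compute both $\tilde{\rate}_{M,P}$ and $\rate_{2,P}$ explicitly for a $K$-ary symmetric channel with $K \ge M$ and show they are equal; then $|\tilde{f}_{M,G}| = |f_{2,G}|$ follows since the two maxflows are defined with respect to identical edge weights, and the achievability statement is just an instance of Theorem \ref{thm:main} (or rather its general-channel version, since we will see the channel is pairwise reversible, so Lemma \ref{lem:ach_general} and Theorem \ref{thm:main} coincide). First I would record the key single-letter quantity: for the $K$-ary symmetric channel with parameter $p$, any two distinct inputs $x \ne x'$ give the same pair of conditional distributions up to relabeling, so $\db(x,x',P)$ is a single number $\beta(p)$ not depending on the particular pair, and one checks $\dc(x,x',P) = \db(x,x',P) = \beta(p)$ because the symmetry forces the Chernoff parameter $s=1/2$ to be optimal (this is exactly pairwise reversibility; it can also be cited via the fact that the $K$-ary symmetric channel is a standard example in \cite{berlekamp}). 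Concretely $\beta(p) = -\log\big(2\sqrt{p(1-(K-1)p)} + (K-2)p\big)$, though the exact form is not needed.

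The main computation is then to show $\tilde{\rate}_{M,P} = \beta(p)$ when $K \ge M$. By Definition \ref{dfn:rev_rate},
\begin{equation}
\tilde{\rate}_{M,P} = \frac{2}{M(M-1)} \max_{(x_1,\dots,x_M)} \sum_{1 \le m_1 < m_2 \le M} \db(x_{m_1}, x_{m_2}, P).
\end{equation}
Since $K \ge M$, we may choose the $M$ inputs to be all distinct, in which case every one of the $\binom{M}{2}$ pairwise terms equals $\beta(p)$, giving a value of exactly $\beta(p)$. For the upper bound, any pair with $x_{m_1} = x_{m_2}$ contributes $0 \le \beta(p)$, so no choice of inputs can exceed $\binom{M}{2}\beta(p)$; hence the maximum is $\binom{M}{2}\beta(p)$ and $\tilde{\rate}_{M,P} = \beta(p)$. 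On the other side, $\rate_{2,P} = \max_{x,x'} \dc(x,x',P) = \beta(p)$ directly from \eqref{eq:two_codewords} and the observation above. Thus $\rate_{2,P} = \tilde{\rate}_{M,P}$ for every such channel.

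Finally, since this equality of single-letter exponents holds edge-by-edge (each edge's channel is $K$-ary symmetric for its own $K \ge M$ and its own $p$), the maxflow instances defining $|f_{2,G}|$ and $|\tilde{f}_{M,G}|$ assign the same capacity $\beta(p_e)$ to every edge $e$, so $|\tilde{f}_{M,G}| = |f_{2,G}|$. Combined with the general sandwich $|\tilde{f}_{M,G}| \le \rate_{M,G} \le |f_{2,G}|$ from \eqref{eq:rate_bound}, this pins down $\rate_{M,G} = |f_{2,G}| = |\tilde{f}_{M,G}|$ exactly, and the achievability direction is witnessed by the protocol of Section \ref{sec:proof_main} applied to these (pairwise reversible) channels. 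I expect the only mild subtlety to be justifying that $s = 1/2$ is the optimal Chernoff parameter for the $K$-ary symmetric channel — i.e.\ that it is pairwise reversible — which follows from the convexity/symmetry of $s \mapsto \sum_y P(y|x)^{1-s}P(y|x')^s$ under the input swap $x \leftrightarrow x'$, or may simply be quoted from \cite{berlekamp}; everything else is a short direct calculation.
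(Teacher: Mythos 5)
Your proposal is correct and follows essentially the same route as the paper: show $\tilde{\rate}_{M,P_e}=\rate_{2,P_e}$ edge-by-edge (distinct inputs achieve the maximum in \eqref{eq:rev_rate} since $K\ge M$, and pairwise reversibility gives $\rate_{2,P_e}=\max_{x,x'}\db(x,x',P_e)$), so the two maxflow instances have identical edge weights and achievability comes from Theorem \ref{thm:main}. Your explicit expression $-\log\big(2\sqrt{p(1-(K-1)p)}+(K-2)p\big)$ with $K$ rather than $M$ is in fact the correct form of the constant that the paper writes with $M$, but as you note the exact value is immaterial to the argument.
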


\begin{proof}
    Similarly to the earlier results in this subsection, we simply need to show that $\tilde{\rate}_{M, P_e} = \rate_{2,P_e}$ for each $P_e$. Recall the definition of $\tilde{\rate}$:
    \begin{equation}
        \tilde{\rate}_{M, P_e} = \frac{2}{M(M-1)} \max_{(x_1, \ldots, x_M)} \sum_{1\leq m_1 \leq m_2 \leq K} \db(x_{m_1}, x_{m_2}, P_e). \label{eq:e_repeated}
    \end{equation}
    For the $K$-ary symmetric channel, $\db(x,x',P_e)$ ie either 0 (if $x=x'$) or a fixed positive value (if $x \ne x'$).  
    Since $K \ge M$, we can achieve the maximum in \eqref{eq:e_repeated} by letting $x_1, \ldots, x_M$ be distinct, yielding
    \begin{equation}
        \tilde{\rate}_{M, P_e} = -\log(2\sqrt{(1-(M-1)p)\cdot p} + (M-2)p).
    \end{equation}
    The computation for $\rate_{2,P_e}$ is also straightforward; since $P_e$ is pairwise reversible, we can simply choose any two distinct inputs to obtain
    \begin{equation}
        \rate_{2, P_e} = \max_{x_1, x_2} \db(x_1, x_2, P_e) = -\log(2\sqrt{(1-(M-1)p)\cdot p} + (M-2)p).
    \end{equation}
    Since $\tilde{\rate}_{M, P_e} = \rate_{2, P_e}$ for all edges, it follows that $|\tilde{f}_{M,G}| = |f_{2,G}|$.
\end{proof}

\subsection{Counterexample to a Natural Conjecture}

In Theorem \ref{thm:converse_other}, we assumed that $G$ has a mincut with no back-edges. It might be tempting to believe that this is the case whenever $G$ contains no directed cycles. However, this is false, as we can see in Figure \ref{fig:counterexample}, with the mincut described in the figure caption.

\begin{figure}
\centering
    
\begin{tikzpicture}[scale=2]
\node[draw] at (0 ,0) { $3$ };
\node[draw] at (0 ,1) { $1$ };
\node[draw] at (1 ,0) { $4$ };
\node[draw] at (1, 1) {$2$};
\draw[dashed, ->] (0.15,0) to (0.85,0);
\draw[dashed, ->] (0.15,1) to (0.85,1);
\draw[->] (1,0.85) to (1,0.15);
\draw[->] (0,0.85) to (0,0.15);
\draw[->] (0.85,0.85) to (0.15,0.15);
\end{tikzpicture}
    
    \caption{An example where the only mincut (namely, $A=\{1,3\}$ and $B = \{2,4\}$) contains a back-edge. Solid arrows have infinite capacity and dotted arrows have finite capacity.}
    \label{fig:counterexample}
\end{figure}
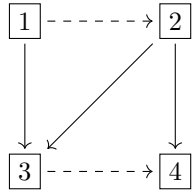

Using the example in Figure \ref{fig:counterexample}, we show that even when every channel is pairwise reversible and the graph is acyclic, it is possible to achieve an error exponent greater than $|f_{M,G}|$ when $M=3$:

\begin{theorem} \label{thm:counter}
    Even if we assume that $G$ is acyclic and every channel is pairwise reversible, there exist scenarios in which $\rate_{M,G} > |f_{M,G}|$ when $M=3$.  Moreover, the same statement holds even when the 1-hop error exponents defining $f_{M,G}$ are replaced by their counterparts with full feedback.
\end{theorem}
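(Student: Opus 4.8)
The plan is to exhibit an explicit instance of the graph in Figure~\ref{fig:counterexample}, equip it with concrete pairwise reversible channels, and then \emph{compute} both $|f_{3,G}|$ (the maxflow with edge weights $\rate_{3,P_e}$, or even $\feedbackrate_{3,P_e}$) and an achievable error exponent, showing that the latter strictly exceeds the former. First I would fix the two finite-capacity (dotted) edges $1\to 2$ and $3\to 4$ to be identical copies of some pairwise reversible channel $P$ — a ternary-input channel is natural since we want $M=3$ — and give the three solid edges ($1\to 3$, $2\to 4$, $3\to 2$) infinite capacity, meaning noiseless channels with arbitrarily large alphabets (or, operationally, free information transfer). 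With this choice, the only mincut separating source $1$ from destination $4$ is $A=\{1,3\}$, $B=\{2,4\}$, which cuts the two dotted edges $1\to2$ and $3\to4$; the back-edge $3\to2$ goes from $B$ to... wait, from $A$ to $B$ — one must be careful here: in the figure $3\in A$, $2\in B$, so $3\to 2$ is a forward edge, while the genuine back-edge issue is that a different natural cut would have a $B\to A$ edge. In any case, for the maxflow side I would simply invoke maxflow-mincut duality: $|f_{3,G}| = \rate_{3,P} + \rate_{3,P} = 2\rate_{3,P}$ via the cut $\{1,3\}\,|\,\{2,4\}$ (and analogously $2\feedbackrate_{3,P}$ in the feedback version), after checking this cut is indeed minimal.

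The heart of the argument is the achievability side: I would design a protocol over this acyclic graph whose error exponent exceeds $2\rate_{3,P}$. The intuition is that node $3$ receives the source message $m$ \emph{noiselessly} over edge $1\to 3$, and node $2$ receives a noisy copy of $m$ over $1\to 2$; node $3$ then both transmits over its noisy edge $3\to 4$ \emph{and} sends side information noiselessly to node $2$ over $3\to 2$. So node $2$ effectively learns $m$ exactly (via $3\to 2$, since node $3$ knew it exactly), and retransmits over $2\to 4$. Thus the destination $4$ gets the output of two \emph{independent} uses of $P$ applied to the true message $m$ — one from node $3$ and one from node $2$ — but crucially node $2$'s input is the \emph{correct} $m$ rather than a noisy estimate, because of the noiseless $3\to 2$ link carrying the exact value. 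The key point allowing the exponent to beat $2\rate_{3,P}$ is that over a block, node $3$ can use an \emph{optimal $M=3$ codebook} on edge $3\to 4$, while node $2$ — knowing $m$ exactly — can use a \emph{different} codebook on $2\to 4$ chosen so that the \emph{combined} two-block code has pairwise Bhattacharyya/Chernoff distances strictly larger than twice the single-channel $\rate_{3}$; this is possible precisely because $\rate_{3,P}$ for a single channel is limited by the averaging over $\binom{3}{2}=3$ pairs, but with two channels one has more freedom to ``spread out'' three codewords in the product space, and Lemma~\ref{lem:parallel}'s equality fails for $M>2$ (as the paper explicitly notes right before Theorem~\ref{thm:counter}).

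Concretely, I would pick $P$ to be a specific small pairwise reversible channel — e.g.\ a ternary channel, or even simpler, take the dotted edges to be \emph{binary} symmetric channels but allow the relays to also route a carefully chosen auxiliary bit — for which one can compute $\rate_{3,P}$ in closed form from \eqref{eq:rev_rate} (it equals $\frac{1}{3}$ of the sum of the three pairwise $\db$ values for the best ternary input tuple) and then exhibit a product-channel code on $P\times P$ achieving, for all $m_1\ne m_2$, $\dc$ strictly exceeding $2\rate_{3,P}$. Feeding this into Lemma~\ref{lem:chernoff} (applied to the destination's received pair) gives $\rate_{3,G} > 2\rate_{3,P} = |f_{3,G}|$. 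For the feedback-strengthened claim, I would observe that in this particular cut the feedback error exponent $\feedbackrate_{3,P}$ of the \emph{single} channel $P$ still obeys $\feedbackrate_{3,P} < $ the achieved product-code exponent (either because feedback doesn't help for the specific $P$ chosen, or because the gap built in is large enough to absorb the feedback improvement), so $2\feedbackrate_{3,P} < \rate_{3,G}$ as well.

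The main obstacle I anticipate is the explicit numerical construction in the achievability direction: one must find a concrete pairwise reversible $P$ and an explicit $3$-codeword code over $P\times P$ for which the product Chernoff divergence provably exceeds $2\rate_{3,P}$ (and $2\feedbackrate_{3,P}$) — this requires either a clever symmetric choice that makes all quantities computable by hand, or a verifiable finite optimization. A secondary subtlety is rigorously confirming that $\{1,3\}\,|\,\{2,4\}$ is the unique mincut (so that $|f_{3,G}|$ really is $2\rate_{3,P}$ and not something smaller coming from another cut), which amounts to checking the infinite-capacity edges force this partition; this is a routine but necessary graph-theoretic check. Everything else — turning the block-level distance bound into an error-exponent statement, handling rounding in block lengths — follows the template already established in Sections~\ref{sec:series}--\ref{sec:proof_main}.
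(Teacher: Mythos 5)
Your construction does not match the graph in Figure \ref{fig:counterexample}, and the mistake is not cosmetic. The diagonal solid edge there is directed $2\to 3$; that is precisely why the unique mincut $\{1,3\}\,|\,\{2,4\}$ contains a back-edge, as the caption says. Your protocol has node $3$ sending side information to node $2$ over ``$3\to 2$,'' an edge that does not exist; and under your reading of the orientation the path $1\to 3\to 2\to 4$ would be entirely noiseless, so both the achievable exponent and every cut (hence $|f_{3,G}|$) would be infinite and no counterexample could arise. With the correct orientation, node $2$ cannot ``learn $m$ exactly'': the useful structure is the reverse of what you describe. Node $2$ forwards its noisy reception of $m$ noiselessly to node $4$ and, via the back-edge, to node $3$; node $3$, which knows $m$ exactly through $1\to 3$, then sends over the noisy edge $3\to 4$ a one-bit ``agree/disagree'' check of node $2$'s reception. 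The destination is confused between two messages only if two independent noise events occur, and this is where the exponent gain comes from.

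The second, more fundamental gap is your achievability mechanism. You keep both cut channels identical copies of a pairwise reversible $P$ and hope to exhibit a $3$-codeword code on $P\times P$ whose pairwise Chernoff distances exceed $2\rate_{3,P}$. This is impossible under the hypotheses of the theorem: for pairwise reversible channels, Lemma \ref{lem:prod_parallel_reversible} gives exact additivity $\rate_{3,P\times P}=2\rate_{3,P}$ (the non-additivity you invoke concerns Lemma \ref{lem:parallel} for general channels, and anyway goes in the direction of an upper bound for $M=2$), and for your ``natural'' ternary symmetric choice Lemma \ref{lem:k_sym} shows the maxflow bound is exactly tight, so no gap exists. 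In other words, as long as both cut-channel inputs are deterministic functions of $m$ you are using a product code, and you cannot beat $2\rate_{3,P}$; the gain must come from letting the input on $3\to 4$ depend on what node $2$ actually received (exploiting the back-edge), so that the end-to-end channel is not of product form. The paper additionally makes the two cut channels \emph{different}: a ternary symmetric channel on $1\to 2$, with $\rate_3=\frac12\log\frac1p+\calO(1)$, and a BSC on $3\to 4$, whose three-message exponent is only $\frac13\log\frac1p+\calO(1)$ because it has just two inputs. The maxflow is then $\frac56\log\frac1p+\calO(1)$, while the check-bit protocol achieves $\log\frac1p+\calO(1)$; this asymmetry (an edge whose $M=3$ exponent is far below what the protocol effectively extracts from it) is essential and absent from your construction. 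The feedback strengthening is then a separate small computation (using Theorem \ref{thm:feedback_m_2} and monotonicity in $M$ for the ternary channel, and Berlekamp's $M=3$ feedback exponent for the BSC), not an assumption that ``feedback doesn't help for the specific $P$ chosen.''
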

\begin{proof}
Consider the graph in Figure \ref{fig:counterexample}. Set $M=3$, and consider the setup described as follows for some $p \in \big(0,\frac{1}{3}\big)$:
\begin{itemize}
\item On the solid edges, infinite noiseless communication is allowed.
\item On the edge $1\rightarrow 2$, we consider a ternary symmetric channel with inputs $\{0,1,2\}$, outputs $\{0,1,2\}$, and with $\p(y|x)=1-2p$ if $y=x$, and $\p(y|x)=p$ otherwise.
\item On the edge $3\rightarrow 4$, we consider a binary symmetric channel with noise parameter $p$.
\end{itemize}
We will prove that as $p$ becomes sufficiently small, it is possible to beat the maxflow rate. For the rest of this section, asymptotic notation refers to the limit as $p\rightarrow 0$.

Let us first compute the error exponents of all the channels. Since all channels are pairwise reversible, we may compute the error exponents using $\rate_{M,P} = \tilde{\rate}_{M,P}$ as per \eqref{eq:rev_rate}. For the ternary symmetric channel, \eqref{eq:rev_rate} is maximized when $x_1, x_2, x_3$ are all distinct, so the error exponent is given by
\begin{equation}
    \rate = -\log (2\sqrt{p(1-2p)} + p) = \frac12 \log\frac1p + \calO(1).
\end{equation}
For the binary symmetric channel, only two of $x_1,x_2,x_3$ can be distinct, so the error exponent is given by
\begin{equation}
    \rate = -\frac23 \log(2\sqrt{p(1-p)}) = \frac13 \log\frac1p + \calO(1).
\end{equation}
Hence, the maxflow is simply the sum of the two error exponents, i.e., $\frac56 \log\frac1p + \calO(1)$. 

We will prove that $\log\frac1p+\calO(1)$ is an achievable error exponent, thus disproving the reasonable conjecture that the maxflow is an upper bound to the error exponent. 
Similar to Section \ref{sec:series}, define a single ``sequential transmission'' as follows:
\begin{itemize}
    \item Node 1 inputs the true message $m$ into its channel connecting to node 2, and also sends $m$ to node 3;
    \item Node 2 then tells both nodes 3 and 4 what it received;
    \item Node 3 then looks at both the ground truth from node 1 and the forwarded value from node 2. If they are the same, message $0$ is sent to node 4, and otherwise, message $1$ is sent to node 4.
\end{itemize}
A total of $n-3$ such transmissions can occur in the $n$ time steps by chaining them one after the other (similar to Figure \ref{fig:block-protocol}).  Moreover, we can view a single sequential transmission as a discrete memoryless channel $Q$, where the input is $m$, and the output consists of what node 4 receives from both nodes 2 and 3.

We can classify the result of each sequential transmission into three possible outcomes:
\begin{itemize}
    \item[(i)] Node 4 receives message 1 from node 3.  This means that either node 2 received the wrong message (and therefore node 3 sent 1), or the message from node 3 to node 4 got flipped. Each event happens with probability $\calO(p)$, so the total probability of this is $\calO(p)$.
    \item[(ii)] Node 4 receives message 0 from node 3, and receives the wrong message from node 2. In this case, both node the messages from 1 to 2 and 3 to 4 must be corrupted, so this happens with probability $\calO(p^2)$
    \item[(iii)] Node 4 receives message 0 from node 3, and receives the correct message from node 2. This occurs with probability $1-\calO(p)$.
\end{itemize}
\end{proof}

We summarize the end-to-end transition probabilities in Figure \ref{fig:transition}:
\begin{figure}[ht]
    \centering
    \begin{tabular}{llllll}
    &   & \multicolumn{4}{l}{\bf output}  \\
    &   & (0,0) & (1,0) & (2,0) & (*,1) \\
    \multirow{3}{*}{\bf input} & 0 & $1-\calO(p)$ & $\calO(p^2)$ & $\calO(p^2)$ & $\calO(p)$  \\
    & 1 &$\calO(p^2)$ & $1-\calO(p)$ & $\calO(p^2)$ & $\calO(p)$  \\
    & 2 & $\calO(p^2)$ & $\calO(p^2)$ & $1-\calO(p)$ & $\calO(p)$ 
    \end{tabular} 
    \medskip
    
    \caption{The input represents the message received by node 0. An output of $(a,b)$ indicates that $a$ was received from node 2 and $b$ was received from node 3. $(*,1)$ indicates that 1 is received from node 3, with arbitrary (ignored) information from node 2.}
    \label{fig:transition}
\end{figure}

Since we are allowed $n-3$ uses of $Q$ in $n$ time steps, we have $\rate_{3,G} \geq \rate_{3,Q}$. We now proceed to bound $\rate_{3,Q}$; we have
\begin{equation}
    \db(0,1,Q) = -\log\sum_{y}\sqrt{Q_0(y)Q_1(y)} = -\log \sqrt{\calO(p^2)} = \log\frac1p + \calO(1)
\end{equation}
and similarly when $(0,1)$ is replaced by any two distinct symbols.  Using Lemma \ref{lem:opt_rev}, this gives
\begin{equation}
    \rate_{3,Q} \ge \log\frac1p + O(1),
\end{equation}
showing that an error exponent of $\log\frac1p+O(1)$ is achievable.

This same counterexample in fact also establishes the second part of Theorem \ref{thm:counter} in which the mincut is defined with respect to edge capacities $\rate^f_{P_e}$ instead of $\rate_{P_e}$.  
To see this, first note that for the ternary symmetric channel, we have $\rate_3 = \rate_2$; this follows from the proof of Lemma \ref{lem:k_sym}, with the ternary symmetric channel being pairwise reversible. On the other hand, $\rate_2 = \rate^f_2$ (Theorem \ref{thm:feedback_m_2}), $\rate^f_2 \geq \rate^f_3$ (error exponent is decreasing in $M$) and $\rate^f_3 \geq \rate_3$ (feedback cannot decrease error exponents). Therefore, the ternary symmetric channel has $\rate_2 = \rate^f_2 \geq \rate^f_3 \geq \rate_3 = \rate_2$; this means that all the inequalities must hold with equality, yielding
\begin{equation}
    \rate^f_3 = \rate_3 = \rate_2 = -\log(2\sqrt{p(1-2p)} + \sqrt{p\cdot p}) = \frac12\log\frac1p+\calO(1).
\end{equation}

For the binary symmetric channel, the feedback error exponent for $M=3$ is given as follows \cite[o.~64]{berlekamp_feedback}:
\begin{equation}
    \rate^f_3 = -\log(p^{1/3}(1-p)^{2/3} + p^{2/3}(1-p)^{1/3}) = \frac13 \log\frac1p + \calO(1),
\end{equation}
which gives a maxflow of $\frac56 \log\frac1p+\calO(1)$.  Once again, since this is strictly below $\log\frac1p + \calO(1)$ for small enough $p$, we obtain the desired result.
\bibliographystyle{IEEEtran}
\bibliography{general}

\begin{thebibliography}{10}
\providecommand{\url}[1]{#1}
\csname url@samestyle\endcsname
\providecommand{\newblock}{\relax}
\providecommand{\bibinfo}[2]{#2}
\providecommand{\BIBentrySTDinterwordspacing}{\spaceskip=0pt\relax}
\providecommand{\BIBentryALTinterwordstretchfactor}{4}
\providecommand{\BIBentryALTinterwordspacing}{\spaceskip=\fontdimen2\font plus
\BIBentryALTinterwordstretchfactor\fontdimen3\font minus
  \fontdimen4\font\relax}
\providecommand{\BIBforeignlanguage}[2]{{%
\expandafter\ifx\csname l@#1\endcsname\relax
\typeout{** WARNING: IEEEtran.bst: No hyphenation pattern has been}%
\typeout{** loaded for the language `#1'. Using the pattern for}%
\typeout{** the default language instead.}%
\else
\language=\csname l@#1\endcsname
\fi
#2}}
\providecommand{\BIBdecl}{\relax}
\BIBdecl

\bibitem{onebit}
W.~Huleihel, Y.~Polyanskiy, and O.~Shayevitz, ``Relaying one bit across a
  tandem of binary-symmetric channels,'' \emph{IEEE International Symposium on
  Information Theory (ISIT)}, 2019.

\bibitem{jog2020teaching}
V.~{Jog} and P.~L. {Loh}, ``Teaching and learning in uncertainty,'' \emph{IEEE
  Transactions on Information Theory}, vol.~67, no.~1, pp. 598--615, 2021.

\bibitem{teachlearn}
Y.~H. Ling and J.~Scarlett, ``Optimal rates of teaching and learning under
  uncertainty,'' \emph{IEEE Transactions on Information Theory}, vol.~61,
  no.~11, pp. 7067--7080, 2021.

\bibitem{multibit}
------, ``Multi-bit relaying over a tandem of channels,'' \emph{IEEE
  Transactions on Information Theory (to appear)}, 2023.

\bibitem{schulman_1994}
S.~Rajagopalan and L.~Schulman, ``A coding theorem for distributed
  computation,'' \emph{ACM Symposium on Theory of Computing}, 1994.

\bibitem{infovelocity}
Y.~H. Ling and J.~Scarlett, ``Simple coding techniques for many-hop relaying,''
  \emph{IEEE Transactions on Information Theory}, vol.~68, no.~11, pp.
  7043--7053, 2022.

\bibitem{domanovitz2022information}
E.~Domanovitz, T.~Philosof, and A.~Khina, ``The information velocity of
  packet-erasure links,'' in \emph{IEEE Conference on Computer Communications},
  2022.

\bibitem{evans2000broadcasting}
W.~Evans, C.~Kenyon, Y.~Peres, and L.~J. Schulman, ``Broadcasting on trees and
  the {I}sing model,'' \emph{Annals of Applied Probability}, vol.~10, no.~2,
  pp. 410--433, 2000.

\bibitem{makur2020random}
A.~Makur, E.~Mossel, and Y.~Polyanskiy, ``Broadcasting on random directed
  acyclic graphs,'' \emph{IEEE Transactions on Information Theory}, vol.~66,
  no.~2, pp. 780--812, 2020.

\bibitem{makur2022grid}
------, ``Broadcasting on two-dimensional regular grids,'' \emph{IEEE
  Transactions on Information Theory}, vol.~68, no.~10, pp. 6297--6334, 2022.

\bibitem{gamalkim}
A.~El~Gamal and Y.-H. Kim, \emph{Network information theory}.\hskip 1em plus
  0.5em minus 0.4em\relax Cambridge University Press, 2011.

\bibitem{cover_thomas}
T.~M. Cover and J.~A. Thomas, \emph{Elements of information theory}.\hskip 1em
  plus 0.5em minus 0.4em\relax John Wiley \& Sons, Inc., 2006.

\bibitem{berlekampI}
C.~Shannon, R.~Gallager, and E.~Berlekamp, ``Lower bounds to error probability
  for coding on discrete memoryless channels. i,'' \emph{Information and
  Control}, vol.~10, no.~1, p. 65–103, 1967.

\bibitem{berlekamp}
------, ``Lower bounds to error probability for coding on discrete memoryless
  channels. ii,'' \emph{Information and Control}, vol.~10, no.~5, p. 522–552,
  1967.

\bibitem{stanford}
L.~Trevisan, ``Stanford {CS261} lecture notes,'' 2011,
  https://theory.stanford.edu/~trevisan/cs261/lecture11.pdf.

\bibitem{berlekamp_feedback}
E.~R. Berlekamp, ``Block coding with noiseless feedback,'' \emph{Ph.D. thesis,
  Massachusetts Institute of Technology, Department of Electrical Engineering},
  1964.

\bibitem{baris_feedback}
B.~Nakiboglu, ``Exponential bounds on error probability with feedback,''
  \emph{Ph.D. thesis, Massachusetts Institute of Technology, Department of
  Electrical Engineering}, 2011.

\end{thebibliography}
\end{document}